\documentclass[12pt,twoside]{report}

\usepackage{amsmath}
\setlength{\textheight}{8.875in} \setlength{\textwidth}{6.875in}
\setlength{\columnsep}{0.3125in} \setlength{\topmargin}{0in}
\setlength{\headheight}{0in} \setlength{\headsep}{0in}
\setlength{\parindent}{1pc} \setlength{\oddsidemargin}{-.304in}
\setlength{\evensidemargin}{-.304in}

\newlength{\boxwidth}
\setlength{\boxwidth}{3.3in}
\addtolength{\boxwidth}{-27 pt}

\newcommand{\qed}{\hfill $\Box$}
\newcommand{\ignore}[1]{}


\RequirePackage{ifthen}

\newcommand{\BdefTemplate}[2][*]{
 \newboolean{#2}
 \newboolean{#2_MyClass_Switch}
 \setboolean{#2_MyClass_Switch}{true}
 \ifthenelse{\equal{unset}{#1}}{\setboolean{#2}{false}}{
  \ifthenelse{\equal{set}{#1}}{\setboolean{#2}{true}}{
   \ifthenelse{\equal{}{#1}}
    {\setboolean{#2_MyClass_Switch}{false}}
    {\ClassWarning{MyClass}{Boolean variable <#2> uninitialized}}}}}

\newcommand{\Bdef}{\BdefTemplate}

\newcommand{\Bset}[1]{\setboolean{#1}{true}}
\newcommand{\Bunset}[1]{\setboolean{#1}{false}}

\Bdef[unset]{DocumentStarted}
\newcommand{\IfDocumentStarted}{\ifthenelse{\boolean{DocumentStarted}}}
\AtBeginDocument{\Bset{DocumentStarted}}

\Bdef[unset]{Rough}
\newcommand{\IfRough}{\ifthenelse{\boolean{Rough}}}

\newcommand{\Bif}[3]{\ifthenelse{\boolean{#1_MyClass_Switch}}
{\ifthenelse{\boolean{#1}}{#2}{#3}}
{\ifthenelse{\not \boolean{Rough}}
  {\ClassError{MyClass}
   {Boolean variable <#1> splitting in non-rough mode}
   {In a non-rough mode, you may not use variable splitting (\Bdef[]...)}}{}
\IfDocumentStarted                   
{\begin{description}
  \item[\framebox{#1 == True :}] #2
  \item[\framebox{#1 == False :}] #3
  \item[\framebox{End of #1}]
 \end{description}}
{#2 #3}
}}

\newcommand{\comment}[1] {}
\newcommand{\eps}{\epsilon}
\newcommand{\cJ}{{\cal J}}
\global\long\def\a{\alpha}
\newcommand{\cJz}{{\cal J}_0}
\usepackage{algorithm}
\usepackage{epsfig}
\usepackage{epstopdf}

\Bdef[unset]{non_Class}

\Bdef[unset]{extd_abstr}

\newcommand{\IfAbstr}{\Bif{extd_abstr}}

\Bdef[unset]{Abib_new_page}

\newcommand{\IfAbNP}{\Bif{Abib_new_page}}

\Bdef[unset]{is_llncs}
\newcommand{\Ifllncs}{\Bif{is_llncs}}

\Bdef[unset]{is_jmlr}
\newcommand{\Ifjmlr}{\Bif{is_jmlr}}

\Bdef[unset]{enum_refs}

\Bdef[unset]{is_slides}

\Bdef[unset]{is_report}

\Bdef[unset]{is_letter}

\Bdef[unset]{use_graphicx}

\Bdef[unset]{in_math_mode}
\newcommand{\IfMathMode}{\Bif{in_math_mode}}
\newcommand{\MathModeOn}{\Bset{in_math_mode}}
\newcommand{\MathModeOff}{\Bunset{in_math_mode}}

\Bdef[unset]{import_mode}


\Bset{Rough}
\Bset{import_mode}


\Ifjmlr{
 \RequirePackage{jmlr2e}  
}{
 \RequirePackage{latexsym}
 \RequirePackage{amssymb}
} 

\newenvironment{MyBlank}[1][]{#1}{}


\usepackage{amsthm}
\usepackage{makeidx}

\Ifllncs{}{
 \Ifjmlr{}{
  \newtheorem{theorem}{Theorem}[chapter]
  \newtheorem{lemma}[theorem]{Lemma}
  \newtheorem{corollary}[theorem]{Corollary}
  
  \newtheorem{proposition}[theorem]{Proposition}
  \newtheorem{property}[theorem]{Property}
  \newtheorem{definition}{Definition}[chapter]

 } 
 \newtheorem{claim}[theorem]{Claim}
 
} 
\newtheorem{fact}[theorem]{Fact}


\newcommand{\nospell}[1]{#1}

\newcommand{\newident}[3][*]{\ifthenelse{\equal{*}{#1}}
 {\newcommand{#2}[1][*]
  {\ifthenelse{\equal{*}{##1}}
   {\nospell{\mbox{\Ensuremath{#3}}}}
   {#3}}}
 {\newcommand{#2}{#3}}}

\newcommand{\newmat}[3][*]{\ifthenelse{\equal{*}{#1}}
 {\newcommand{#2}[1][*]
  {\ifthenelse{
   \( \equal{*}{##1} \and \not \boolean{in_math_mode} \)
   \or \( \not \equal{*}{##1} \and \boolean{in_math_mode} \)
  }
   {\nospell{\mbox{\Ensuremath{#3}}}}
   {#3}}}
 {\newcommand{#2}{#3}}}


\newcommand{\bibentry}[7]{
\Ifjmlr
{\bibitem[#2(#7)]{#1}
 {\textup #3}. \newblock \textrm{#4.} \newblock {\em #5, #6}, #7.}
{\ifthenelse{\boolean{is_llncs} \or \boolean{enum_refs}}
 {\bibitem{#1} {\textup #3}.}
 {\bibitem[\nospell{#1}]{#1} {\textup #3}.}
\newblock \textrm{#4.} \newblock {\em #5, #6}, #7.}}

\newcommand{\bib}[1][]{
\Ifjmlr{
 
}{
 }}

\Ifjmlr
 {}
 {
  \newcommand{\citep}[1]{\cite{#1}}}

\newcommand{\Abib}[1][]{
 \IfAbstr{}{
  \Ifllncs{}{
   \IfAbNP{
    \newpage\clearpage}
   {}}}
 \bib[#1]}

\newcommand{\nop}[1]{}     


















\Bif{is_slides}{

}{}


\newcommand{\Ensuremath}[1]{\IfMathMode
 {\ensuremath{#1}}
 {\MathModeOn\ensuremath{#1}\MathModeOff}}




\setlength{\textheight }{8.75in} \setlength{\columnsep }{2.0pc}
\setlength{\textwidth }{6.4in} \setlength{\topmargin }{0.25in}
\setlength{\headheight }{0.0in} \setlength{\headsep }{0.5in}
\setlength{\oddsidemargin }{0.0in} \setlength{\parindent }{1pc}

\newmat{\D}{{D}}

\reversemarginpar


\newcommand{\cM}{{\cal M}}

\newcommand{\itab}[1]{\hspace{0em}\rlap{#1}}
\newcommand{\tab}[1]{\hspace{.2\textwidth}\rlap{#1}}


\newcommand{\singlespacing}%
{\small\normalsize}


\newident{\Ex}{Ex}
\newident{\EQ}{EQ}
\newident{\R}{R}
\newident{\Q}{Q}
\newident{\W}{W}

\begin{document}

\begin{titlepage}
\begin{center}

\Huge \textbf{Optimization and Reoptimization in Scheduling Problems}
\vfill

\Huge
\textbf{Yael Mordechai}\\
\vfill

\end{center}
\end{titlepage}

\begin{titlepage}
~
\end{titlepage}

\begin{titlepage}
\begin{center}

\Huge \textbf{Optimization and Reoptimization in Scheduling Problems}
\vfill

\large
Research Thesis\\
\vfill

Submitted in Partial Fulfillment of the Requirements
for the Degree of Master of Science in Computer Science\\
\vfill

\Huge
\textbf{Yael Mordechai}\\
\vfill

\large
Submitted to the Senate of the\\
Technion - Israel Institute of Technology\\
\vfill

{\center Tamuz 5775  \hfill   Haifa  \hfill   June 2015}
\thispagestyle{empty}
\end{center}
\end{titlepage}
\thispagestyle{empty} 
\begin{titlepage}
\begin{center}
This research was carried out under the supervision of Prof. Hadas Shachnai,
in the Computer Science Department.
\end{center}
\vfill
\thispagestyle{empty} 
\begin{center}
The generous financial help of the Technion is gratefully
acknowledged.
\end{center}
\end{titlepage}
\thispagestyle{empty} 
\thispagestyle{empty}
\tableofcontents \thispagestyle{empty} 
\addtocontents{toc}{\protect\thispagestyle{empty}}
\thispagestyle{empty}
\newpage
\thispagestyle{empty}
\listoffigures
\thispagestyle{empty}
\listoftables
\thispagestyle{empty}
\newpage
\thispagestyle{empty}
\chapter*{Abstract}
\addcontentsline{toc}{chapter}{Abstract}
\setcounter{page}{1}
Parallel machine scheduling has been extensively studied in the past decades, with
applications ranging from production planning to job processing in large computing 
clusters. In this work we study some of these fundamental optimization problems, as well as their
parameterized and reoptimization variants.

We first present improved bounds for job scheduling on unrelated parallel machines, 
with the objective of minimizing the latest completion time (or, makespan) of the schedule. 
We consider the subclass of {\em fully-feasible} instances, in which the processing time of each 
job, on any machine, does not exceed the minimum makespan. The problem is known to be hard 
to approximate within factor 4/3 already in this subclass. Although fully-feasible instances 
are hard to identify, we give a polynomial time algorithm that yields for such instances a
schedule whose makespan is better than twice the optimal, the best known ratio for general 
instances. Moreover, we show that our result is robust under small violations of feasibility 
constraints.

We further study the power of parameterization. In a parameterized optimization problem, each input comes with a fixed parameter. Some problems can be solved by algorithms (or approximation algorithms) that are exponential only in the size of the parameter, while polynomial in the input size. The problem is then called {\em fixed parameter tractable (FPT)}, since it can be solved efficiently (by an FPT algorithm or approximation algorithm) for constant parameter values.
We show that makespan minimization on unrelated machines admits a {\em parameterized approximation scheme}, where the parameter 
used is the number of processing times that are large relative to the latest completion 
time of the schedule. We also present an FPT algorithm for the graph-balancing problem, which corresponds to the instances of the {\em restricted assignment} problem where each job can be processed on at most 2 machines.

Finally, motivated by practical scenarios, we initiate the study of 
{\em reoptimization} in job scheduling on identical and uniform machines,
with the objective of minimizing the makespan. We develop {\em
reapproximation} algorithms that yield in both models the best possible 
approximation ratio of $(1+\eps)$, for any $\eps >0$, with respect to the
minimum makespan.

\newpage

\newpage
\comment{
\chapter*{List of Symbols and Abbreviations}

\begin{itemize}
\item \itab{$n$}             \tab{}\tab{number of jobs}
\item \itab{$m$}             \tab{}\tab{number of machines}
\item \itab{$G$}             \tab{}\tab{a graph}
\item \itab{$V$}             \tab{}\tab{the set of the graph vertices}
\item \itab{$E$}             \tab{}\tab{the set of the graph edges}
\item \itab{$[n]$}           \tab{}\tab{$\{1,2,\cdots,n\}$}
\end{itemize}
}

\newpage

\chapter{Introduction}

\section{Scheduling on Parallel Machines }
Consider the following fundamental problem in scheduling theory. We are given a set $\cal J$ of $n$ independent jobs that must be scheduled without preemption on a collection $\cal M$ of $m$ parallel machines. If job $j$ is scheduled on machine $i$, the processing time required is $p_{ij}$,
which is a positive integer, for every $i \in \cal M$ and $j\in \cal J$. The total time used by machine $i\in \cal M$, or the {\em load} on machine $i$, is the sum of the processing times for the jobs assigned to $i$, and the {\em makespan} of an assignment is the maximum load over all the machines. The objective is then to find a schedule, which assigns each job to exactly one machine, such that the makespan is minimized.

\subsection{Scheduling Models}
The wide literature on scheduling often distinguishes between  the following scheduling models.
\paragraph*{Identical Machines.} Job processing times are identical across the machines, i.e., $p_{ij}=p_j$ for all $j\in \cal J$ and $i\in \cal M$.
\paragraph*{Uniform Machines.} Each machine $i$ has a speed $s_i$. The length of job $j$ on machine $i$ is some uniform processing time $p_{j}$ scaled by the speed $s_i$, i.e., $p_{ij}=\frac{p_j}{s_i}$ for all $j\in J$ and $i\in \cal M$.
\paragraph*{Unrelated Machines:} Each job $j$ may have an arbitrary processing time $p_{ij} \geq 0$ on
machine $i$, for $j\in \cal J$ and $i\in \cal M$.
\\
\\
While makespan minimization is known to be NP-hard in all models (even for $m=2$) \cite{LK79},
the first two models are considered somewhat easier, since the problem can be approximated efficiently in both up to some $\epsilon$ factor, for any $\epsilon>0$ (see Section \ref{Chapter1 related work}). In contrast, in the unrelated machines model, the problem becomes hard to approximate within a factor better than $\frac{3}{2}$. Moreover, since 1990, when the state of the art $2$-approximation algorithm was presented by Lenstra, Shmoys and Tardos \cite{LST90}, there was no significant improvement on either the upper or lower bound, although the problem was consistently investigated. This led researchers to consider special cases and improving the bound of $2$, either by a constant factor, or by some function of the input parameters (see review in Section \ref{Chapter1 related work}).

In this work, we consider the subclass of {\em fully-feasible} instances. We say that an instance is fully-feasible if job processing times, $p_{ij}$, do not exceed the length of the optimal schedule for the instance, for every job $j\in \cal J$ and machines $i\in \cal M$. Observe that an optimal schedule never assigns a job to machine on which its length is greater than the makespan of an optimal schedule; thus, from an optimal scheduler's viewpoint, if $p_{ij}$ exceeds the optimal makespan then $p_{ij}$ is considered to be $\infty$.

We also consider instances that are almost fully-feasible, that is, for any job $j$, the number of machines on which job $j$ is not feasible (i.e., has processing time larger than the length of the optimal makespan) is relatively small. 

When considering real-life applications, the general model of unrelated machines, which makes no assumptions on job processing times, seems too broad. Indeed, such applications usually deal with fully-feasible (or almost fully-feasible) workloads, as they commonly handle relatively large sets of jobs. 

Let $T_{opt}$ and $L_{opt}$ denote the optimal makespan and the minimal average machine load over optimal assignments, respectively. For heterogeneous workloads of a huge number of jobs, in which the makespan is counted in months or even years, the processing time of a given job is negligible compared to the makespan; for such workloads, we have $p_{ij} \ll T_{opt}$. In this case, an algorithm of \cite{ST93}, yields a schedule of makespan at most $T_{opt}+p_{max}$, where $p_{max}=max_{ij}p_{ij}$, is more suitable. However, for smaller sets of jobs, $p_{ij}$ can be large relative to $T_{opt}$, such that $L_{opt}<p_{max}$, and for these instances our algorithm is the state of the art.
Relevant applications for such workloads are e.g., job packing in warehouse-scale \cite{VKW14}, large-scale clustering \cite{VP+15} and applications in parallel design patterns such as Fork-Join and MapReduce (see, e.g., \cite{DMN12,LL+12}).

\subsection{Related Work}
\label{Chapter1 related work}

\comment{\paragraph*{Identical and Uniform Machines} The problem of makespan minimization on identical or uniform machines is known to be NP-hard in the strong sense already in the identical machines model (as it contains bin packing and 3-Partition as special cases) \cite{GJ79}. Therefore, we cannot hope for an FPTAS. For identical machines, Hochbaum \cite{H96} and Alon et al. \cite{AA+98} gave an EPTAS with running time $f(\frac{1}{\epsilon})+O(n)$, where $f$ is doubly exponential in $\frac{1}{\epsilon}$, and Jansen \cite{J10} gave an EPTAS with running time $2^{O(1/ \epsilon^2 log(1/\epsilon)^3)} + poly(|{\cal I}|)$.
}
\paragraph*{Identical and Uniform Machines} The problem of makespan minimization on identical or uniform machines is known to be NP-hard \cite{GJ79}. A {\em polynomial-time approximation scheme (PTAS)} is a family of algorithms $\{A_\epsilon:\epsilon >0\}$, where $A_\epsilon$ is a $(1+\epsilon)$-approximation algorithm that runs in time polynomial in the input size but is allowed to be exponential in $\frac{1}{\epsilon}$. An {\em efficient polynomial-time approximation scheme (EPTAS)} is a PTAS with running time $f(\frac{1}{\epsilon})poly(|{\cal I}|)$, where $|{\cal I}|$ is the input size, (for some function $f$), while a {\em fully polynomial-time approximation scheme (FPTAS)} runs in time $poly(\frac{1}{\epsilon}, |{\cal I}|)$. Since the scheduling problem is NP-hard in the strong sense already on identical machines (as it contains bin packing and 3-Partition as special cases) \cite{GJ79}, we cannot hope for an FPTAS. For identical machines, Hochbaum \cite{H96} and Alon et al. \cite{AA+98} gave an EPTAS with running
time $f(\frac{1}{\epsilon})+O(n)$, where $f$ is doubly exponential in $\frac{1}{\epsilon}$, and for uniform machines, Jansen \cite{J10} gave an EPTAS with running time $2^{O(1/ \epsilon^2 log(1/\epsilon)^3)} + poly(|{\cal I}|)$.

\paragraph{Unrelated Machines} A classic result in scheduling theory is the Lenstra-Shmoys-Tardos $2$-approximation algorithm for makespan minimization \cite{LST90}. They also proved that the problem is NP-hard to approximate within a factor better than $\frac {3}{2}$.
Gairing et al. \cite{GMW07} presented a more efficient, combinatorial $2$-approximation algorithm based on flow techniques. 
Shchepin and Vakhania \cite{SV05} showed that the rounding technique used in \cite{LST90} can be modified to derive an improved ratio of $2-\frac{1}{m}$. Shmoys and Tardos \cite{ST93} showed an approximation algorithm that yields a schedule of makespan at most the length of an optimal schedule plus the largest processing time of any job in the instance. 

Although makespan minimization on unrelated machines is a major open problem in scheduling theory, and is extensively studied, there was no significant progress on either the upper or lower bound for over two decades, since the publication of [LST90]. This led researchers to consider interesting special cases and improving the upper bound for them. A well known special case is the {\em restricted assignment problem}, where jobs have processing times $p_{ij}\in \{p_j,\infty\}$. Svensson \cite{S12} gave a polynomial-time algorithm that approximates the optimal makespan of
the restricted assignment problem within a factor of $\frac {33}{17}+\epsilon \approx 1.94+\epsilon $ for $\epsilon >0$, and also presented a local search algorithm that will eventually find a schedule of the mentioned approximation guarantee, but is not known to converge in polynomial-time.
Gairing et al. \cite{GL+04} presented a combinatorial $(2-\frac{1}{p_{max}})$-approximation algorithm for the restricted assignment problem. 

Ebenlendr et al. considered in \cite{EKS08} the {\em graph balancing} problem, a special case of the restricted assignment problem where each job $j$ has a finite processing time, $p_{ij} < \infty$, on at most two machines. The paper gives an elaborate $1.75$-approximation algorithm for the
problem. The authors also show that the problem is hard to approximate within a factor less than $\frac{3}{2}$ even on bounded degree graphs, i.e., when the maximum degree is some constant.  
In the {\em unrelated graph balancing} problem, introduced by Versache
and Weiss \cite{VW14}, each job can be assigned to at most two machines, but processing times are not restricted. They showed that this subclass of instances constitutes the core difficulty for the {\em linear programming} formulation of makespan minimization on unrelated machines, often used as a first step in obtaining approximate solutions. Specifically, they showed that the strongest known LP-formulation, namely, the configuration-LP, has an integrality gap of $2$. 

Vakhania et al. \cite{VHW14} considered makespan minimization on unrelated machines for the subclass of instances where job lengths can take only two values, $p$ and $q$, which are fixed
positive integers, such that $p<q$. They presented a polynomial-time algorithm that uses linear programming with absolute approximation factor of $q$ (i.e., all schedules have makespan at most $OPT+q$). Page \cite{P14} considered restricted assignment instances with processing times in a fixed interval, $[p,q]$, and gave a $\frac {q}{p}$-approximation algorithm, and a $\frac{3}{2}$-approximation algorithm for the case where $p_{ij} \in \{1,2,3\}$. Chakrabarty et al. \cite{CK+15} considered instances with two types of jobs: {\em long} and {\em short}, namely, $p_{ij}\in \{1,\epsilon\}$ for some $\epsilon>0$. They obtained a $(2-\delta)$-approximation algorithm for such instances. 

Shmoys and Tardos \cite{ST93} considered the {\em generalized assignment problem (GAP)}, where each job $j$ incurs a cost of $c_{ij}>0$ when assigned on machine $i$, and the objective is to minimize the makespan and the total cost.
The paper \cite{ST93} presents a polynomial-time algorithm that finds a schedule of makespan at most twice the optimum with optimal cost.

A summary of the known results for unrelated machines is given in Table \ref{table_unrelated}.

\begin{table}[H]
\centering 
\begin{tabular}{|c|c|c|} 
\hline
Result & Authors & Restrictions on the Unrelated Model \\ [0.5ex] 
\hline\hline
$2$-approximation &\cite{LST90} &  \\ 
\hline
Hard for factor $< \frac{3}{2}$ & \cite{LST90}  &   \\
\hline
A bound of $T_{opt}+p_{max}$ & \cite{ST93}  &   \\
\hline
$(2-\frac{1}{m})$-approximation & \cite{SV05} & \\
\hline
Integrality gap $\leq 1.95$  & \cite{S12} & $p_{ij}\in \{p_j,\infty\}$  \\
\hline
1.75-approximation & \cite{EKS08}  & $p_{ij}\in \{p_j,\infty\}$, $p_{ij}<\infty$ on $<2$ machines\\  
\hline
$(2-\delta)$-approximation, $\delta>0$ & \cite{CK+15} & $p_{ij}\in \{1,\epsilon\}$ for some $\epsilon>0$\\
\hline
A bound of $T_{opt}+q$ & \cite{VHW14} & $p_{ij}\in \{p,q\}$ for some $p<q$\\
\hline
$\frac{q}{p}$ \& $\frac{3}{2}$-approximation & \cite{P14} & $p_{ij}\in [p,q]$ \& $p_{ij}\in \{1,2,3\}$ \\ 
\hline
\hline
A bound of $T_{opt}+L_{opt}$ & This Work & $p_{ij}\leq T_{opt}$  \\[1ex] 
\hline
A bound of $T_{opt}+\frac {L_{opt}}{\varphi}$ & This Work & $\varphi\geq \frac {L}{T}$, for minimal feasible values of $T$ \& $L$ \\[1ex]
\hline
A bound of $p_{max}+\frac {L_{opt}}{\varphi}$ & This Work & $p_{ij}\in \{p_j,\infty\}$\comment{, $\varphi\geq \frac {\sum_{j\in {\cal J}}p_j}{m\cdot p_{max}}$}\\[1ex]
\hline
\hline
\end{tabular}
\caption{Known results for makespan minimization on unrelated machines.} 
\label{table_unrelated}
\end{table}

\section{Fixed Parameter Algorithms}
Parameterized complexity is a branch of computational complexity theory that focuses on classifying computational problems according to their inherent difficulty with respect to multiple parameters of the input. The complexity of a problem is then measured as a function in those parameters. This allows the classification of NP-hard problems on a finer scale than in the classical setting, where the complexity of a problem is only measured by the number of bits in the input (see, e.g., \cite{DF12}).

Under the assumption that $P \neq NP$, there exist many natural problems that require super-polynomial running time when complexity is measured in terms of the input size only, but that are computable in a time that is polynomial in the input size and exponential (or worse) in a parameter $k$. Hence, if $k$ is fixed at a small value, and the growth of the function over $k$ is relatively small, then such problems can still be considered \textquotedblleft tractable" despite their traditional classification as \textquotedblleft intractable".
\comment{In addition, there exist many natural problems that are hard to approximate within a factor less than some $r>0$ when the complexity of the approximation algorithm is measured in terms of the input size only, but can be approximated within a factor less than $r$ in a time that is polynomial in the input size and exponential or worse in a parameter $k$.}

Some problems can be solved exactly, or approximately, by algorithms that are exponential only in the size of a fixed parameter while polynomial in the size of the input. Such an (approximation) algorithm is called a {\em fixed-parameter tractable (FPT)} (approximation) algorithm, because the problem can be (approximately) solved efficiently for small values of the fixed parameter.

Problems in which some parameter $k$ is fixed are called parameterized problems. A parameterized problem that allows for such an FPT algorithm is said to be a {\em fixed-parameter tractable} problem and belongs to the class FPT.
\comment{
We give some definitions formalizing this concept.
\begin{definition}
A problem is said to be FPT if it can be solved by an algorithm $\cal A $ that runs in time $f(k)\cdot poly(|{\cal I}|)$, for every instance $\cal I$ with parameter $k$, and where $f$ is a function independent of $|{\cal I}|$. The algorithm $\cal A$ is called an FPT algorithm.
\end{definition}

We can similarly define FPT-approximation algorithms.

\begin{definition}
A problem is said to have an $r$-FPT approximation algorithm, if there exists an $r$-approximation algorithm $\cal A$ to the problem, that runs in time $f(k)\cdot poly(|{\cal I}|)$, for every instance $\cal I$ with parameter $k$, and where $f$ is a function independent of $|{\cal I}|$. $f(k)\cdot poly(|{\cal I}|)$, for every instance $\cal I$ with parameter $k$, and where $f$ is a function independent of $|{\cal I}|$. The algorithm $\cal A$ is called an $r$-FPT approximation algorithm.
\end{definition}

\begin{definition}
A family $\{{\cal A}_\epsilon:\epsilon>0\}$, where ${\cal A}\epsilon$ is a $(1+\epsilon)$-FPT approximation algorithm for all $\epsilon>0$, is called a parametrized approximation scheme. 
\end{definition}
}
\subsection{Parametrized Scheduling Problems}
Despite the fundamental nature of scheduling problems, and the clear advantages of fixed-parameter
algorithms, no such algorithms are known for many of the classical scheduling problems. 
One obstacle towards obtaining positive results appears to be that, in contrast to most
problems known to be fixed-parameter tractable, scheduling problems involve many numerical input data
(e.g., job processing times, release dates, job costs), which alone causes many problems to be NP-hard, thus ruling out fixed-parameter algorithms. 

\subsection{Related Work}
While minimizing the makespan on identical and uniform machines admits an EPTAS, see e.g., \cite{AA+98,J10}, the running times of these approximation schemes usually have a bad dependence on $\epsilon$. In addition, these problems are strongly NP-hard \cite{GJ79}, therefore we cannot hope to obtain an FPTAS. These considerations call for finding which scheduling
problems are fixed-parameter tractable (FPT). This amounts to identifying instance-dependent parameters $k$ that allow for algorithms that find optimal solutions in time $f(k)\cdot poly(|{\cal I}|)$, for instances $\cal I$ and some function $f$ depending only on $k$.

As for scheduling on unrelated machines, an FPTAS is known \cite{HS76}, but only when assuming a fixed number of machines. Note that, if the number of machines or the number of processing times are constant, the problem is still NP-hard \cite{LST90}, and thus no FPT-algorithms can exist for these choices of parameters. This motivates us to identify instance-dependent parameters $k$, while assuming an arbitrary number of machines, that allow for better FPT approximation algorithms.   

\comment{Although both fixed-parameter tractability and scheduling are very
well studied areas, no such algorithms are known for many fundamental
scheduling problems. }

Marx \cite{M11} proposed the research direction of scheduling
with rejection, where the parameter $k$ is given with the input for
the scheduling problem, and the solution has to schedule all but $k$
jobs. This direction was explored recently by Mnich and Wiese \cite{MW13}, who presented for the first time a fixed-parameter algorithms for classical scheduling problems such as
makespan minimization, scheduling with job-dependent cost functions and scheduling with rejection.
For the problem of makespan minimization on identical machines, the paper \cite{MW13} presents an FPT algorithm, where the parameter $k$ defines an upper bound on the number of distinct processing times appearing in an instance. For the more general model of unrelated machines, the paper gives an FPT algorithm, using the number of machines and the number of distinct processing times as parameters. 
\newpage
Table \ref{table_FPT} summarizes the known results.

\begin{table}[H]
\centering 
\begin{tabular}{|c|c|c|c|} 
\hline
Result & Parameters & Authors & Model \\ [0.5ex] 
\hline\hline
FPT algorithm & $max_{j}p_j$ & \cite{MW13} & identical machines \\ 
\hline
FPT algorithm & $m$, \#distinct $p_{ij}$ & \cite{MW13}  & unrelated machines  \\
\hline
\hline
FPT approximation scheme & $|\{(i,j):p_{ij}>\epsilon T \}|$, feasible $T$ & This Work & unrelated machines \\[1ex]
\hline
FPT algorithm & treewidth, degree & This Work & graph balancing  \\ [1ex]
\hline
\hline 
\end{tabular}
\caption{Known FPT algorithms for scheduling.} 
\label{table_FPT}
\end{table}

We are not aware of any other work in this area.

\section{Combinatorial Reoptimization}
{\em Reoptimization} problems naturally arise in many real-life scenarios. Indeed,
planned or unanticipated changes occur over time in almost any system. It is then required
to respond to these changes quickly and efficiently. Ideally, the response should maintain high
performance while affecting only a small portion of the system.
Thus, throughout the continuous operation of such a system, it is required to compute
solutions for new problem instances, derived from previous instances. Since the transition
from one solution to another incurs some cost, a natural goal is to have the solution for
the new instance close to the original one (under certain distance measure).

We use the reoptimization model developed by Shachnai et al. \cite{STT12}. In this model, we say that $\cal A$ is an $(r, \rho)$-reapproximation algorithm if it
achieves a $\rho$-approximation for the optimization problem, while incurring a transition cost
that is at most $r$ times the minimum cost required for solving the problem optimally.

\comment{The paper \cite{STT12}  presents reapproximation algorithms for several important
classes of optimization problems. This includes a {\em fully polynomial time reapproximation schemes (FPTRS)} for DP-benevolent problems, reapproximation algorithms for
metric Facility Location problems, and $(1, 1)$-reoptimization algorithm for polynomially
solvable subset-selection problems.

Konstanty et al. \cite{JLR15} considered a variant of a recoloring problem, called the $r$-Color-Fixing. They investigated the problem of finding a proper $r$-coloring of a graph, which is \textquotedblleft most similar" to some given initial solution, i.e. the number of vertices that have to be recolored is minimum possible. They provide a $(1, 1)$-reoptimization algorithm for the problem. More work on reoptimization can be found e.g., in \cite{J15,AE+14}.}

\subsection{Reoptimization in Scheduling}
We consider instances of scheduling
problems that can change dynamically over time. Our goal is to compute
assignments within some guaranteed approximation for the new problem
instances, derived from the previous instances. Since the transition
from one assignment to another incurs some cost (for example, the
cost of pausing the execution of a job on one machine and resuming its execution on another),
an additional goal is to have the solution for the new instance {\em close} to
the original one (under a certain distance measure).
\comment{
Given an scheduling problem $\Pi$, we denote by $R(\Pi)$ the reoptimization version of $\Pi$.

\begin{definition}
An algorithm ${\cal A}$ yields an $(r,\rho)$-reapproximation for $R(\Pi)$, for
$r, \rho \geq1$, if for any instance $\cal I$ of $\Pi$, ${\cal A}$ outputs an assignment of objective value at most $\rho$ times the optimal for $\cal I$, and of total
transition cost at most $r$ times the minimal transition cost to an optimal assignment for $\cal I$.
\end{definition}
}
\subsection{Related Work}
\paragraph{Reoptimization} Shachnai et al. \cite{STT12}  presented reapproximation algorithms for several non-trivial classes of optimization problems. This includes a {\em fully polynomial time reapproximation schemes (FPTRS)} for DP-benevolent problems, reapproximation algorithms for
metric Facility Location problems, and $(1, 1)$-reoptimization algorithm for polynomially
solvable subset-selection problems.

Junosza-Szaniawski et al. \cite{JLR15} considered a variant of a recoloring problem, called the $r$-Color-Fixing. They investigated the problem of finding a proper $r$-coloring of a graph, which is \textquotedblleft most similar" to some given initial solution, i.e. the number of vertices that have to be recolored is minimum possible. They provide a $(1, 1)$-reoptimization algorithm for the problem. More work on reoptimization can be found e.g., in \cite{J15,AE+14}.

\paragraph{Reoptimization in Scheduling}
Baram and Tamir \cite{BT14} considered the problem of scheduling on identical machines with the objective of minimizing the total flow time, i.e., minimizing $\sum_{j\in {\cal J}} C_j$, where $C_j$ is the completion time of job $j$. \comment{ They study two different problems: (i) achieving an optimal solution
using the minimal possible transition cost, and (ii) achieving the best
possible schedule using a given limited budget for the transition.} They presented an algorithm that yields an optimal solution using the minimal possible transition cost, and an algorithm that outputs the best possible schedule, using a given limited budget for the transition, for several classes
of instances.

Bender et al. \cite{BF+13} focused on a scheduling problem where each job is unit-sized and has a time window in which it can be executed. Jobs are dynamically added and removed from
the system. They presented an algorithm that reschedules only $O(min\{log^*n , log^* \Delta \})$ jobs for each job that is inserted or deleted from the system, where $n$ is the number of active
jobs, and $\Delta$ is the size of the largest window.

\section{Main Results}

Our first contribution is in the study of makespan minimization on
unrelated machines, which leads to approximation algorithms with performance
guarantees better that $2$, the best known bound for general instances.
In particular, for the subclass of fully-feasible instances, we present (in
Chapter 2) an LP-based algorithm that achieves makespan at most $T_{opt}+L_{opt}$, where $T_{opt}$ and $L_{opt}$ are the minimal makespan and minimal average machine load for this makespan, respectively. This result is better than twice the minimal makespan for instances naturally arising in real-life applications. It also improves the minimal makespan plus the maximum processing time of a job, for instances where the average machine load is smaller than the maximum processing time of any job. We show that our algorithm is {\em robust} in the sense that it achieves an improved makespan also for instances that are {\em almost} fully-feasible. In such instances, each job may exceed the minimal makespan on a {\em small} number of machines. Formally, we define the {\em feasibility parameter} of a general instance $\cal I$, denoted $\varphi(\cal I)$, as the minimal fraction of machines on which a job has a processing time at most $T_{opt}$, i.e., $\varphi({\cal I})=\frac{min_{j\in {\cal J}} |\{i\in {\cal M}:p_{ij}\leq T_{opt}\}| }{m}$. We present an algorithm that yields, for instances with large {\em enough} feasibility parameter, a schedule of makespan at most $T_{opt}+\frac {L_{opt}}{\varphi(\cal I)}$.
For instances of the restricted assignment problem, i.e., for instances with processing times $p_{ij}\in \{p_j,\infty\}$, we show that a bound of $p_{max}+\frac {L_{opt}}{\varphi(\cal I)}$ is obtained by an efficient and simple combinatorial algorithm, where $p_{max}$ is the largest processing time of any job in the instance.

We further study the power of parameterization and present an FPT approximation scheme, i.e., a $(1+\epsilon)$-FPT approximation algorithm, for makespan minimization on unrelated machines parametrized by the number of machine-job pairs, $(i,j)\in {\cal M}\times \cal J$, such that $p_{ij}> \epsilon T$, for some makespan candidate $T$. We also show that the graph-balancing problem, parameterized by treewidth and the maximum degree of the graph, is in FPT. These results are presented in Chapter 3.

Our third contribution is reapproximation algorithms for the reoptimization variants of makespan minimization on identical and uniform machines, which are studied here for the first time. Specifically, we develop $(1,1+\epsilon)$-reapproximation algorithms, namely, algorithms that achieve a
ratio of $(1+\epsilon)$ to the minimum makespan, and the minimum transition
cost, in both the identical machines and the uniform machines models, where 
transition costs can take values in $\{0,1\}$. For the uniform case, we assume that the ratio between the highest and the lowest machine speeds is bounded by some constant. Thus, our algorithms achieve the best possible ratio with respect to the makespan objective in these models. For the unrelated machines model, we note that an algorithm of Shmoys and Tardos [ST93] can be used to obtain a
$(1,2)$-reapproximation, thus matching the best known bound for makespan
minimization also in this model.
We summarize the results for reoptimization in scheduling in Table \ref{table_reopt}. The
results are given in Chapter 4.

\begin{table}[H]
\centering 
\begin{tabular}{|c|c|c|c|} 
\hline
Result & costs & Authors & Model \\ [0.5ex] 
\hline\hline
$(1,2)$-reapproximation algorithm & arbitrary & \cite{ST93} &  unrelated machines \\ 
\hline
\hline
$(1,1+\epsilon)$-reapproximation algorithm & \{0,1\} & This Work  & identical machines  \\[1ex]
\hline
$(1,1+\epsilon)$-reapproximation algorithm & \{0,1\} & This Work  & uniform machines with $\frac{s_1}{s_m}\leq b$  \\[1ex]
\hline
\hline
\end{tabular}
\caption{Our contribution for reoptimization in scheduling.} 
\label{table_reopt}
\end{table}

\chapter{Makespan Minimization for Fully-Feasible Instances}

\section{Preliminaries}
\label{sec:prel}

Consider a scheduling instance $\cal I=(\cal J,\cal M)$, consisting of a set of $m$ machines $\cal M$ and a set of $n$ jobs $\cal J$ with non-negative integers $p_{ij}$ denoting the processing time of job $j\in \cal J$ on machine $i\in \cal M$. An \emph{assignment} of the jobs to the machines is a bijection
$\sigma:{\cJ} \rightarrow {\cM}$ where $\sigma(j)=i$ if and only if job $j$ is
assigned to machine $i$. For any assignment $\sigma$, the \emph{load} on machine $i$ under assignment $\sigma$, denoted as $load_{\sigma}(i)$, is the sum of processing times for the jobs that were assigned to machine $i$. Thus, $load_{\sigma}(i)=\sum_{j\in {\cJ}:\alpha(j)=i}p_{ij}$.
The \emph{makespan} of an assignment $\sigma$, denoted by $T(\sigma)$, is the maximum load over all the machines. Thus, $T(\sigma)=max_{i\in \cal M}load_\sigma (i)$.
The \emph{average machine load} of an assignment $\sigma$, denoted by $L(\sigma)$, is given by $L(\sigma)=\frac{\sum_{i\in {\cM}}load_\sigma (i)}{m}$.
Given an instance $\cal I=(\cal J,\cal M)$, we denote by $T_{opt}$ the optimal makespan, i.e., $T_{opt}=min_{\sigma:\cal M \rightarrow \cal J}T(\sigma)$, and we denote by $L_{opt}$ the minimum average machine load for an optimal assignment, i.e., $L_{opt}=min_{\sigma^{*}:{\cal M} \rightarrow {\cal J}, T(\sigma^{*})=T_{opt}}L(\sigma^{*})$.

Given an instance $\cal I=(\cal J,\cal M)$, we say that a job $j$ is {\em feasible} on machine $i$, if and only if $p_{ij}\leq T_{opt}$. The \emph{feasibility parameter} of $\cal I$, denoted $\varphi(\cal I)$, is the minimal fraction of feasible machines for any given job, i.e., $\varphi({\cal I})=min_{j\in {\cJ}} \frac{|\{i\in {\cal M}: j \mbox{ is feasible on } i\}|}{m}$. Thus, every job $j\in \cal J$ is feasible on at least $\varphi({\cal I})\cdot m$ machines. In this terms, an instance $\cal I$ is {\em fully-feasible} if and only if $\varphi({\cal I})=1$. We often omit $\cal I$ in the notation if it is clear
from the context, and refer to the feasibility parameter as $\varphi$.

Given an instance $\cal I=(\cal J,\cal M)$, an assignment $\sigma:{\cal J} \rightarrow {\cal M}$, two positive integers $L$ and $T$ and some real number $\gamma\geq1$, we denote by $Bad(\sigma,\gamma)\subseteq {\cal M}$ the subset of machines $i$ with $load_\sigma(i)>T+\gamma\cdot L$, by
$Good(\sigma,\gamma)\subseteq \cal M$ the subset of machines $i$ with $load_\sigma(i)\leq \gamma\cdot L$, and for all $j\in \cal J$, by $Good_{j}(\sigma,\gamma)\subseteq Good(\sigma,\gamma)$
the set of machines from $Good(\sigma,\gamma)$ that are also legal for job $j$. 
For every $i\in \cal M$ we denote by $j_{max}^{i}(\sigma)=argmax\left\{ p_{ij}:\,\sigma(j)=i\right\}$ the job with the longest processing time, assigned, by $\sigma$, on machine $i$.

\comment{We denote by $G_{\sigma\gamma}$ the bipartite
graph $\left(\left(Bad(\sigma,\gamma),Good(\sigma,\gamma)\right),E(\sigma)\right)$
where $E(\mathbf{P},\alpha)$ consists of all edges $(i,i^{'})$ where
$i^{'}$ is a legal machine for $j_{max}^{i}(\mathbf{P},\alpha)$.
We say that machine $i$ is\emph{ good }if $i\in Good(\mathbf{P},\alpha,\gamma)$.
We say that machine $i$ is\emph{ bad }if $i\in Bad(\mathbf{P},\alpha,\gamma)$.
Machine $i$ is\emph{ good for job $j$ }if $i\in Good_{j}(\mathbf{P},\alpha,\gamma)$.
}

\section{Approximation Algorithm for Fully-Feasible Instances}
\label{sec:general}

Given positive integers $L$ and $T$, let $x_{ij}$ be an indicator to the assignment of job $j$ on machine $i$.
Consider the following linear program.

\begin{center}
\begin{minipage}{0.95\textwidth}
\begin{eqnarray*}
LP(T,L): & \displaystyle{\frac{1}{m} \sum_{i=1}^{m}\sum_{j=1}^{n}p_{ij}x_{ij}\leq L} \hbox{~~~~~~~~~~~~~~~~~~~~~~~~~~~~~~~~ }\\
& \displaystyle{\sum_{i=1}^{m}x_{ij}=1}, \hbox{~~~~~~~~~~ for \ensuremath{j=1,...,n} ~~~~~~~~ }  \\
& \displaystyle{\sum_{j=1}^{n}p_{ij}x_{ij}\leq T}, \hbox{~~~~~ for \ensuremath{i=1,...,m} ~~~~~~~~ } \\
& ~~~~~x_{ij}\geq0, \hbox{~~~~~~~~~~~~~~~~ for  \ensuremath{i=1,...m\,,j=1,...,n} } \\
&&
\end{eqnarray*}
\end{minipage}
\end{center}

One can see that integer solutions to the above LP are in one to one correspondence with assignments $\sigma:\cal M \rightarrow \cal J$ of average machine load at most $L$ and makespan at most $T$.
\comment{Using the rounding technique, also used in \cite{ST93} for the generalized assignment problem, where every job also incurs a cost of $c_ij$ if scheduled on machine $i$, we can round any fractional solution $\bar x$ to $LP(T,L)$, to an assignment $\sigma:\cal J \rightarrow \cal M$, of average machine }
 
\comment{In \cite{ST93}, Shmoys and Tardos present a polynomial time algorithm, that given values $L$ and $T$, if $LP(T,L)$ is feasible, yields an assignment $\sigma$ with cost at most $mL$ and $load_{\sigma}(i)=T+max\{p_{ij}:\sigma(j)=i\}$, $\forall i $
Recall the \emph{generalized assignment problem}, a generalization of the problem of scheduling on unrelated machines, where each job $j\in \cal J$ also incurs a cost $c_{ij}$ on machine $i\in \cal M$. The best known result for the generalized assignment problem is due to \cite{ST93}. They presented
a polynomial time algorithm that, given values $C$ and $T$, finds
a schedule of cost at most $C$ and makespan at most $2T$, if a schedule
of cost $C$ and makespan $T$ exists. This algorithm yields a schedule makespan not larger than twice the optimum with optimal job cost. 
The next theorem then derives when fixing $c_{ij}=p_{ij}$, $\forall i\in \cal M, j\in \cal J$.}

\begin{theorem}
\label{initial assignment thm}
If $LP(T,L)$ is feasible for some $L\leq T$, then there is a polynomial-time algorithm that yields a schedule $\sigma$ with
\begin{enumerate}
\item $L(\sigma)\leq L$
\item $load_{\sigma}(i)\leq T+max\{p_{ij}:\sigma(j)=i\}$, $\forall i\in \cal M$
\end{enumerate}
\end{theorem}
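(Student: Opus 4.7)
The plan is to reduce to the Shmoys--Tardos~\cite{ST93} rounding for the generalized assignment problem by setting the cost coefficient $c_{ij}$ equal to the processing time $p_{ij}$. Under this substitution, the average-load constraint of $LP(T,L)$, namely $\frac{1}{m}\sum_{ij}p_{ij}x_{ij}\le L$, becomes the constraint ``total LP cost $\le mL$'', while the constraints $\sum_j p_{ij}x_{ij}\le T$ are the usual per-machine makespan constraints. The hypothesis $L\le T$ places us exactly inside the Shmoys--Tardos framework, so the goal is to produce an integral assignment whose total cost (equal to total load) is at most $mL$ and whose per-machine load is at most $T$ plus the largest $p_{ij}$ of a job it receives.

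Concretely, first I would compute, in polynomial time, a feasible fractional solution $\bar x$ of $LP(T,L)$. Next, I would construct a bipartite graph $B$ whose left vertices are $\cJ$ and whose right vertices are ``slots'' $(i,s)$ with $1\le s\le k_i:=\lceil\sum_j \bar x_{ij}\rceil$ for each $i\in\cM$. For every machine $i$ I would sort the jobs with $\bar x_{ij}>0$ in non-increasing order of $p_{ij}$ and greedily pack their fractional masses into the unit-capacity slots in that order, splitting a single job across two consecutive slots when required. This yields edge weights $y_{j,(i,s)}$ forming a fractional perfect matching on the job side whose total ``processing-time cost'' $\sum_{j,(i,s)} p_{ij}\,y_{j,(i,s)}$ equals $\sum_{ij}p_{ij}\bar x_{ij}\le mL$. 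Because the bipartite matching polytope is integral, a polynomial-time min-cost bipartite matching then produces an integral $y^{*}$ of no greater cost, and the assignment $\sigma$ is read off from $y^{*}$.

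Bound~(1) is then immediate: $\sum_i load_\sigma(i) = \sum_{j,(i,s)} p_{ij}\,y^{*}_{j,(i,s)} \le mL$, so $L(\sigma)\le L$. Bound~(2) uses the structural property of the sorted packing, namely that jobs fractionally in slot $(i,s)$ all have $p_{ij}\ge p^{\min}_s(i)$, while jobs in the next slot have $p_{ij}\le p^{\min}_s(i)$. Thus the job integrally placed in slot $(i,s)$ for $s\ge 2$ has processing time at most $p^{\min}_{s-1}(i)\le(\text{fractional load of slot }(i,s-1))$, and summing over $s$ gives $load_\sigma(i)\le p_{i,j_1}+\sum_{s=1}^{k_i-1}(\text{fractional load of slot }(i,s)) \le \max\{p_{ij}:\sigma(j)=i\}+T$, where $j_1$ is the job integrally placed in slot~$1$. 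I expect the main obstacle to be designing the slot construction so that a single matching-rounding simultaneously supports both bounds: a generic bipartite rounding preserves the cost bound but not the per-machine makespan bound, so the Shmoys--Tardos ``sort by $p_{ij}$ and split across consecutive slots'' device is essential for the telescoping argument underlying~(2).
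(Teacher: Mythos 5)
Your proposal is correct and follows essentially the same route as the paper: both obtain a fractional solution to $LP(T,L)$, apply the Shmoys--Tardos slot/bin construction with jobs sorted by non-increasing $p_{ij}$ per machine, and extract the schedule from a minimum-cost integral matching, with the cost bound giving (1) and the sorted-splitting structure giving (2). The only difference is cosmetic: you spell out the telescoping argument for the per-machine bound that the paper delegates to the citation of Shmoys and Tardos.
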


\begin{proof}
Let $\bar x = (x_{ij}:i\in \cal M,j\in \cal J)$ be a fractional solution to $LP(T,L)$.
We round $\bar x$ to an integer solution using the following classic rounding technique, also used in \cite{ST93}.  
Let $k_{i}=\left\lceil \sum_{j\in {\cal J}}x_{ij}\right\rceil $. Each machine is partitioned into $k_{i}$ sub-machines $v_{i,s}$, $s=1,...,k_{i}$. \comment{The rounding is done by finding
a minimum-cost perfect matching between all jobs and all sub-machines.}

An edge weighted, bipartite graph $B=(W,V;E)$ is constructed, where $W=\left\{ w_{j}:j=1,...,n\right\} $
representing the jobs and $V=\left\{ v_{is}:i=1,..,m,\, s=1,...,k_{i}\right\} $ representing the sub-machines. The edges are constructed in the following way.

Consider the nodes $v_{is}$ as bins of unit capacity and the nodes $w_j$, as pieces of size $x_{ij}$. 
For every machine $i\in \cal M$, consider the nodes in non-increasing order of the processing time of the corresponding job, on machine $i$. For convenience, assume $p_{i1}\geq p_{i2}\geq ...\geq p_{in}$.
An edge $(w_{j},v_{is})$ with cost $p_{ij}$ is constructed if and only if a positive fraction of $x_{ij}$ is packed in the bin $v_{is}$. The packing of the bins (and construction of the edges) is done in the following way. The bins $v_{i1},...,v_{ik_{i}}$ are packed one by one, with the pieces in the order $p_{i1},p_{i2},...,p_{in}$. While $v_{is}$ is not totally packed, (otherwise we consider the next bin) we continue packing the next piece such that if its size, $x_{ij}$, fits $v_{is}$ (without causing an overflow) it is packed to $v_{is}$, else, if it causes an overflow, only a fraction $\beta> 0$ of $x_{ij}$ is packed to $v_{is}$, consuming all the remaining volume of $v_{is}$, and the remaining part of $(1-\beta)x_{ij}$ is packed in $v_{i,s+1}$.
Figure \ref{rounding} gives a pictorial example of this construction.

The rounding is done by finding a minimum-cost integer matching $M\in E$ that matches all job nodes, and for every edge $(w_{j},v_{is})\in M$, set $\sigma(j)=i$ i.e., assign job $j$ on machine $i$.

By taking a minimum-cost integer matching it is guaranteed that $\frac{1}{m} \sum_{i=1}^{m}\sum_{j:\sigma(j)=i}p_{ij}\leq L$, or $L(\sigma)\leq L$.
By the construction of the graph it is guaranteed that the load on machine $i$, for every $i\in \cal M$, is at most $max\{p_{ij}:\sigma(j)=i\}+\sum_{s=2}^{k_i} p_{is}^{max}$, where $p_{is}^{max}=max\{p_{ij}:(w_{j},v_{is})\in E\}$.   
Since $\sum_{s=2}^{k_i} p_{is}^{max}\leq T$ (for a detailed proof, see \cite{ST93}), we get that $load_\sigma (i)\leq max\{p_{ij}:\sigma(j)=i\}+T$ for all $i\in \cal M$.
\end{proof}

\begin{figure}
\begin{center}
{\epsfig{file=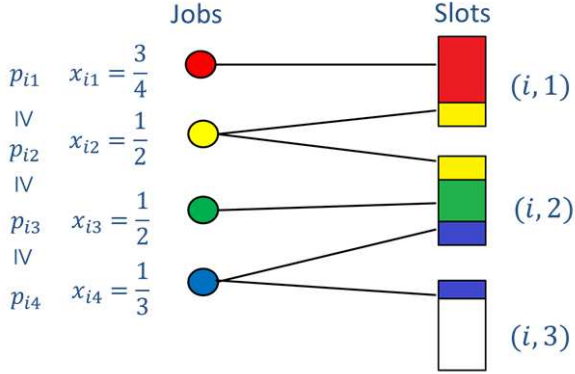,width=4.5in}}
\caption{\small Example of the construction in the rounding}
\label{rounding}
\end{center}
\end{figure}

This result will be helpful in our approximation algorithm.

\subsection{Approximation Algorithm}

Recall that an instance $\cal I$ is {\em fully-feasible} if and only if $p_{ij}\leq T_{opt}$ for every job $j\in \cal J$ and machine $i\in \cal M$.
Although fully-feasible instances are hard to identify, we give a polynomial-time algorithm that yields for such instances an assignment whose makespan is better than twice the optimal makespan, the best known ratio for general instances. 

Our main result is as follows.
\begin{theorem}
\label{fully-Feasible Thm}
There is a polynomial-time algorithm that yields for fully-feasible instances an assignment of makespan at most $T_{opt}+L_{opt}$.
\end{theorem}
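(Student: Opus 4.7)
The plan is to combine the LP-rounding routine of Theorem~\ref{initial assignment thm} with a single post-processing sweep enabled by full-feasibility. Since $T_{opt}$ and $L_{opt}$ are not known a priori, I would enumerate candidate values of $T$ over the polynomial set $\{p_{ij}:i\in{\cal M},\,j\in{\cal J}\}$, forcing $x_{ij}=0$ whenever $p_{ij}>T$ (the usual Lenstra--Shmoys--Tardos convention), and for each $T$ solve $LP(T,L)$ while minimizing $L$, calling the optimum $L^*(T)$. For the correct guess $T=T_{opt}$ the integral optimum is itself a fractional feasible solution, so $L^*(T_{opt})\le L_{opt}$; furthermore $L^*(T)\le T$ always, since an average cannot exceed a maximum. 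Applying Theorem~\ref{initial assignment thm} then yields an assignment $\sigma$ with $L(\sigma)\le L^*(T)$ and $load_\sigma(i)\le T+p_i^{\max}$ for every $i$, where $p_i^{\max}:=\max\{p_{ij}:\sigma(j)=i\}$.

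\textbf{Counting good versus bad machines.} I would partition ${\cal M}$ under $\sigma$ into \emph{bad} machines with load $>T+L^*(T)$, \emph{good} machines with load $\le L^*(T)$, and \emph{medium} machines otherwise; denote their sizes by $B,G,M$. The heart of the argument is the inequality $G>B$ whenever $B\ge 1$ (if $B=0$ there is nothing to repair). Summing loads,
\[
m\,L^*(T)\;\ge\;\sum_{i\in{\cal M}}load_\sigma(i)\;>\;B\bigl(T+L^*(T)\bigr)+M\,L^*(T),
\]
where the strict inequality uses that each bad machine strictly exceeds $T+L^*(T)$ and $B\ge 1$. Rearranging gives $G\,L^*(T)>B\,T$; combined with $L^*(T)\le T$ this forces $G>B$, so one may fix an arbitrary injection $\pi:Bad\hookrightarrow Good$.

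\textbf{Repair step.} On each bad machine $i$, the bound $load_\sigma(i)\le T+p_i^{\max}$ forces the longest-assigned job $j_i^{\max}$ to satisfy $p_i^{\max}>L^*(T)$; removing it from $i$ therefore leaves load at most $T$. I then reassign $j_i^{\max}$ to $\pi(i)$; full-feasibility guarantees $p_{\pi(i),\,j_i^{\max}}\le T$, so the new load of $\pi(i)$ is at most $L^*(T)+T$. All untouched good and medium machines had load at most $T+L^*(T)$ to begin with, so every machine finishes with load at most $T+L^*(T)$. Setting $T=T_{opt}$ yields makespan at most $T_{opt}+L^*(T_{opt})\le T_{opt}+L_{opt}$, as required.

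\textbf{Main difficulty.} The per-machine bound of Theorem~\ref{initial assignment thm} is additive and a priori allows individual loads approaching $2T$, so the improvement from $2T_{opt}$ to $T_{opt}+L_{opt}$ cannot come from a sharper rounding; it must come from the global LP average. The counting step $G>B$ is precisely where this average is used, together with $L^*(T)\le T$, to rule out configurations with too many overflowing machines and too few absorbers. Full-feasibility then re-enters critically in the repair step, since without it the long job removed from a bad machine might have no feasible placement on any available good machine.
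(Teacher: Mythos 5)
Your argument is, at its core, the one the paper uses: the paper proves the more general Theorem~\ref{general main theorem} (a bound of $T_{opt}+L_{opt}/\varphi$ whenever $\varphi\geq L/T$) by rounding $LP(T,L)$ as in Theorem~\ref{initial assignment thm}, using the average-load constraint to show that the lightly loaded machines outnumber the overloaded ones (Lemma~\ref{enough_good_machines}), invoking Hall's theorem to match each overloaded machine to a distinct lightly loaded feasible one, and moving the longest job across each matched pair; Theorem~\ref{fully-Feasible Thm} is then the case $\varphi=1$, where Hall's condition degenerates into exactly your injection $G>B$. Your counting step ($G\,L^{*}>B\,T$ combined with $L^{*}\leq T$) and your repair step are correct and coincide with the paper's argument in this special case, and you correctly locate the source of the improvement in the global average-load constraint rather than in a sharper per-machine rounding.

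The one genuine defect is your candidate set for $T$. You enumerate $T$ over $\{p_{ij}\}$, but $T_{opt}$ is a sum of processing times on some machine and need not equal any single $p_{ij}$; worse, there may be no $T$ in this set for which $LP(T,\cdot)$ is feasible at all. With two identical machines and four unit jobs, the only candidate is $T=1$, yet every fractional assignment places total load $2$ on some machine, so the LP is infeasible for every candidate and the algorithm as written returns nothing (here $T_{opt}=L_{opt}=2$, so the claimed bound is attainable). The repair is the paper's: binary-search over $[0,\sum_{i,j}p_{ij}]$ for the minimal $T$ with $LP(T,T)$ feasible and then the minimal $L$ with $LP(T,L)$ feasible, which the paper argues gives $T\leq T_{opt}$ and $L\leq L_{opt}$; the rest of your argument then goes through verbatim. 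One consequential adjustment: once $T$ may be strictly below $T_{opt}$, full feasibility only guarantees $p_{\pi(i),j_i^{\max}}\leq T_{opt}$, not $\leq T$, so the receiving machine's new load is bounded by $L^{*}+T_{opt}$ rather than $L^{*}+T$ --- which still lies within the claimed $T_{opt}+L_{opt}$.
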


Note that always $L_{opt}\leq T_{opt}$. $L_{opt}=T_{opt}$ occurs only in case where {\em every} optimal assignment is perfectly balanced (i.e., the load on all the machines equals $T_{opt}$). Thus, in the typical case we get $L_{opt}<T_{opt}$, which means that in the typical case our algorithm guaranteed a bound that is strictly better that twice the optimum. 

The following theorem shows that the problem remains hard already when considering only the class of fully-feasible instances.

\begin{theorem}
Makespan minimization on unrelated machines is hard to approximate within factor $<\frac{4}{3}$ , already in the class of fully-feasible instances.
\end{theorem}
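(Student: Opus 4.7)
The plan is to mimic the classical Lenstra--Shmoys--Tardos NP-hardness reduction from 3-Dimensional Matching (3DM) used to prove the $3/2$-inapproximability of makespan minimization on unrelated machines, but to replace every ``forbidden'' (infinite) processing time by a bounded value that keeps the resulting instance fully-feasible while degrading the inapproximability ratio from $3/2$ only down to $4/3$.

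Concretely, given a 3DM instance with element sets $X,Y,Z$ of size $n$ and triples $T_1,\dots,T_m$, I would create $3n$ ``element'' machines $M_e$, one per element, and one ``triple-job'' $J_k$ per triple $T_k=(x,y,z)$, with $p_{ij}=1$ for $i\in\{M_x,M_y,M_z\}$ and $p_{ij}=3$ on every other machine. I would then add, for every element $e$, a suitable number of ``filler'' jobs that are cheap only on $M_e$ (cost $1$) and cost $3$ elsewhere, tuned so that in a yes-instance of 3DM every machine can be loaded to exactly $T=3$: each of the $n$ matched triples goes to one of its element-machines with cost $1$, the remaining $m-n$ triples go to one of their element-machines adding cost $1$, and fillers top each machine up to $3$. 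In particular, every processing time is at most $3=T_{\rm opt}$, so the instance is fully-feasible.

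On the no-side, one argues that if no perfect 3DM exists then under any assignment at least one element-machine must either receive two triple-jobs from among its incident triples plus its full filler load (exceeding $3$) or be forced to host a triple-job at cost $3$ on top of a positive filler baseline. A careful counting or parity argument then shows that the makespan necessarily reaches at least~$4$, producing the desired gap between $3$ and $4$, and hence NP-hardness of approximating within any factor strictly less than $4/3$ on fully-feasible inputs.

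The main obstacle will be the fact that, once $\infty$ is replaced by the bounded penalty~$3$, an adversarial scheduler gains real flexibility: mis-routed triple-jobs only pay cost $3$ rather than becoming infeasible, so they can in principle be spread across many lightly-loaded machines. The gadget must therefore be engineered so that the filler loads leave no room anywhere for an off-feasibility triple-job without overflowing some machine to load $\geq 4$. Getting the filler counts and feasibility structure to interact so that a no-instance of 3DM \emph{forces} such an overflow, while still verifying $T_{\rm opt}=3$ in the yes-case (so that the instance is truly fully-feasible and not merely nearly so), is the delicate part of the construction.
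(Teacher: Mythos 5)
Your high-level plan --- take the Lenstra--Shmoys--Tardos reduction from 3-Dimensional Matching and replace the $\infty$ entries by a finite value not exceeding the optimal makespan --- is exactly what the paper intends (its proof is the single sentence that the result follows from a reduction from 3-dimensional matching as in \cite{LST90}). But your concrete gadget inverts the roles of triples and elements, and this inversion destroys the reduction. In the LST construction the \emph{triples are machines} and the \emph{elements are jobs}: each element job has processing time $1$ on the machines of triples containing it, there are $m-n$ dummy jobs of length $3$ feasible everywhere, and the point is that a machine of load $3$ built from unit jobs must receive exactly the three elements of \emph{its own} triple, so a makespan-$3$ schedule certifies a perfect matching. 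When you instead make the elements machines and the triples jobs, a triple-job assigned to $M_x$ ``touches'' only $x$ and says nothing about $y$ and $z$; the statement that all $m$ triple-jobs can be placed at cost $1$ with at most $3-f_e$ of them on each $M_e$ is a pure bipartite-flow feasibility condition on the incidence structure, completely decoupled from the existence of a perfect $3$-dimensional matching. Consequently your no-side claim --- that the absence of a perfect matching forces some machine to overflow to load $\geq 4$ --- has no basis: a no-instance of 3DM can still admit a perfectly balanced routing of triple-jobs to their incident element-machines, so the $3$ versus $4$ gap is never established. (Your yes-side is also not immediate, since the filler counts $f_e$ must be fixed in the instance while the number of triple-jobs landing on $M_e$ varies with the schedule.)

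The repair is to keep the original orientation and change only the $\infty$ entries. Take the LST instance: machines correspond to triples, each element job $a$ has $p_{ta}=1$ if $a\in t$ and $p_{ta}=3$ otherwise, and there are $m-n$ dummy jobs with $p_{tj}=3$ on every machine. All processing times are at most $3$, and since the average machine load is exactly $3$ we have $T_{opt}\geq 3$ in every case, so the instance is always fully-feasible. If a perfect matching exists, $T_{opt}=3$. Conversely, in any schedule of makespan $3$ each dummy occupies a machine alone; if $k$ element jobs are placed on ``wrong'' machines (cost $3$) they also each occupy a machine alone, leaving $3n-k$ unit jobs for $n-k$ machines of capacity $3$, which forces $3n-k\leq 3(n-k)$ and hence $k=0$; the $n$ non-dummy machines then each hold exactly the three elements of their own triple and form a perfect matching. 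Thus distinguishing $T_{opt}\leq 3$ from $T_{opt}\geq 4$ is NP-hard on fully-feasible instances, which is the claimed $4/3$ bound.
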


The proof of the hardness result follows from a reduction from $3$-dimensional matching as in \cite{LST90}.

We prove a stronger result, that also shows that our result is robust under small violations of the feasibility constraints. Namely, we show that we can get better bounds also when our instance is not fully-feasible but has the property that every job can be non-feasible on some {\em small} fraction of the machines. For this, we will need some definitions.

\begin{definition}
Job $j\in \cal J$ is feasible on machine $i\in \cal M$ if $p_{ij}\leq T_{opt}$
\end{definition}

\begin{definition}
The feasibility parameter of instance $\cal I$ is defined as
\begin{equation*}
\quad \varphi({\cal I})=min_{j\in {\cJ}} \frac{|\{i\in {\cal M}: j \mbox{ is feasible on } i\}|}{m}
\end{equation*}
\end{definition}

In these terms, we have that every job $j\in \cal J$ is feasible on at least $\varphi({\cal I})\cdot m$ machines. Moreover, an instance $\cal I$ is fully-feasible if and only if $\varphi({\cal I})=1$.

$  $

Given an instance $\cal I$, we can fix $L$ and $T$ to be the minimal values such that $LP(T,L)$ is feasible in polynomial-time in the size of the input $|\cal I|$, using binary search on some feasible regions for $L$ and $T$.
At the end of this operation, $L$ and $T$ will satisfy $L\leq T$, $L\leq L_{opt}$ and $T\leq T_{opt}$.

\begin{theorem}
\label{general main theorem}
There is a polynomial-time algorithm that, given the values $L\leq T$, yields for instances $\cal I$ with feasibility parameter $\varphi({\cal I}) \geq \frac{L}{T}$, an assignment of makespan at most $T_{opt}+\frac{L_{opt}}{\varphi(\cal I)}$
\end{theorem}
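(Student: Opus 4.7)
My plan is to obtain an initial schedule by LP rounding, identify a small set of overloaded machines, and reroute one ``big'' job from each to a lightly-loaded feasible machine via a Hall-type matching on the bipartite graph $G_{\sigma_0,\gamma}$ hinted at in the preliminaries.

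First, I would solve $LP(T,L)$ and invoke \theoref{initial assignment thm} to get $\sigma_0$ with $L(\sigma_0)\leq L$ and $load_{\sigma_0}(i)\leq T+\max\{p_{ij}:\sigma_0(j)=i\}$ for every $i\in\cM$. Assuming, as is standard in such formulations, that $x_{ij}=0$ whenever $p_{ij}>T$, every job $j$ with $\sigma_0(j)=i$ satisfies $p_{ij}\leq T\leq T_{opt}$, so $i$ is feasible for $j$. Setting $\gamma=1/\varphi$ and letting $Bad=Bad(\sigma_0,\gamma)$ and $Good=Good(\sigma_0,\gamma)$, the initial-rounding bound forces $p_{i,j_{max}^i(\sigma_0)}>L/\varphi$ for every $i\in Bad$. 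I form the bipartite graph on $(Bad,Good)$ with edges $(i,i')$ whenever $i'\in Good_{j_{max}^i(\sigma_0)}(\sigma_0,\gamma)$.

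The key step is proving that this graph admits a matching saturating $Bad$ via Hall's condition. Fix $S\subseteq Bad$, set $J_S=\{j_{max}^i(\sigma_0):i\in S\}$, and let $F_{J_S}=\bigcup_{j\in J_S}\{i':p_{i'j}\leq T_{opt}\}$. Since every job is feasible on at least $\varphi m$ machines, $|F_{J_S}|\geq\varphi m$. Writing $U=\cM\setminus Good$ (so $Bad\subseteq U$), summing the strict lower bounds on the loads in $Bad$ and in $U\setminus Bad$ against $\sum_i load_{\sigma_0}(i)\leq mL$ gives
\[
|Bad|\cdot T+|U|\cdot(L/\varphi)<mL.
\]
The hypothesis $\varphi\geq L/T$, i.e.\ $T\varphi/L\geq 1$, converts this into $|U|<m\varphi-|Bad|\cdot T\varphi/L\leq m\varphi-|Bad|$. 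Hence $|N(S)|=|F_{J_S}\cap Good|\geq|F_{J_S}|-|U|>\varphi m-(m\varphi-|Bad|)=|Bad|\geq|S|$, and Hall is satisfied.

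With a matching $M$ saturating $Bad$ (computable in polynomial time), I would move $j_{max}^i(\sigma_0)$ from $i$ to $i'$ for each $(i,i')\in M$. A formerly bad machine then has load at most $T\leq T_{opt}$; a good machine $i'$ that received a job has load at most $L/\varphi+T_{opt}\leq L_{opt}/\varphi+T_{opt}$ (using $L\leq L_{opt}$); every other machine is untouched and has load at most $T+L/\varphi\leq T_{opt}+L_{opt}/\varphi$, giving the claimed bound. The main obstacle is the Hall counting above: the hypothesis $\varphi\geq L/T$ is exactly what tightens the trivial estimate $|U|<m\varphi$ to $|U|<m\varphi-|Bad|$, and without this tightening the union bound would not beat $|F_{J_S}|\geq\varphi m$ and Hall would not be visible.
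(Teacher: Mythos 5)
Your argument follows the paper's proof in all essentials: round a solution of $LP(T,L)$ via Theorem~\ref{initial assignment thm}, classify machines into $Bad$ and $Good$ with threshold $\gamma=1/\varphi$, verify Hall's condition by a load-counting argument that uses exactly the hypothesis $\varphi\geq L/T$, and transfer the largest job of each bad machine along a saturating matching. Your Hall verification is packaged slightly differently (you bound $|N(S)|\geq|F_{J_S}|-|{\cal M}\setminus Good|$ using that a single job already contributes $\varphi m$ feasible machines, whereas the paper bounds $|Good_j|\geq|Good|-(1-\varphi)m$ per job), but the two computations are the same inequality $|{\cal M}\setminus Good|<\varphi m-|Bad|\cdot\frac{T\varphi}{L}\leq \varphi m - |Bad|$ in disguise, and your final load accounting ($\leq T$ on emptied bad machines, $\leq L/\varphi+T_{opt}$ on recipients, $\leq T+L/\varphi$ elsewhere) matches the paper's.

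There is one genuine gap: your algorithm is not implementable in polynomial time as written, because both the edge set of the bipartite graph (you need to know which machines satisfy $p_{i'j}\leq T_{opt}$) and the threshold $1/\varphi$ itself are defined in terms of $T_{opt}$, which cannot be computed unless $P=NP$. The paper resolves this by observing that $T_{opt}$ lies in one of the at most $nm$ intervals $[p_t,p_{t+1})$ determined by the distinct processing times; for each guess $t$ the feasibility relation becomes ``$p_{i'j}\leq p_t$'' and the parameter becomes the computable quantity $\varphi_t$, and the algorithm runs the matching step for every $t$ and returns the assignment of minimum makespan, which is correct because the bound holds for the right guess. Without this enumeration (or some equivalent device) your construction presupposes oracle access to $T_{opt}$. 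A second, much smaller point: the paper's $LP(T,L)$ does not contain the constraint $x_{ij}=0$ for $p_{ij}>T$ that you assume, but nothing in your argument actually depends on it, since emptying a bad machine of its largest job already brings its load down to $T$ by the rounding guarantee.
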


\begin{corollary}
The statement of theorem \ref{fully-Feasible Thm} follows directly from \ref{general main theorem}, since the feasibility parameter of fully-feasible instances is $1$.
\end{corollary}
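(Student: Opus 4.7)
The plan is to deduce the Corollary as a direct instantiation of Theorem~\ref{general main theorem} at feasibility parameter $\varphi(\cI) = 1$, combined with the polynomial-time binary search for $L$ and $T$ described immediately before that theorem. No new algorithmic machinery is needed; the statement is essentially an observation about how the parameter $\varphi$ specializes on fully-feasible inputs.

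First, I would unfold the definitions. By the definition of a fully-feasible instance, $p_{ij} \le T_{opt}$ for every job $j \in \cJ$ and every machine $i \in \cM$, so every job is feasible on all $m$ machines. Substituting this into the definition of the feasibility parameter gives
\[
\varphi(\cI) \;=\; \min_{j \in \cJ} \frac{|\{i \in \cM : j \mbox{ is feasible on } i\}|}{m} \;=\; \frac{m}{m} \;=\; 1.
\]

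Next, I would run the polynomial-time binary search over the feasible regions of $L$ and $T$, as described in the paragraph preceding Theorem~\ref{general main theorem}, to obtain minimal values satisfying $L \le T$, $L \le L_{opt}$ and $T \le T_{opt}$ for which $LP(T,L)$ is feasible. To invoke Theorem~\ref{general main theorem} I need to verify its hypothesis $\varphi(\cI) \ge L/T$; since $\varphi(\cI) = 1$ and $L \le T$, we have $L/T \le 1 = \varphi(\cI)$, and the hypothesis holds without further argument.

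Applying Theorem~\ref{general main theorem} then produces in polynomial time an assignment of makespan at most
\[
T_{opt} + \frac{L_{opt}}{\varphi(\cI)} \;=\; T_{opt} + L_{opt},
\]
which is exactly the conclusion of Theorem~\ref{fully-Feasible Thm}. The overall procedure (binary search, followed by the algorithm promised by Theorem~\ref{general main theorem}) is polynomial-time, since both components are. There is no substantive obstacle here: all the rounding and combinatorial work is already packaged inside Theorem~\ref{general main theorem}, and the sole content of the Corollary is the observation that the feasibility parameter equals $1$ on fully-feasible instances, which collapses the general bound to $T_{opt} + L_{opt}$.
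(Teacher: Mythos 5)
Your proposal is correct and matches the paper's reasoning exactly: the corollary is just Theorem~\ref{general main theorem} instantiated at $\varphi(\cal I)=1$, with the hypothesis $\varphi\geq \frac{L}{T}$ holding trivially because $L\leq T$. The paper treats this as immediate and gives no further argument; your explicit verification of the hypothesis is the only (welcome) addition.
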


Given a general scheduling instance $\cal I$, its optimal makespan, $T_{opt}$, nor its feasibility parameter, $\varphi(\cal I)$, cannot be computed in polynomial-time, unless $P=NP$. However, let $\ell$ be the number of distinct processing times $p_{ij}$ of an instance $\cal I$ and w.l.o.g assume that $p_1\leq p_2 \leq ... \leq p_\ell$ are the distinct processing times of $\cal I$.
If we denote $p_{\ell+1}=\infty$, then we see that $T_{opt}$ can belong to exactly one of the regions $[p_t,p_{t+1})$ for some $t\in\{1,...,\ell\}$. 
If $T_{opt}\in [p_t,p_{t+1})$ for some $t\in\{1,...,\ell\}$, then job $j\in \cal J$ is feasible on some machine $i\in \cal M$ if and only if $p_{ij}\leq p_t$. Therefore, $T_{opt}\in [p_t,p_{t+1})$ if and only if $\varphi({\cal I})=\varphi_t\triangleq min_{j\in {\cJ}} \frac{|\{i\in {\cal M}: p_{ij}\leq p_t \}|}{m} $.

$   $

We prove the theorem by describing an algorithm that admits the desired bound on the makespan, for instances with {\em large} enough feasibility parameter.
The first step of the algorithm is to find the minimal $T$ and $L$ such that $LP(T,L)$ is feasible. 
Next, it obtains a fractional solution to $LP(T,L)$ and rounds it to obtain an initial assignment $\sigma$, such that $L(\sigma)\leq L$ and $load_{\sigma}(i)\leq T+max\{p_{ij}:\sigma(j)=i\}$, $\forall i\in \cal M$, as in Theorem \ref{initial assignment thm}.
Then, for each guess of the optimal makespan region and the corresponding feasibility parameter, it tries to fix the initial assignment to achieve a new assignment of makespan at most $T_{opt}+\frac{L_{opt}}{\varphi}$ (and unless the feasibility parameter is too large, it will succeed for the right guess). This is done by balancing the initial assignment such as to reduce the load of the overloaded machines to meet the desired makespan.  

We first prove the following lemmas.

\begin{lemma}
\label{enough_good_machines}
Let $\sigma$ be an assignment for some instance $\cal I=(\cal J,\cal M)$ such that $L(\sigma)\leq L$, let $T\geq L $ and let $\gamma\geq1$.
Denote  $k=|Bad(\sigma,\gamma)|$. Then
\begin{enumerate}
\item $k<\frac{m}{\gamma+1}$.
\item  $|Good(\sigma,\gamma)|>\left(1-\frac{1}{\gamma}\right)\cdot m+\frac{k}{\gamma}\cdot\frac{T}{L}$.
\end{enumerate}
\end{lemma}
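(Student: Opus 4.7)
The plan is to compare the total load $\sum_{i \in \cal M} load_\sigma(i)$, which the hypothesis $L(\sigma) \leq L$ bounds by $mL$, against separate lower bounds on this sum coming from the three classes into which the machines partition: the bad ones (load $> T+\gamma L$), the good ones (load $\leq \gamma L$), and the ``middle'' ones with load in $(\gamma L,\, T+\gamma L]$. Both claims will fall out of routine arithmetic once these bookkeeping inequalities are in place.

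For part 1, I would first observe that every bad machine $i$ satisfies $load_\sigma(i) > T + \gamma L \geq L + \gamma L = (\gamma+1) L$, where the second inequality uses the hypothesis $T \geq L$. Summing over the $k$ bad machines gives
\[
k(\gamma+1) L \;<\; \sum_{i \in Bad(\sigma,\gamma)} load_\sigma(i) \;\leq\; \sum_{i \in \cal M} load_\sigma(i) \;\leq\; mL,
\]
from which $k < m/(\gamma+1)$ is immediate (the case $k=0$ is trivial).

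For part 2, let $g = |Good(\sigma,\gamma)|$. The remaining $m - g - k$ machines are neither good nor bad, so each has $load_\sigma(i) > \gamma L$. Applying the strict lower bound $T + \gamma L$ to the bad machines, the strict lower bound $\gamma L$ to the middle machines, and the trivial nonnegative bound to the good ones, the hypothesis yields
\[
mL \;\geq\; \sum_{i \in \cal M} load_\sigma(i) \;>\; k(T + \gamma L) + (m - g - k)\, \gamma L.
\]
Expanding the right-hand side, the $k\gamma L$ terms cancel and we are left with $mL > kT + m\gamma L - g \gamma L$. Rearranging produces $g\, \gamma L > kT + (\gamma - 1) m L$, and dividing by $\gamma L$ gives the desired bound $g > \bigl(1 - \tfrac{1}{\gamma}\bigr) m + \tfrac{k}{\gamma} \cdot \tfrac{T}{L}$.

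I do not anticipate any real obstacle; the only point that requires a little care is remembering to account for the middle machines separately (dropping them would give a strictly weaker bound than the statement asks for) and to keep track of which inequalities are strict so that the final strict inequality is justified.
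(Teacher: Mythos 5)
Your proof is correct and follows essentially the same route as the paper: both parts rest on the same decomposition of the total load into bad, middle, and good machines and the same comparison against $mL$. The only difference is presentational --- the paper phrases both parts as proofs by contradiction while you rearrange the inequality directly --- which does not change the argument.
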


\begin{proof}
Each machine $i\in Bad(\sigma,\gamma)$ has load greater than $T+\gamma\cdot L$,
therefore $\sum_{i\in {\cal M}}load_\sigma(i) > k\cdot(T+\gamma\cdot L)$.
\begin{enumerate}
\item Assume that $k\geq\frac{m}{\gamma+1}$, then

\[
\begin{array}{ll}
\sum_{i\in M}load_\sigma(i) & >  k(T+\gamma L)\\
 & \geq \frac{m}{\gamma+1}(T+\gamma L)\\
 & = \frac{m}{\gamma+1}(T+\gamma L)\\
& {\geq}
 \frac{m}{\gamma+1}(L+\gamma L)\\
 & =  m\cdot L
\end{array}
\]

The last inequality follows from the fact that $T \geq L$.
Hence, the average machine load is greater than $L$, a contradiction.
It follows that $k<\frac{m}{\gamma+1}$.

\item Let $\left|Good(\sigma,\gamma)\right|=l$.
Then, there are $m-k-l$ machines having loads greater than $\gamma L$.
Assume that $l\leq\left(1-\frac{1}{\gamma}\right)m+\frac{k}{\gamma}\cdot\frac{T}{L}$, then

\[
\begin{array}{ll}
\sum_{i\in M}load_\sigma(i) & >  k(T+\gamma\cdot L)+\left(m-l-k\right)\gamma L\\
 & = kT+\left(m-l\right)\gamma L\\
 & \geq  kT+\left(m-\left(1-\frac{1}{\gamma}\right)m+\frac{k}{\gamma}\cdot\frac{T}{L}\right)\gamma L\\
 & \geq kT+\left(\frac{m}{\gamma}+\frac{k}{\gamma}\cdot\frac{T}{L}\right)\gamma L\\
 & =  kT+\left(mL+kT\right)\\
 & \geq mL
\end{array}
\]

Hence, the average machine load is greater than $L$, a contradiction.
It follows that $\left|Good(\sigma,\gamma)\right|\geq\left(1-\frac{1}{\gamma}\right)\cdot m+\frac{k}{\gamma}
\cdot\frac{T}{L}$.
\end{enumerate}
\end{proof}

\begin{lemma}
Let $\cal I=(\cal J,\cal M) $ be an instance with feasibility parameter $\varphi$. Let $\sigma$ be an assignment for $\cal I$ with average machine load $L$ and let $T\geq L$. If $\varphi \geq\frac{L}{T}$ then for every subset $A\subseteq Bad(\sigma,\frac{1}{\varphi})$, $\left|N\left(A\right)\right|\geq\left|A\right|$, where $N\left(A\right)$ is the set of neighbors of $A$ in $G_{\sigma,\frac{1}{\varphi}}$.
\end{lemma}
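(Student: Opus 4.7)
The plan is to apply the previous lemma (\emph{enough\_good\_machines}) with $\gamma=1/\varphi$, then count feasible machines for the heaviest job on a single bad machine and observe that this count alone already exceeds $|A|$.

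First I would set $\gamma=1/\varphi$ and $k=|Bad(\sigma,\gamma)|$, and use part~2 of the previous lemma to get
\[
|Good(\sigma,\tfrac{1}{\varphi})| \;>\; (1-\varphi)\,m + k\varphi\cdot\tfrac{T}{L}.
\]
Since by hypothesis $\varphi \geq L/T$, i.e.\ $\varphi T/L \geq 1$, this simplifies to
\[
|Good(\sigma,\tfrac{1}{\varphi})| \;>\; (1-\varphi)\,m + k.
\]
This is the key numerical inequality driving everything.

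Next I would fix any nonempty $A\subseteq Bad(\sigma,1/\varphi)$, pick any $i\in A$, and look at the heaviest job on it, $j=j_{max}^{i}(\sigma)$. By definition of $\varphi$, job $j$ is feasible on at least $\varphi m$ machines of $\mathcal{M}$. Among these, at most $m-|Good(\sigma,1/\varphi)|$ can fail to be good, so the number of machines that are both good and legal for $j$ satisfies
\[
|Good_{j}(\sigma,\tfrac{1}{\varphi})| \;\geq\; \varphi m - \bigl(m - |Good(\sigma,\tfrac{1}{\varphi})|\bigr) \;=\; |Good(\sigma,\tfrac{1}{\varphi})| - (1-\varphi)m \;>\; k.
\]
By the definition of $G_{\sigma,1/\varphi}$, the neighborhood of the single vertex $i$ in that graph is exactly $Good_{j}(\sigma,1/\varphi)$, so $|N(\{i\})|>k\geq |A|$, and therefore $|N(A)|\geq|N(\{i\})|\geq|A|$, which is what we needed.

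The main obstacle is conceptual rather than technical: one has to notice that we need not reason about the union of neighborhoods across all vertices of $A$ at all. Every \emph{individual} bad machine has, via its heaviest job, more than $k$ good legal neighbors, and since $|A|\leq k$ this trivially dominates $|A|$. The feasibility hypothesis $\varphi\geq L/T$ is used in exactly one place, to turn the $k\varphi T/L$ term from the previous lemma into the clean additive $+k$ that beats $|A|$; if one tried to prove this without that hypothesis the counting would fail.
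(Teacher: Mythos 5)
Your proof is correct and follows essentially the same route as the paper: apply the preceding lemma with $\gamma=1/\varphi$, lower-bound $|Good_{j}(\sigma,\frac{1}{\varphi})|$ for the heaviest job on a single bad machine by subtracting the at most $(1-\varphi)m$ infeasible machines, use $\varphi\geq L/T$ to get this above $k\geq|A|$, and note that $N(A)$ contains the neighborhood of any one vertex of $A$. As a minor bonus, your substitution $(1-\varphi)m+k\varphi\frac{T}{L}$ is the correct instantiation of the earlier lemma at $\gamma=1/\varphi$, where the paper's displayed chain contains a small typo ($(1-\frac{1}{\varphi})m$ in place of $(1-\varphi)m$).
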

\begin{proof}
Let $\left|Bad(\sigma,\frac{1}{\varphi})\right|=k$. Since the number
of illegal machines for any job $j$ is at most $(1-\varphi)m$, the number of good machines for job $j$ is at least the number of good machines minus its illegal machines (the worst case where all illegal machines for job $j$ are contained in $Good(\sigma,\frac{1}{\varphi})$).
Together with Lemma \ref{enough_good_machines} we have

\[
\begin{array}{ll}
\left|Good_{j}(\sigma,\frac{1}{\varphi})\right| & \geq  \left|Good(\sigma,\frac{1}{\varphi})\right|-(1-\varphi)m\\
 & >  \left(1-\frac{1}{\varphi}\right)\cdot m+\frac{k}{\left(\frac{1}{\varphi}\right)}\cdot\frac{T}{L}-(1-\varphi)m\\
  & = \varphi\cdot k\frac{T}{L}\\
  & \geq k
\end{array}
\]

The last inequality follows from the fact that $\varphi\geq \frac{L}{T}$.
Now, let $A\subseteq Bad(\alpha,\frac{1}{\varphi})$. Then $\left|A\right|\leq\left|Bad(\sigma,\frac{1}{\varphi})\right|=k$.

Recall that the set of neighbors of $A$ is the set of machines that
are good for all the jobs $j_{max}^{i}$, $i\in A$, i.e., $N\left(A\right)=\cup_{i\in A}Good_{j_{max}^{i}}(\sigma,\frac{1}{\varphi})\subseteq Good(\sigma,\frac{1}{\varphi})$.
Obviously $\left|N\left(A\right)\right|=\left|\cup_{i\in A}Good_{j_{max}^{i}}(\sigma,\frac{1}{\varphi})\right|\geq\left|Good_{j_{max}^{i}}(\sigma,\frac{1}{\varphi})\right|$ for some $i\in A$. It follows from the above that $\left|N\left(A\right)\right|\geq k$.

Since $\left|A\right|\leq k$ we have that $\left|N\left(A\right)\right|\geq\left|A\right|$.
\end{proof}

By Hall's Theorem \cite{H35}, there exist a perfect matching in
$G_{\sigma,\frac{1}{\varphi}}$ if and only if  for every $A\subseteq Bad(\sigma,\frac{1}{\varphi})$,
$\left|N\left(A\right)\right|\geq\left|A\right|.$ Thus, we have
\begin{corollary}
\label{lemma:hall}
There exists a perfect matching in $G_{\sigma,\frac{1}{\varphi}}$.
\end{corollary}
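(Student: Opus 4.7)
The plan is to invoke Hall's marriage theorem directly on the bipartite graph $G_{\sigma,\frac{1}{\varphi}}$, where the side to be saturated is $Bad(\sigma,\frac{1}{\varphi})$. Hall's theorem states that a bipartite graph $(X,Y;E)$ admits a matching saturating $X$ if and only if every subset $A\subseteq X$ has $|N(A)|\geq |A|$. Here we take $X=Bad(\sigma,\frac{1}{\varphi})$ and $Y=Good(\sigma,\frac{1}{\varphi})$, with edges defined as in the construction of $G_{\sigma,\frac{1}{\varphi}}$ (each bad machine $i$ is connected to the good machines that are legal for its longest job $j_{max}^{i}(\sigma)$).

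The Hall condition for this instantiation is exactly the statement of the preceding lemma: for every $A\subseteq Bad(\sigma,\frac{1}{\varphi})$, $|N(A)|\geq |A|$. Since that lemma is available under the standing assumption $\varphi\geq L/T$ (which is the hypothesis of Theorem~\ref{general main theorem}), the Hall condition holds unconditionally in our setting, so the theorem applies and yields a matching saturating $Bad(\sigma,\frac{1}{\varphi})$.

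Hence no additional combinatorial work is required beyond citing Hall's theorem and the previous lemma; the proof is a one-step appeal. There is no real obstacle here: the entire content of the corollary was already carried by the neighborhood-expansion lemma, and the corollary simply repackages it in the more convenient form of an existence statement about a perfect matching, which is what the subsequent balancing step of the algorithm will consume when redistributing the longest jobs from bad machines to good ones.
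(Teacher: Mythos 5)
Your proposal is correct and matches the paper's own argument exactly: the paper likewise derives the corollary as an immediate application of Hall's theorem, using the preceding lemma (which holds under $\varphi \geq \frac{L}{T}$) to supply the condition $|N(A)| \geq |A|$ for every $A \subseteq Bad(\sigma,\frac{1}{\varphi})$. Your added observation that the resulting matching saturates only the $Bad$ side (so ``perfect matching'' is shorthand for a $Bad$-saturating matching) is a fair clarification of the paper's terminology.
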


By the above discussion, we can modify the initial assignment $\sigma$, by finding a perfect
matching in $G_{\sigma,\frac{1}{\varphi}}$ and then transferring jobs
from bad machines to their matching good machines. We describe this formally in algorithm ${\cal A}_{UM}$.

\begin{algorithm}[H]
\caption{${\cal A}_{UM}$}
\begin{enumerate}
\item Use binary search to find the minimal $T$, such that $LP(T,T)$ is feasible. Next, with that $T$ fixed, search for the minimal $L$ such that $LP(T,L)$ is feasible.
\item Solve the linear relaxation $LP(T,L)$.
\item Round the solution to obtain an integral assignment $\sigma$ using a rounding technique as
given in Theorem \ref{initial assignment thm}. \label{alg:step3}
\item For every $t=1,...,\ell$, where $\ell$ is the number of distinct processing times of $\cal I$, guess that $T_{opt}\in [p_t,p_{t+1})$ and that $\varphi({\cal I})=\varphi_t$. 
\begin{enumerate}
\item If $\varphi_t < \frac{L}{T}$, continue.
\item Otherwise, construct the bipartite graph $G_{\sigma,\frac{1}{\varphi_t}}$
and find a perfect matching of size $\left|Bad(\sigma, \frac {1}{\varphi_t})\right|$, if one exists. If not, continue.
\item Obtain a resulting assignment $\sigma^{'}$ from $\sigma$ by transferring
the longest job, $j_{max}^{i}$ from each machine $i\in Bad(\sigma, \frac {1}{\varphi_t})$,
to its matching machine $i^{'}\in Good(\sigma, \frac {1}{\varphi_t})$.
\end{enumerate}
\comment{\item If no perfect matching was found for any $t=1,...,\ell$, then apply the $2$-approximation algorithm of \cite{LST90}, and return the resulting assignment. }
\item Return the assignment $\sigma^{'}$ with minimal makespan. 
\end{enumerate}
\end{algorithm}

\noindent
{\bf Proof of Theorem \ref{general main theorem}.}
We show that the assignment output by Algorithm ${\cal A}_{UM}$
satisfies the statement of the theorem. Consider a general instance $\cal I$.
By performing binary search on a feasible bounded region of the optimal makespan we can find the minimal $T$ for which $LP(T,T)$ is feasible, and then by performing binary search on a feasible bounded region of the optimal average machine load we can find the minimal $L$ for which $LP(T,L)$ is feasible. These integers satisfy that $T\leq T_{opt} $, $L\leq L_{opt}$ and $L\leq T$.

By Theorem \ref{initial assignment thm}, since $LP(T,L)$ is feasible then Step \ref{alg:step3} is guaranteed to generate an assignment $\sigma$ of  $L(\sigma)\leq L$ and $load_{\sigma}\leq T+max\{p_{ij}:\sigma(j)=i \}$, for all $i\in \cal M$.

Let $\varphi \geq\frac{L}{T}$ be the feasibility parameter of $\cal I$. Then for some $t=1,...,\ell$, where $\ell$ is the number of distinct processing times in $\cal I$, $\varphi_t=\varphi$. Then, by Corollary \ref{lemma:hall}, there exists a perfect matching in $G_{\sigma,\frac{1}{\varphi_t}}$ and we will find it in Step $4(b)$ in ${\cal A}_{UM}$.
Let $k=\left|Bad(\sigma, \frac{{1}}{\varphi_t})\right| $ and let $M=\left\{ (i_{b_{1}},i_{g_{1}}),...,(i_{b_{k}},i_{g_{k}})\right\} $ be a perfect matching in $G_{\sigma,\frac{1}{\varphi_t}}$.

For any machine $i=1,...,m$, $load_\sigma (i)\leq T+max\{p_{ij}:\sigma(j)=i\}$. Let $j_{max}^{i}$ be the largest job processed by $\sigma$ on machine $i$. Then, $p_{i,j_{max}^{i}}=max\{p_{ij}:\sigma(j)=i\}$, thus removing $j_{max}^{i}$ from machine $i$
guarantees that the new load of machine $i$ will be at most $T$.

As for the good machines, if $i$ is a good machine for job $j$ then
$p_{ij}\leq p_t$.  Therefore, transferring $j$ to $i$ will increase the
load of $i$ by at most $p_t$ which is at most $T_{opt}$ (since $T_{opt}\in [p_t,p_{t+1})$). Since the load of a good machine is at most $\frac{L}{\varphi_t}$, we have that after such a job transfer
the load will be at most $T_{opt}+\frac{L}{\varphi_t}$.

The load on the rest of the machines stays unchained, i.e., $load_{\sigma}(i)\leq T+\frac {L}{\varphi_t}$, for all $i\notin Bad(\sigma,\frac {1}{\varphi_t})\cup Good(\sigma,\frac {1}{\varphi_t})$.

Thus, by performing the large-jobs transfers for all pairs $(i_{b_{s}},i_{g_{s}})\in M$, $s=1,...,k$,
we obtain a new assignment $\sigma^{'}$, with  $load_{\sigma^{'}}(i)\leq max\{T, T_{opt}+\frac {L}{\varphi_t}, T+\frac {L}{\varphi_t} \}$ for all $i\in \cal M$, which is at most $T_{opt}+\frac{L_{opt}}{\varphi_t}$, since $L\leq L_{opt}$ and $T\leq T_{opt}$.
$\qed$

\begin{theorem}
The complexity of ${\cal A}_{UM}$ is  $O((nm)^{\frac{7}{2}}\cdot log^{2}(\sum_{(i,j)\in{\cal M}\times {\cal J} }p_{ij}))$.
\end{theorem}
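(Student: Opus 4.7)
The plan is to walk through the five steps of $\mathcal{A}_{UM}$, bound each one separately, and then argue that the cost of the binary search in Step 1 dominates the total. First I would observe that $T$ and $L$ are both integral and lie in $[0,\sum_{i,j}p_{ij}]$, so a standard binary search on each needs $O(\log(\sum_{i,j}p_{ij}))$ iterations; since we search first over $T$ and then over $L$, the total number of LP feasibility checks in Step~1 is $O(\log^{2}(\sum_{i,j}p_{ij}))$.

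Each such check, as well as the explicit solve in Step~2, is an LP with $nm$ variables and $n+m+1$ constraints. Using an interior-point method (e.g.\ Karmarkar's algorithm, or Renegar's method) one obtains a bound of $O((nm)^{7/2})$ arithmetic operations per LP; this is the source of the $(nm)^{7/2}$ factor in the statement. Multiplying by the $O(\log^{2}(\sum_{i,j}p_{ij}))$ calls from the binary search gives the claimed overall bound of $O((nm)^{7/2}\log^{2}(\sum_{i,j}p_{ij}))$ already for Steps~1--2.

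For Step~3 I would note that the rounding of Theorem~\ref{initial assignment thm} is implemented as a minimum-cost bipartite matching between $n$ job nodes and at most $n+m$ sub-machine nodes, with $O(nm)$ edges; this runs in time strictly less than the LP solve in Step~2 and therefore does not affect the bound. For Step~4 I would argue that there are at most $nm$ distinct processing times (so $\ell\leq nm$), and for each candidate $t$ we construct $G_{\sigma,1/\varphi_{t}}$ on $m$ vertices per side with at most $m^{2}$ edges and run Hopcroft--Karp in $O(m^{5/2})$ time; the total cost of Step~4 is thus $O(nm\cdot m^{5/2})=O(nm^{7/2})$, again dominated by the LP cost.

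The main obstacle, and the only step that requires some care, is making sure the LP feasibility tests used inside the binary search genuinely run in $O((nm)^{7/2})$ arithmetic operations with polynomially bounded bit complexity: one has to verify that the input numbers and the candidate values of $T,L$ have bit length polynomial in $\log(\sum_{i,j}p_{ij})$, so that the per-iteration interior-point cost remains $O((nm)^{7/2})$ after multiplying by the (absorbed) logarithmic bit-complexity factor. Once this is checked, summing the four contributions yields the stated complexity.
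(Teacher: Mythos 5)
Your final bound matches the paper's, but the way you derive the $\log^{2}$ factor contains a genuine accounting error, and the paper obtains that factor from a different decomposition. You claim that ``since we search first over $T$ and then over $L$, the total number of LP feasibility checks in Step~1 is $O(\log^{2}(\sum_{i,j}p_{ij}))$.'' This does not follow: the two binary searches in ${\cal A}_{UM}$ are \emph{sequential}, not nested --- the algorithm first finds the minimal $T$ with $LP(T,T)$ feasible, then fixes that $T$ and searches for the minimal $L$. Sequential searches add, so the number of LP calls is $O(\log(\sum_{i,j}p_{ij}))+O(\log(\sum_{i,j}p_{ij}))=O(\log(\sum_{i,j}p_{ij}))$, which is exactly what the paper states. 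The second logarithmic factor in the paper's bound comes instead from the cost of a \emph{single} LP solve: Karmarkar's algorithm runs in $O((nm)^{7/2}\cdot L)$ where $L$ is the bit length of the input, i.e.\ $O((nm)^{7/2}\cdot\log(\sum_{i,j}p_{ij}))$ per solve, and multiplying by the $O(\log(\sum_{i,j}p_{ij}))$ calls gives the stated $\log^{2}$.

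In your write-up these two mistakes compensate: you overcount the number of LP calls by a factor of $\log(\sum_{i,j}p_{ij})$ and simultaneously undercount the per-call cost by the same factor, asserting $O((nm)^{7/2})$ ``arithmetic operations'' with the bit-complexity factor vaguely ``absorbed.'' It cannot be absorbed into $(nm)^{7/2}$, which is independent of the magnitudes of the $p_{ij}$; and had your count of $O(\log^{2})$ calls been correct, combining it with the genuine per-call cost $O((nm)^{7/2}\log(\sum_{i,j}p_{ij}))$ would yield a $\log^{3}$ bound, exceeding the statement. So the derivation as written is not sound even though the answer is. The remaining bookkeeping (Step~3 as a min-cost bipartite matching dominated by the LP, Step~4 as $\ell\leq nm$ matching computations each polynomial in $m$) is consistent with the paper's treatment and is indeed dominated by Step~1.
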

\begin{proof}
We will show that Step 1 in $\cal A_{UM}$ is the bottle-neck of the algorithm. In this step we perform a binary search on the feasible regions of $T$ and $L$ while solving $LP(T,L)$.
Since the feasible region for both $T$ and $L$ is $[0,\sum_{(i,j)\in{\cal M}\times {\cal J} }p_{ij}]$, we solve the LP $O(log(\sum_{(i,j)\in{\cal M}\times {\cal J} }p_{ij}))$ times. Solving the LP can be done in time $O((nm)^{\frac{7}{2}}\cdot log(\sum_{(i,j)\in{\cal M}\times {\cal J} }p_{ij}))$  \cite{K84}, so overall this operation runs in $O((nm)^{\frac{7}{2}}\cdot log^{2}(\sum_{(i,j)\in{\cal M}\times {\cal J} }p_{ij}))$.

The number of vertices in the bipartite graph $G_{\sigma, \frac{1}{\varphi_t}}$ equals to the number of bad and good machines, which is at most $m$. From \ref{enough_good_machines}, the number of bad machines is at most $\frac{m}{\frac{1}{\varphi}+1}$, thus the number of edges in the bipartite graph is at most $\sum_{i\in Bad(\sigma, \frac{1}{\varphi})}\varphi m=\frac{\varphi}{\frac{1}{\varphi}+1}\leq \frac{\varphi}{2}\leq \frac{1}{2}$ (the last two inequalities are due to $\varphi\leq 1$). Therefore finding a perfect matching in $G_{\sigma, \frac{1}{\varphi_t}}$ can be done in time $O(\sqrt{|V_{\sigma, \frac{1}{\varphi_t}}|}\cdot |E_{\sigma, \frac{1}{\varphi_t}}|)=O(\sqrt{m})$ \cite{MV80}. In ${\cal A}_{UM}$, we find a perfect matching for every $t=1,..,\ell$, where $\ell\leq n\cdot m$ is the number of distinct processing times. Thus, the complexity of Step 4 sums to $O(nm\sqrt{m})$. 

Step 2 is done in $O((nm)^{\frac{7}{2}}\cdot log(\sum_{(i,j)\in{\cal M}\times {\cal J} }p_{ij}))$ time and Step 3 is done in $O(n^3(m+n)^3)$ \cite{EK72}.

Therefore, the overall complexity of the algorithm is $O((nm)^{\frac{7}{2}}\cdot log^{2}(\sum_{(i,j)\in{\cal M}\times {\cal J} }p_{ij}))$.

\end{proof}

\newpage

\section{A Better Bound for the Restricted Assignment Problem}
\label{sec:restricted}
In this section we consider the restricted version of our problem, where $p_{ij}\in\left\{ p_{j},\infty\right\}$, for each job $j\in \cal J$, and each machine $i\in \cal M$.  This subclass is NP-hard to approximate within a factor better than $\frac{3}{2}$, which is also the best known
lower bound for the general version \cite{LST90}. In the restricted version, any job $j$ with processing time $p_{ij}<\infty$ is feasible on machine $i$, since $p_j\leq T_{opt}$ for every $j\in \cal J$. Hence, the feasibility parameter of a restricted instance $\cal I$ is exactly $\varphi({\cal I})=\frac {min_{j\in {\cal J}}|\{i\in {\cal M}:p_{ij}<\infty\}|}{m}$, which can be computed efficiently. Fully-feasible restricted instances correspond to the identical machines instances, hence we do not consider especially fully-feasible instances in this case. Also, we say that an assignment $\sigma:{\cal M}\rightarrow {\cal J}$ is {\em feasible} if $p_{ij}<\infty$ for every machine $i$ and job $j$ such that $\sigma(j)=i$.

For this variant, we show that a better bound than in Theorem \ref{general main theorem} can be achieved by a much simpler and more efficient combinatorial algorithm, and for every feasibility parameter. Denote by $p_{max}=max_{j\in {\cal J}}p_j$ the largest processing time of some restricted instance ${\cal I}$. Gairing et al. \cite{GL+04} presented a $(2-\frac{1}{p_{max}})$-approximation algorithm for the restricted assignment problem. Using techniques from \cite{GL+04}, we obtain an approximation algorithm which yields an assignment of makespan at most $p_{max}+\frac{L_{opt}}{\varphi}$, where $\varphi$, is the feasibility parameter of the instance.

\paragraph{Overview of the Algorithm of Gairing et al.}
\label{sec:Gairing overview}

We describe below the procedure ${\cal UBF}$, used in \cite{GL+04}.
Let $\cal I$ be an instance for the restricted assignment problem. Let $\Delta$ be an integer that will be determined by binary search, to be a lower bound on $T_{opt}$.

Let $\sigma$ be a feasible assignment and let $G_\sigma=(W\cup V, E_\sigma)$ be a directed bipartite graph where $W=\left\{ w_{j}:j\in {\cal J}\right\}$ consists of the job nodes, and $V=\left\{ v_{i}:i\in {\cal M}\right\}$ consists of the machine nodes. For any job node $w_j$ and any machine node $v_i$, if $\sigma(j)=i$ there is an arc in $E_\sigma$ oriented from $v_i$ to $w_j$; if $\sigma(j)\neq i$ and $j$ is feasible on machine $i$, then there is an arc in $E_\sigma$ oriented from $w_j$ to $v_i$.

Given a feasible assignment $\sigma$, consider the partition of machines into three subsets: ${\cal M}^+(\sigma)$ (overloaded), ${\cal M}^-(\sigma)$ (underloaded), and ${\cal M}^{0}(\sigma)$ (all the remaining machines). A machine $i\in {\cal M}^{+}(\sigma)$ is overloaded if $load_\sigma(i)\geq  p_{max}+\Delta+1$. A machine $i\in {\cal M}^{-}(\sigma)$ is underloaded if $load_\sigma(i)\leq \Delta$.
The remaining machines, which are neither overloaded nor underloaded, form the set ${\cal M}^{0}(\sigma)={\cal M}\smallsetminus\left({\cal M}^{-}(\sigma)\bigcup {\cal M}^{+}(\sigma)\right)$.

The procedure ${\cal UBF} (\sigma, \Delta)$ starts with some initial feasible assignment of jobs to machines and iteratively improves the makespan until it obtains an assignment with makespan of $p_{max}+\Delta$, or declares that an assignment of makespan $\Delta$ does not exist.
In each iteration, the algorithm finds an augmenting path in $G_\sigma$, from an overloaded machine to an underloaded machine, and pushes jobs along this path, by performing a series of job reassignments between machines on that path. This results in balancing the load over the machines, i.e., reducing the load of the source that is an overloaded machine, and increasing the load of the destination that is an underloaded machine, while preserving the load of all other machines.
Figure \ref{augmenting} gives a pictorial example of this operation.

$\cal UBF$ terminates when there is no path from an overloaded machine to an underloaded machine in $G_\sigma$, and this occurs after $O(mS)$ steps, where $S=\sum_{j\in {\cal J}}|\left\{ i:p_{ij}<\infty\right\} |$. Let $\tau$ be the resulting assignment after ${\cal UBF} (\sigma, \Delta)$ terminates. Then, it is shown in \cite{GL+04} that if ${\cal M}^+(\tau)\neq \emptyset$, then $T_{opt}>\Delta$. 

The procedure $\cal UBF$ combined with a binary search over the possible range of values for $\Delta$, is used to identify the smallest $\Delta$ such that a call to ${\cal UBF} (\sigma, \Delta)$ returns an assignment $\tau$ with ${\cal M}^+(\tau)= \emptyset$.
This yields the approximation ratio of $2-\frac{1}{p_{max}}$.

The running time of the approximation algorithm is factored by a value that is logarithmic in the size of the range in which we search for $\Delta$, e.g., $[0,\sum_{j\in {\cal J}}p_j]$.
Thus, the  algorithm of \cite{GL+04} computes an assignment having makespan within a factor of $2-\frac{1}{p_{max}}$ from the optimal in time $O(mSlogP)$, where $P=\sum_{j\in {\cal J}}p_{j}$.

\begin{figure}[H]
\begin{center}
{\epsfig{file=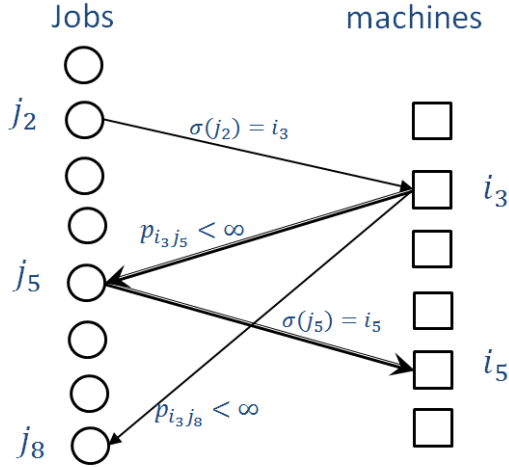,width=4.5in}}
\caption{\small The bipartite graph $G_\sigma$. By changing the orientation of the path $i_{3}\rightarrow j_{5}\rightarrow i_{5}$, we remove job $j_{5}$ from $i_{5}$ and schedule it on machine $i_{3}$. }
\label{augmenting}
\end{center}
\end{figure}

\subsection{Approximation Algorithm}
\label{sec:Approximation algorithm}

Let $\cal I$ be an instance of the restricted assignment problem.
The feasibility parameter of $\cal I$ is exactly $\varphi = \varphi({\cal I})=\frac{min_{j}\left|\left\{ i:p_{ij}<\infty\right\} \right|}{m}$.
Let $\sigma$ be an initial feasible assignment for $\cal I$, and consider the bipartite graph $G_\sigma$.

Note that any feasible assignment $\sigma$ for an instance of the restricted assignment problem has an average machine load 
\begin{equation}
\label{avg_machin_load}
\quad L(\sigma)=\frac{1}{m}\displaystyle{\sum_{j\in {\cal J}}p_{j}}
\end{equation}
Thus, $L_{opt}=\frac{1}{m}\sum_{j\in {\cal J}}p_{j}$ for any instance $\cal I$.

\comment{
We re-define underloaded, overloaded and intermediate machines.
\begin{definition}
\label{def:machinesdef}
$  $
  
${\cal M}^{-}\left(\sigma\right)=\left\{ i:load_\sigma(i)\leq \frac{L_{opt}}{\varphi} \right\}$

${\cal M}^{0}\left(\sigma\right)=\left\{ i:\frac{L_{opt}}{\varphi} <load_\sigma(i)\leq p_{max}+\frac{L_{opt}}{\varphi}\right\}$

${\cal M}^{+}\left(\sigma\right)=\left\{ i:load_\sigma(i)> p_{max}+\frac{L_{opt}}{\varphi}\right\} $
\end{definition}
}

Our algorithm proceeds as follows.
\begin{algorithm}[H]
\caption{${\cal A}_{RES}({\cal I})$}
\label{Alg_restricted}
\begin{enumerate}
\item Fix $\Delta=\lfloor \frac{L_{opt}}{\varphi({\cal I})}\rfloor$.
\item Apply ${\cal UBF}(\sigma,\Delta)$ and return the resulting assignment.
\end{enumerate}
\end{algorithm}

\comment{Throughout the execution of $\cal UBF$, augmenting paths from machines
in  ${\cal M}^{+}$ to ${\cal M}^{-}$ are found iteratively. Along each of these paths, the algorithm reassigns jobs between machines. Applying the algorithm results in reducing the makespan and balancing the loads.
The algorithm continues as long as there exists a path from ${\cal M}^{+}$ to ${\cal M}^{-}$.}

\begin{theorem}
\label{thm:approx_ratio_restricted}
For any instance $\cal I$ with feasibility parameter $\varphi({\cal I}) $ \comment{$\geq \frac {\sum_{j\in {\cal J}}p_j}{m\cdot p_{max}}$}, Algorithm \ref{Alg_restricted} yields a schedule of makespan at most $p_{max}+\frac{L_{opt}}{\varphi ({\cal I})}=p_{max}+\frac {\sum_{j\in {\cal J}}p_j}{m\cdot \varphi ({\cal I})}$, in time $O(m^2n)$.
\end{theorem}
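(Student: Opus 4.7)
The plan is to show that the assignment $\tau$ produced by ${\cal UBF}(\sigma,\Delta)$ with $\Delta=\lfloor L_{opt}/\varphi\rfloor$ satisfies ${\cal M}^+(\tau)=\emptyset$. Once this is established, every machine has $load_\tau(i)\leq p_{max}+\Delta\leq p_{max}+L_{opt}/\varphi$, which is exactly the claimed bound. By Equation~(\ref{avg_machin_load}), $L_{opt}=\frac{1}{m}\sum_{j\in {\cal J}}p_j$ for restricted instances, so both $L_{opt}$ and $\varphi$ are computable in $O(nm)$, and the initial feasible $\sigma$ is obtained greedily in $O(nm)$ by placing each job on any machine where it is feasible.

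Suppose for contradiction that ${\cal M}^+(\tau)\neq\emptyset$, and let $R\subseteq{\cal M}$ be the set of machines reachable in $G_\tau$ from ${\cal M}^+(\tau)$. Since ${\cal UBF}$ terminates only when no overloaded machine can reach an underloaded one, $R\cap{\cal M}^-(\tau)=\emptyset$, hence every $i\in R$ satisfies $load_\tau(i)\geq\Delta+1$ by integrality of loads. The key structural claim is that every job $j$ with $\tau(j)\in R$ is feasible \emph{only} on machines in $R$: if $\tau(j)=i\in R$ and $j$ were feasible on some $i'\notin R$, then $G_\tau$ would contain the arcs $v_i\to w_j\to v_{i'}$ (the first because $\tau(j)=i$, the second because $\tau(j)\neq i'$ and $j$ is feasible on $i'$), so $i'$ would be reachable from ${\cal M}^+(\tau)$, contradicting $i'\notin R$. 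Since each job has at least $\varphi m$ feasible machines, this confinement forces $|R|\geq\varphi m$.

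To close the argument, we count loads. At least one machine in $R$ lies in ${\cal M}^+(\tau)$ and carries load $\geq p_{max}+\Delta+1$, while the remaining $|R|-1$ machines in $R$ each carry load $\geq\Delta+1$, so
\begin{equation*}
\sum_{i\in R}load_\tau(i)\;\geq\; p_{max}+|R|(\Delta+1)\;\geq\; p_{max}+\varphi m(\Delta+1).
\end{equation*}
This sum is bounded above by the total load $\sum_{i\in{\cal M}}load_\tau(i)=\sum_{j\in{\cal J}}p_j=mL_{opt}$. Because $\Delta=\lfloor L_{opt}/\varphi\rfloor$ forces $\Delta+1>L_{opt}/\varphi$, we obtain $mL_{opt}>p_{max}+mL_{opt}$, i.e.\ $p_{max}<0$, a contradiction. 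For the running time, ${\cal UBF}$ performs $O(mS)$ reassignments with $S\leq nm$, so the algorithm runs in $O(m^2 n)$, with the $O(nm)$ preprocessing dominated. I expect the most delicate point to be the reachability-plus-confinement step: one must use both arc orientations in $G_\tau$ together with the termination criterion of ${\cal UBF}$ to conclude that every job assigned into $R$ has \emph{all} its feasible hosts in $R$, and it is precisely this confinement that converts the feasibility parameter $\varphi$ into the lower bound $|R|\geq\varphi m$ driving the load count.
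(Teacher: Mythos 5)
Your proposal is correct and follows essentially the same route as the paper: assume ${\cal M}^{+}(\tau)\neq\emptyset$ at termination, use the arc structure of $G_\tau$ and the feasibility parameter to show the reachable set has size at least $\varphi m$ with every machine in it loaded above $\Delta$, and derive a contradiction with the fixed total load $mL_{opt}$. Your ``confinement'' phrasing is a mild repackaging of the paper's observation that all $\geq\varphi m$ feasible machines of a job on the overloaded machine are reachable, and the load count and running-time analysis match the paper's.
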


\begin{corollary}
Let ${\cal I}$, be an instances with feasibility parameter $\varphi({\cal I})=\frac {d}{m}$, for some $d=1,...,m$. Then Algorithm \ref{Alg_restricted} yields a schedule of makespan at most $p_{max}+\frac {\sum_{j\in {\cal J}}p_j}{d}$. Therefore, for instances with sufficiently large $d$ (which is equivalent to a large feasibility parameter), namely $d$ such that $d\cdot p_{max}>\sum_{j\in {\cal J}}p_j$, we get a schedule with a makespan better than twice the optimal makespan. Let $\frac{3}{2}<r<2$, then for instances with feasibility parameter $\frac {d}{m}$ such that $d\cdot (r-1)p_{max}>\sum_{j\in {\cal J}}p_j$, Algorithm \ref{Alg_restricted} is a $r$-approximation algorithm.
\end{corollary}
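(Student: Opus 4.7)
The plan is to derive all three parts of the corollary as straightforward consequences of Theorem~\ref{thm:approx_ratio_restricted}, combined with two elementary lower bounds on the optimum makespan for restricted assignment instances: $T_{opt}\geq p_{max}$ (some machine must handle the longest job and that job has only one possible processing time) and $T_{opt}\geq L_{opt}=\frac{1}{m}\sum_{j\in{\cal J}}p_j$ (the optimal makespan cannot beat the average load, which by equation~(\ref{avg_machin_load}) equals $\frac{1}{m}\sum_j p_j$ for every feasible assignment in this model).

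First I would establish the makespan bound $p_{max}+\frac{\sum_{j\in{\cal J}}p_j}{d}$. By Theorem~\ref{thm:approx_ratio_restricted}, Algorithm~\ref{Alg_restricted} produces a schedule of makespan at most $p_{max}+\frac{L_{opt}}{\varphi({\cal I})}$; substituting $\varphi({\cal I})=d/m$ and $L_{opt}=\frac{1}{m}\sum_{j\in{\cal J}}p_j$ gives exactly $p_{max}+\frac{\sum_{j\in{\cal J}}p_j}{d}$.

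Next I would prove the ``better than twice optimal'' claim. Under the hypothesis $d\cdot p_{max}>\sum_{j\in{\cal J}}p_j$ we get $\frac{\sum_{j\in{\cal J}}p_j}{d}<p_{max}\leq T_{opt}$, so the makespan of the produced schedule is strictly less than $p_{max}+T_{opt}\leq 2T_{opt}$, where the last step uses $p_{max}\leq T_{opt}$.

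For the third part, assume $\frac{3}{2}<r<2$ and $d\cdot(r-1)p_{max}>\sum_{j\in{\cal J}}p_j$. Then $\frac{\sum_{j\in{\cal J}}p_j}{d}<(r-1)p_{max}\leq(r-1)T_{opt}$, hence the produced makespan is at most $p_{max}+(r-1)T_{opt}\leq T_{opt}+(r-1)T_{opt}=r\cdot T_{opt}$, giving an $r$-approximation. There is no real obstacle in this proof; it is pure bookkeeping. The only point that deserves to be stated explicitly is the lower bound $T_{opt}\geq p_{max}$, which is immediate for the restricted assignment model since $p_{\max}$ is a processing time actually incurred by at least one job on its assigned machine in every feasible schedule.
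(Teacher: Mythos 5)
Your proposal is correct and matches the paper's (implicit) argument: the paper offers no separate proof of this corollary, treating it as a direct substitution of $\varphi({\cal I})=d/m$ and $L_{opt}=\frac{1}{m}\sum_{j\in{\cal J}}p_j$ into the bound of Theorem~\ref{thm:approx_ratio_restricted}, combined with the standard lower bound $T_{opt}\geq p_{max}$ (which the paper itself notes when observing $p_j\leq T_{opt}$ for all $j$ in the restricted model). Your write-up simply makes these omitted bookkeeping steps explicit, and all three parts check out.
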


The correctness of Theorem \ref{thm:approx_ratio_restricted} follows from the next lemma.

\begin{lemma}
\label{lemma:UBF}
Let $\cal I$ be an instance with feasibility parameter $\varphi({\cal I})$. Let $\sigma$ be an initial feasible assignment for $\cal I$. Then ${\cal UBF}(\sigma,\lfloor \frac{L_{opt}}{\varphi({\cal I})}\rfloor)$ terminates with ${\cal M}^{+}=\emptyset$.
\end{lemma}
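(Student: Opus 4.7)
The plan is to prove the lemma by contradiction, so suppose ${\cal UBF}(\sigma,\Delta)$ returns an assignment $\tau$ with some overloaded machine $i^\star \in {\cal M}^+(\tau)$, even though $\Delta = \lfloor L_{opt}/\varphi({\cal I})\rfloor$. First I would exploit the termination condition of $\cal UBF$: since no augmenting path exists from an overloaded machine to an underloaded one in $G_\tau$, let $R \subseteq {\cal M}$ consist of the machines whose nodes are reachable from $v_{i^\star}$ in $G_\tau$, and let $J \subseteq {\cal J}$ be the set of jobs currently assigned by $\tau$ to machines in $R$. Two structural claims follow from the arc orientations of $G_\tau$: (i) every machine in $R$ has load strictly greater than $\Delta$, since otherwise a reachable underloaded machine would complete an augmenting path; and (ii) every job $j \in J$ is feasible only on machines in $R$, because any other feasible machine $i' \notin R$ would be reached via the arc $w_j \to v_{i'}$, contradicting $i' \notin R$.

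Claim (ii) is where the feasibility parameter enters: each $j \in J$ has at least $\varphi({\cal I}) \cdot m$ feasible machines, all of which lie in $R$, so $|R| \geq \varphi({\cal I}) \cdot m$. The key counting step is then to sum the loads on $R$. By integrality of processing times, each machine in $R \setminus \{i^\star\}$ contributes at least $\Delta + 1$, and $i^\star$ itself contributes at least $p_{max} + \Delta + 1$. Since the total load on $R$ equals $\sum_{j \in J} p_j$, we obtain
\[
\sum_{j \in J} p_j \;\geq\; p_{max} + |R|(\Delta+1) \;\geq\; p_{max} + \varphi({\cal I}) \cdot m \cdot (\Delta+1).
\]

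Finally, I would substitute $\Delta = \lfloor L_{opt}/\varphi({\cal I})\rfloor$, which gives $\varphi({\cal I})(\Delta + 1) > L_{opt}$, and hence $\varphi({\cal I}) \cdot m \cdot (\Delta + 1) > m L_{opt} = \sum_{j \in {\cal J}} p_j$ by~\eqref{avg_machin_load}. Combined with the trivial bound $\sum_{j \in J} p_j \leq \sum_{j \in {\cal J}} p_j$, this yields $\sum_{j \in {\cal J}} p_j > p_{max} + \sum_{j \in {\cal J}} p_j$, forcing $p_{max} < 0$, a contradiction. Hence no such overloaded $i^\star$ exists, and ${\cal M}^+(\tau) = \emptyset$.

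I expect the main obstacle to be pinning down claim (ii): one must carefully use the arc-set definition of $G_\sigma$ from the overview of $\cal UBF$ to show that if $j \in J$ is feasible on some $i' \notin R$, then $v_{i'}$ is reachable from $v_{i^\star}$ (via $v_{\tau(j)} \to w_j \to v_{i'}$ when $\tau(j) \neq i'$, and trivially when $\tau(j) = i'$), contradicting $i' \notin R$. A secondary technical point is handling the floor correctly so that $\varphi({\cal I})(\Delta + 1) > L_{opt}$ is strict, ensuring the counting contradiction carries through.
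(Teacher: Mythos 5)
Your proof is correct and follows essentially the same route as the paper: assume an overloaded machine survives, take the set of machines reachable from it in $G_\tau$, use the feasibility parameter to show this set has size at least $\varphi({\cal I})\cdot m$, note every reachable machine has load at least $\Delta+1$ (no underloaded machine is reachable at termination), and derive a contradiction with the fixed average load $L_{opt}=\frac{1}{m}\sum_{j}p_j$. The only cosmetic differences are that you restrict the load sum to the reachable set (the paper sums over all machines and compares to $mL_{opt}$) and you prove the slightly stronger claim that \emph{every} job on a reachable machine has all its feasible machines reachable, where the paper only needs this for one job on the overloaded machine.
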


\begin{proof}

Let $\tau$ be the assignment when ${\cal UBF}(\sigma,\lfloor \frac{L_{opt}}{\varphi(\cal I)}\rfloor )$ terminates. At this point,there is no path from a machine in ${\cal M}^{+}(\tau)$ to a machine in ${\cal M}^{-}(\tau)$ in the graph $G_\tau$.
Assume that ${\cal M}^{+}(\tau)\neq\emptyset$. Then there exists a machine $i'$ with $load_\tau(i')\geq  p_{max}+\lfloor \frac{L_{opt}}{\varphi} \rfloor$. Denote by ${\cal M}_{i'}$ the set of machines $i$ such that $v_i$ is reachable from $v_{i'}$, then ${\cal M}_{i'}=\{ i\in M:\mbox{there is a directed path in \ensuremath{G_\tau} from \ensuremath{v_{i'}} to \ensuremath{v_{i}} }\}$.

Obviously, there exists a job $j'$, such that $\tau(j')=i'$.
Thus, there is an edge $\left(v_{i'},u_{j'}\right)$ in $G_\tau$. Since $\varphi$ is the feasibility parameter of $\cal I$, there exists at least $\varphi m$ machines on which $u_{j'}$ is feasible, i.e., there exists an edge from $u_{j'}$ to each one of these machines. By appending each of these edges to $\left(v_{i'},u_{j'}\right)$ we get a directed path from $v_{i'}$ to at least $\varphi m$ machines (including $v_{i'}$). Therefore,
we conclude that $\left|{\cal M}_{i'}\right|\geq\varphi m$.

Now, we compute a lower bound on the average machine load for $\tau$, by summing the loads of all the machines $i\in {\cal M}$. We have that $i'\in {\cal M}^{+}(\tau)$, thus $load_\tau(i')> p_{max}+\frac{L_{opt}}{\varphi}$. Also, there is no path from $v_{i'}$ to machines in ${\cal M}^{-}(\tau)$, and therefore
${\cal M}_{i'}\cap {\cal M}^{-}(\tau)=\emptyset$. Thus, for all $i\in {\cal M}_{i'}$ it holds that 
$load_\tau(i)\geq \lfloor \frac{L_{opt}}{\varphi}\rfloor+1$. Hence,

\[
\begin{array}{ll}
\displaystyle{\sum_{i\in {\cal M}}load_\tau(i)} & \geq  load_\tau(i')+\displaystyle{\sum_{i\in {\cal M}_{i'},i\neq i'} load_\tau(i)}\\
 & > p_{max}+\lfloor \frac{L_{opt}}{\varphi}\rfloor+1+(|{\cal M}_{i'}|-1)\cdot (\lfloor \frac{L_{opt}}{\varphi}\rfloor+1)\\
  & = p_{max}+|{\cal M}_{i'}|(\lfloor \frac{L_{opt}}{\varphi}\rfloor+1)\\
  & \geq p_{max}+|{\cal M}_{i'}|((\frac{L_{opt}}{\varphi}-1)+1)\\
  & \geq p_{max}+m\varphi (\frac{L_{opt}}{\varphi})\\
  & = p_{max}+mL_{opt}\\
  & > mL_{opt} \\
\end{array}
\]

We have shown that the sum of loads of the assignment $\tau$ is greater than $mL_{opt}$. Hence, the average load for $\tau$ is greater than $L_{opt}$. A contradiction. By \ref{avg_machin_load}, the average load of any schedule, and in particular $\tau$, cannot exceed $L_{opt}$.
\end{proof}

\noindent
{\bf Proof of Theorem \ref{thm:approx_ratio_restricted}.}
Let $\cal I$ be an instance with feasibility parameter $\varphi$\comment{$\geq \frac {\sum_{j\in {\cal J}}p_j}{m\cdot p_{max}}$}.
Let $\sigma$ be some initial feasible assignment. 
By Lemma \ref{lemma:UBF}, when ${\cal UBF}(\sigma,\lfloor \frac{L_{opt}}{\varphi}\rfloor)$ terminates, ${\cal M}^{+}=\emptyset$.
Therefore, the maximum load of the resulting assignment is at most
 $p_{max}+\lfloor \frac{L_{opt}}{\varphi}\rfloor=p_{max}+\lfloor \frac{\sum_{j\in {\cal J}}p_j}{\varphi}\rfloor$. 
The running time of the algorithm equals to the running time of one call to $\cal UBF$, which is $O(mS)$, where $S=\sum_{j\in {\cal J}}|\left\{ i:p_{ij}<\infty\right\} |$. Since $S\leq m\cdot n$ for every instance $\cal I$, the algorithm terminates after $O(m^2n)$ steps.
\hfill  \qed

Note that our algorithm has better running time than the algorithm of \cite{GL+04}, since we use a single call to the procedure $\cal UBF$, in contrast to the $(2-\frac{1}{p_{max}})$-approximation algorithm of  \cite{GL+04}, which uses binary search to find the best value for $\Delta$, resulting in an overall running time of $O(mSlogP)$, where $P=\sum_{j\in {\cal J}}p_{j}$ is the sum of processing times of all jobs.

\chapter{Fixed-Parameter Algorithms for Scheduling on Unrelated Machines}
\section{Preliminaries}
Some problems can be solved exactly, or approximately, by algorithms that are exponential only in the size of a fixed parameter while polynomial in the size of the input. Such an (approximation) algorithm is called a {\em fixed-parameter tractable (FPT)} (approximation) algorithm, because the problem can be (approximately) solved efficiently for small values of the fixed parameter.

Problems in which some parameter $k$ is fixed are called parameterized problems. A parameterized problem that allows for such an FPT algorithm is said to be a {\em fixed-parameter tractable} problem and belongs to the class FPT.

We give some definitions formalizing this concept.
\begin{definition}
A problem is said to be FPT if it can be solved by an algorithm $\cal A $ that runs in time $f(k)\cdot poly(|{\cal I}|)$, for every instance $\cal I$ with parameter $k$, and where $f$ is a function independent of $|{\cal I}|$. The algorithm $\cal A$ is called an FPT algorithm.
\end{definition}

We can similarly define FPT-approximation algorithms.

\begin{definition}
A problem is said to have an $r$-FPT approximation algorithm, if there exists an $r$-approximation algorithm $\cal A$ to the problem, that runs in time $f(k)\cdot poly(|{\cal I}|)$, for every instance $\cal I$ with parameter $k$, and where $f$ is a function independent of $|{\cal I}|$.
\end{definition}

\begin{definition}
A family $\{{\cal A}_\epsilon:\epsilon>0\}$, where ${\cal A}_\epsilon$ is a $(1+\epsilon)$-FPT approximation algorithm for all $\epsilon>0$, is called a parametrized approximation scheme. 
\end{definition}

\newpage
\section{Parametrized Approximation Scheme for Scheduling on Unrelated Machines}
Consider the problem of minimizing the makespan on unrelated machines, i.e., scheduling a set $\cal J$ of $n$ jobs, $j=1,..,n$, on a set $\cal M$ of $m$
unrelated parallel machines, $i\in \cal M$, where each job $j$ has a
processing time of $p_{ij}$ on machine $i$ and the objective is to find a schedule with minimum makespan.

Our parameter $k$ of a scheduling instance is the number of machine-job
pairs $(i,j)\in {\cal M}\times {\cal J}$ such that $p_{ij}>\epsilon\cdot T$ for some $\epsilon\in(0,1]$ and a feasible value $T$. 
We will show that by rounding a solution to the MILP formulation of
the problem where the $k$ variables $x_{ij}$ such that $p_{ij}>\epsilon\cdot T$
are integral, we can get an assignment with makespan at most $(1+\epsilon)$ 
the optimal makespan.

Given a positive integer $T$, let $x_{ij}$ be an indicator to the assignment of job $j$ on
machine $i$. Consider the following mixed integer linear program.

\begin{center}
\begin{minipage}{0.95\textwidth}
\begin{eqnarray*}
MILP(\epsilon,T): 
& \displaystyle{\sum_{i=1}^{m}x_{ij}=1}, \hbox{~~~~~~~~~~ for \ensuremath{j=1,...,n} ~~~~~~~~ }  \\
& \displaystyle{\sum_{j=1}^{n}p_{ij}x_{ij}\leq T}, \hbox{~~~~~ for \ensuremath{i=1,...,m} ~~~~~~~~ } \\
& ~~~~~x_{ij}\geq0, \hbox{~~~~~~~~~~~~~~~~ for  \ensuremath{i,j} such that \ensuremath{p_{ij}\leq \epsilon \cdot T  } } \\
& ~~~~~x_{ij}\in\{0,1\}
 , \hbox{~~~~~~~~~~~ for  \ensuremath{i,j} such that \ensuremath{ p_{ij}> \epsilon \cdot T  } } \\
& ~x_{ij}=0, ~if ~ p_{ij}>T, \hbox{ for \ensuremath{i=1,...m\,,j=1,...,n}} \\
&&
\end{eqnarray*}
\end{minipage}
\end{center}

One can see that integer solutions to the above MILP are in one to one correspondence with
assignments $\sigma:{\cal J}\rightarrow{\cal M}$ of makespan at most T, and that any feasible solution to $MILP(\epsilon, T)$ has the property that the variables $x_{ij}$, such that $p_{ij}>\epsilon T$, are integral. 
\\

\begin{theorem}
Let $\cal I$ be a scheduling instance, let $\epsilon \in [0,1]$ and let $T$ be a positive
integer such that $MILP(T,\epsilon)$ is feasible. Then an assignment of makespan at
most $(1+\epsilon)T$ can be found in time $2^k\cdot poly(|{\cal I}|)$,
where $k=|\{(i,j):p_{ij}>\epsilon\cdot T\}|$.
\end{theorem}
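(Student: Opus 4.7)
The natural approach is to enumerate over the $k$ integer variables and reduce each branch to a rounding problem on a pure linear program over the remaining ``small'' variables. Formally, the plan is to iterate over all $2^k$ possible $0/1$ assignments to the variables $x_{ij}$ with $p_{ij}>\epsilon T$. For each such assignment, substitute these fixed values into $MILP(\epsilon,T)$; what remains is a linear program (call it $LP'$) in the variables $x_{ij}$ with $p_{ij}\leq \epsilon T$, together with the residual load budget on each machine, $T-\ell_i$, where $\ell_i$ denotes the contribution of the already-fixed large jobs. Since $MILP(\epsilon,T)$ is feasible by hypothesis, at least one of the $2^k$ branches yields a feasible $LP'$.

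Next, for any branch whose $LP'$ is feasible, I would solve $LP'$ (in polynomial time) and round its fractional solution using exactly the LST-style technique underlying Theorem \ref{initial assignment thm}: construct the bipartite graph between small jobs and sub-machines obtained by partitioning each $i$ into $\lceil \sum_{j}x_{ij}\rceil$ bins filled in non-increasing order of $p_{ij}$, and extract an integer matching that assigns every small job. The key structural fact one needs is the standard argument that the load contributed by the rounded small-job assignment on machine $i$ is at most $(T-\ell_i)+\max\{p_{ij}:\sigma(j)=i,\; p_{ij}\leq \epsilon T\}$, where the max is taken only over the small jobs because those are the only ones being rounded here.

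Combining the two contributions, the total load on any machine $i$ in the final schedule is at most
\begin{equation*}
\ell_i + (T-\ell_i) + \max\{p_{ij}:\sigma(j)=i,\; p_{ij}\leq \epsilon T\} \;\leq\; T + \epsilon T \;=\; (1+\epsilon)T,
\end{equation*}
which gives the required makespan bound. Returning the best schedule over all $2^k$ branches, and noting that the branch corresponding to the MILP-feasible integer values is guaranteed to succeed, establishes the theorem. The running time is $2^k$ times the cost of one LP solve plus one rounding, both polynomial in $|\mathcal{I}|$, for a total of $2^k\cdot \mathrm{poly}(|\mathcal{I}|)$.

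The main obstacle to watch is ensuring the rounding bound cleanly separates the two contributions: one must verify that the additive slack from rounding is controlled only by the \emph{small} processing times (those with $p_{ij}\leq \epsilon T$), not by the large jobs pinned by the branching step. This is automatic because the large jobs are already integrally placed and are not part of the graph used in the rounding, so they neither create additional bins nor contribute to the ``maximum-edge'' term in the load bound. Everything else is a direct adaptation of the rounding analysis already invoked in Theorem \ref{initial assignment thm}.
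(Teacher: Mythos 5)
Your proposal is correct and follows essentially the same route as the paper: brute-force enumeration of the $2^k$ binary values for the variables with $p_{ij}>\epsilon T$, solving the residual LP, and applying the LST/Shmoys--Tardos rounding so that the only additive slack on each machine is the largest \emph{small} processing time, which is at most $\epsilon T$. The only (cosmetic) difference is that you round a separate LP on the small jobs with residual capacities $T-\ell_i$, whereas the paper rounds the full mixed solution and observes that the large jobs occupy the first $l_i$ sub-machine slots entirely; the resulting load bound $\ell_i+(T-\ell_i)+\epsilon T$ is identical.
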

\begin{proof}
Let $x_{ij}$, $i=1,...,m$, $j=1,...,n$ be a solution to $MILP(\epsilon,T)$.

Then for every pair $(i,j)\in {\cal M}\times {\cal J}$ such that $p_{ij}>\epsilon\cdot T$,
the corresponding $x_{ij}$ is either 0 or 1. Let $l_{i}$ be the
number of jobs $j$ for which $p_{ij}>\epsilon\cdot T$ on machine
$i$ and such that the corresponding variable $x_{ij}$ equals to $1$. Recall the rounding technique as in Theorem \ref{initial assignment thm}. Then the first $l_{i}$
slots of machine $i$ are full and therefore the capacity left in
the slots for the other jobs, with $p_{ij}\leq\epsilon\cdot T$ is
$T-\sum_{j:p_{ij}>\epsilon T}p_{ij}$. From Theorem \ref{initial assignment thm}, we get that the jobs that are assigned by the rounding to the remaining slots $l_{i}+1,...,k_{i}$ (where $k_{i}=\left\lceil \sum_{j\in {\cal J}}x_{ij}\right\rceil $) contribute at most $p_{max}^{l_{i}+1}+T-\sum_{j:p_{ij}>\epsilon T}p_{ij}$ to the load of machine $i$.
Therefore the load of machine $i$ is at most $\sum_{j:p_{ij}>\epsilon T}p_{ij}+\left(p_{max}^{l_{i}+1}+T-\sum_{j:p_{ij}>\epsilon T}p_{ij}\right)=p_{max}^{l_{i}+1}+T$.
Now, $p_{max}^{l_{i}+1}\leq\epsilon\cdot T$, and therefore the total
load of machine $i$ is at most $(1+\epsilon)T$.

The algorithm runs in time $2^k\cdot poly(|{\cal I}|)$, since the bottle-neck of the algorithm is obtaining a feasible solution to the MILP. This is done by brute-force search of at most $2^k$ possible binary values for all the variables $x_{ij}$ corresponding to $(i,j)\in S_\epsilon$, by fixing them and finding a solution for the resulting LP which can be done in polynomial-time in $|{\cal I}|$ \cite{K84}.

\end{proof}

\section{An FPT Algorithm for Graph-Balancing}
In this section we consider the special case of the restricted assignment problem, where each job can be assigned to at most two machines, with the same processing time on either machine. For this special case, Ebenlendr et al.\cite{EKS08} presented a 1.75-approximation algorithm for the minimum makespan problem.

An instance of the scheduling problem can be modeled as an undirected, multi-graph, with $m$ nodes (a node for each machine) and $n$ edges, such that every job $j$ is associated with an edge of weight $p_j$ between both machine nodes on which it can be processed, or a loop of weight $p_j$ on the only machine node on which it can be processed. Minimizing the makespan is then equivalent to the problem of {\em graph-balancing}, i.e., of orienting each edge, such that the maximum weighted in-degree over all nodes is minimized. 
We exploit this graph representation of the problem to develop an FPT algorithm for this case.

\begin{definition}
The maximum degree $r$ of an undirected graph $G=(V,E)$ is the maximum number of neighbors of any vertex, i.e., $r=max_{v\in V}|N(v)|$, where $N(v)=\{e\in E:vu,\mbox {for some } u\in V\}$.
\end{definition}

We give below an FPT algorithm for graph balancing, where the parameters are the width of the tree decomposition of the graph, and the maximum degree of the graph.

Note that we cannot hope for obtaining an FPT algorithm with the fixed parameter being only the maximum degree of the graph, as from the hardness proof for general instance \cite{EKS08}, if follows that he problem is hard to approximate within a factor less than $\frac{3}{2}$ even on bounded degree graphs, i.e., when the maximum degree is some constant.

Intuitively, a tree decomposition represents the vertices of a given graph $G$ as subtrees of a tree, in such a way that vertices in the given graph are adjacent only when the corresponding subtrees intersect.
\begin{definition}
\label{tree_decomposition_def}
Given a graph $G = (V, E)$, a tree decomposition is a pair $\left\langle X, T\right\rangle $, where $X =\{X_1,..., X_t\}$ is a family of subsets of $V$ (also called bags), and $T$ is a tree whose nodes are the subsets $X_i$, satisfying the following properties:
\begin{enumerate}
\item The union of all sets $X_i$ equals $V$. That is, each graph vertex is associated with at least one tree node.
\item For every edge $(v, w)$ in the graph, there is a subset $X_i$ that contains both $v$ and $w$. That is, vertices are adjacent in the graph only when the corresponding subtrees have a node in common.
\item If $X_i$ and $X_j$ both contain a vertex $v$, then all nodes $X_k$ of the tree in the (unique) path between $X_i$ and $X_j$ contain $v$ as well. That is, the nodes associated with vertex $v$ form a connected subset of $T$. It can be stated equivalently that if $X_i$, $X_j$ and $X_k$ are nodes, and $X_k$ is on the path from $X_i$ to $X_j$, then $X_i \cap X_j \subseteq X_k$.
\end{enumerate}
\end{definition}

The width of a tree decomposition is the size of its largest set $X_i$ minus one. The treewidth $tw(G)$ of a graph $G$ is the minimum width among all possible tree decompositions of $G$.

It is observed in \cite{B88,N06}, that many algorithmic problems that are NP-complete for arbitrary graphs can be solved efficiently by dynamic programming for graphs of bounded treewidth, using the tree decompositions of these graphs. We show that the problem of graph balancing can be solved efficiently by dynamic programming for graphs of bounded treewidth and bounded vertex degree (the maximum number of neighbors of a vertex).

\begin{theorem}
Let $G$  be an undirected edge weighted multi-graph, with given tree decomposition ${\left\langle \{X=\{X_1,...,X_t\},T\right\rangle }$ of width $w$. Then the graph balancing problem parameterized by the graph treewidth and the maximum degree, $r$, is solvable in time $O(2^{2w\cdot r}\cdot wr\cdot |X|$.
\end{theorem}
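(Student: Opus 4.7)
The plan is to solve graph-balancing by bottom-up dynamic programming on the given tree decomposition. First I would convert $\langle X,T\rangle$ into a \emph{nice} tree decomposition of the same width $w$, with the usual node types \emph{leaf}, \emph{introduce-vertex}, \emph{forget-vertex}, \emph{introduce-edge}, and \emph{join}, and assign each edge of $G$ to a unique introduce-edge node whose bag contains both its endpoints. The bounded-degree condition enters via the key observation that the in-weight at any fixed vertex $v$ is a subset-sum of the weights of the at most $r$ edges incident to $v$, and hence takes at most $2^r$ distinct values across all possible orientations of those edges.

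For a bag $X_i$ of size at most $w+1$, the DP state is a pair $(\mathbf{d},M)$, where $\mathbf{d}$ assigns to each $v\in X_i$ the partial in-weight $d_v\ge 0$ contributed by edges already processed in the subtree rooted at $X_i$, and $M$ is the maximum in-weight ever incurred by a vertex that has already been forgotten from the bag. By the key observation, each $d_v$ takes at most $2^r$ values, so the number of reachable $\mathbf{d}$ per bag is at most $(2^r)^{w+1}\le 2^{2wr}$, and for each reachable $\mathbf{d}$ I retain only the smallest achievable $M$. The transitions are routine: introduce-vertex sets $d_v=0$; introduce-edge $e=(u,v)$ branches on the two orientations, bumping $d_u$ or $d_v$ by the weight $p_e$; forget-vertex $v$ updates $M \gets \max(M,d_v)$ and projects out the $v$-coordinate (merging coinciding states by taking the smaller $M$); a join node combines two child states that agree on $X_i$ by summing the in-weight vectors coordinatewise and taking $\max(M^L,M^R)$. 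The optimum makespan is the minimum $M$ stored at the root.

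The main obstacle is the correctness of the join step. Because every edge is assigned to a unique introduce-edge node, the sets of edges processed in the two subtrees of a join are disjoint; hence $d_v^L+d_v^R$ is again a subset-sum over $v$'s at most $r$ incident edges and therefore still lies in the same $2^r$-element set, so the state space does not blow up when moving from the children to the parent. One also has to ensure that every edge incident to a vertex $v$ is introduced before $v$ is forgotten, which can be arranged in the nice-decomposition construction. Processing each of the at most $2^{2wr}$ states at each of the $O(|X|)$ bag nodes costs $O(wr)$ for updating the $(w+1)$-entry vector $\mathbf{d}$ and the scalar $M$, giving the claimed $O(2^{2wr}\cdot wr\cdot |X|)$ running time.
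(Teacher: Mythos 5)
Your proposal is correct and follows essentially the same strategy as the paper: a bottom-up dynamic program over the tree decomposition whose per-bag state space is bounded by $2^{O(wr)}$ thanks to the degree bound, yielding the claimed $O(2^{2wr}\cdot wr\cdot |X|)$ time. The only differences are cosmetic bookkeeping choices -- you index states by partial in-weight vectors (each coordinate a subset-sum over at most $r$ incident edges, hence at most $2^r$ values) on a nice decomposition with unique introduce-edge nodes, whereas the paper indexes rows by explicit orientations of the edges inside each bag and corrects for double-counted shared edges by an explicit subtraction -- and your version makes the dominance rule and the avoidance of double-counting more explicit.
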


\begin{proof} 
We show how the problem can be solved using dynamic programing on the tree decomposition of $G$. \comment{We hold for every vertex $i$, a table of $2^{m_i}$ rows, and $n_i+1$ columns, where $n_i$ is the number of nodes in the bag $X_i$ and. }
The idea is to examine for each bag $X_i\in X$ all the possibilities of feasible assignments to the machines represented by the vertices in the bag $X_i$, and the jobs represented by the edges of the subgraph $G[X_{i}]$, induced by the vertices in bag $X_i$. This information is stored in a table $B_i$ corresponding to each bag $X_i$.
The tables will be updated in a post-order manner, starting at the leaves of the tree decomposition and ending at the root.
During this update process, it is guaranteed that local solutions for each subgraph corresponding to a bag of the tree decomposition are combined into a global optimal solution for the overall graph $G$.

The algorithmic details are as follows.

\paragraph{Step 0:} Set an initial orientation on the edges of the graph $G$. For each bag $X_i=\{v_{1}^{i},...,v_{n_{i}}^{i}\}$, $|X_i|=n_i$, let $E[X_{i}]=\{e_{1}^{i},...,e_{m_{i}}^{i}\}$, $|E_{G[X_{i}]}|=m_i$ . Compute the following table of $2^{m_i}$ rows, and $m_i+n_i+1$ columns: \\
\begin{center}
$B_{i}=\begin{array}{cccc||ccccc}
e_{1}^{i} & e_{2}^{i} & \cdots & e_{m_{i}}^{i} & l_i(v_{1}^{i}) & l_i(v_{2}^{i}) & \cdots & l_i(v_{n_{i}}^{i}) & T_{i}()\\
\hline
0 & 0 & \cdots & 0\\
0 & 0 & \cdots & 1\\
0 & 0 & \cdots & 1\\
\vdots & \vdots & \vdots & \vdots\\
1 & 1 & \cdots & 0\\
1 & 1 & \cdots & 1
\end{array}
$
\end{center}
 
The table consists of $2^{n_ir}$ rows and $n_ir+1$ columns. Each row represents an assignment to the sub-problem induced by the subgraph $G[X_{i}]$. Each row is a 0-1 sequence of length $n_ir$ that determines which of the edges in $G[X_{i}]$ is directed oppositely than its direction in the given initial orientation (1 if it is the opposite orientation, and 0 otherwise).  Formally, we can describe an assignment by a mapping 
\begin{equation*}
\quad A_{i}:E[X_{i}]=\{e_{i_{1}},...,e_{i_{m_{i}}}\}\rightarrow\{0,1\}.
\end{equation*}
Given the mapping $A_i$, let $In(v_j)$ denote the set of incoming edges for $v_j$, for $v_j \in X_i$.
The last column, $T_i()$, is the makespan of the assignment $A_i$.

\paragraph{Step 1:} Table initialization.

For every table $B_i$ and assignment $A_i:E[X_{i}]\rightarrow\{0,1\}$ set
\begin{equation*}
\quad l_{i}(v_j^i)(j)=\sum_{e\in In(v_j)}c(e)
\end{equation*} 

for every $v_{i_{1}},....,v_{i_{n_{i}}}\in X_i$, where $c(e)$ is the cost of edge $e$, which corresponds to the processing time of the job associated with the edge $e$.

\paragraph{Step 2:} Dynamic programming.

We now go through the tree decomposition of $G$, from the leaves to the root, and compare the corresponding tables against each other.
Let $i$ be the parent node of $j$. We show how the table for $X_i$ can be updated by the table for $X_j$.
Assume that $X_i=\{u_{1},..,u_{s},v_{1}^{i},...,v_{n_{i}-s}^{i}\}$ and $X_j=\{u_{1},..,u_{s},v_{1}^{j},...,v_{n_{j}-s}^{j}\}$, and that $E[X_i]=\{e_{1},..,e_{t},e_{1}^{i},...,e_{m_{i}-t}^{i}\}$ and $E[X_j]=\{e_{1},..,e_{t},e_{1}^{j},...,e_{m_{j}-t}^{j}\}$.

For each assignment $A:\{e_1,...,e_t\}\rightarrow\{0,1\}$, and each extension $A_i:E[X_i]\rightarrow\{0,1\}$ of $A$, we consider an assignment $A_j$, which is an extension of $A$, that minimizes the new makespan, i.e., for each assignment $A_j$, which is an extension of $A$, we calculate 
\begin{equation*}
\quad l_i^j(u_k)=l_j(u_k)+l_i(u_k)-\sum_{e\in E[X_i]\cap E[X_j]\wedge e\in In(u_k)} c(e),
\end{equation*} 
for $k=1,...,s$. Then, we calculate the makespan 
\begin{equation*}
\quad T_i^j=max\left[T_{j}(A_{j}),max_{1\leq k\leq s}l_{i}^{j}(u_{k}),max_{1\leq k\leq n_{i}-s}l_{i}(v_{k}^{i})\right]
\end{equation*} 
We update the entry for $A_i$ with $l_i^j$ and $T_i^{\hat j}$ for ${\hat j} = argmin (T_i^j)$.

\comment{
For each possible assignment of the jobs in the intersection of $i$ and $j$, and each assignment $A_j$ that is an extension of that assignment, calculate the makespan by adding the contribution of the jobs that appear in $G[X_i]$ but not in $G[X_i]$ and calculating the new makespan}

The values of $l_i(v)$, $v\in X_i$, and $T_i()$ grows by the minimal value for the makespan of the assignment problem induced by all the vertices contained in the subtree rooted at node $i$. If $i$ has several children $j_i,...,j_l$, then table $B_i $ is updated successively against all tables $B_{j_1},...,B_{j_l}$ in the same way. All this is repeated until the root node is finally updated.

\paragraph{Step 3:} Construction of a minimum makespan assignment.

The length of a minimum makespan assignment is derived from the minimal entry of the last column, $T_i()$, of the root node table. The assignment of the corresponding row shows where to assign the jobs represented by the edges in the subgraph induced by the vertices of the root bag. By recording in Step 2 how the respective minimum of each bag was determined by its children, one can easily determine the assignment of all edges in the graph.

This concludes the description of the dynamic programming algorithm. It remains to show its correctness and running time.
  
\begin{enumerate}
\item The first and second conditions in Definition \ref{tree_decomposition_def}, namely $V=\cup_{X} X_i$ and $\forall e\in E \;  \exists X_i\in X : e\in E[X_i]$, guarantee that every machine and job in the instance is considered through the computation.
\item The third condition in Definition \ref{tree_decomposition_def} guarantees the consistency of the dynamic programming. If a vertex $v\in V$ occurs in two different bags $X_{i_1}$ and $X_{i_2}$, then it is guaranteed that for the computed minimum makespan assignment only one set of jobs can be scheduled on the machine associated with that vertex $v$.
\end{enumerate}

As for the running time of the algorithm, for each edge $(X_i,X_j)$ in the tree decomposition of $G$, and for each of $2^{m_i}$ assignments $A_i$, we go over all the assignments $A_j$ that agree with $A_i$ on the edges in $E[X_i]\cap E[X_j]$ (at most $2^{m_j}$), and do a computation of time $O(n_i+m_i)$. This results in complexity of $O(2^{m_i+m_j})\cdot (n_i+m_i)\cdot |X|$. Since $m_k\leq n_k\cdot r $ and $n_k\leq w$ for all $X_k\in X$, we have that the complexity of the algorithm is $O(2^{2wr}\cdot wr\cdot |X|)$.
\end{proof}

\chapter{Reoptimization Algorithms for Scheduling Problems}

\section{Preliminaries}
Let $\Pi_{ID}$ and $\Pi_{UN}$ denote the makespan minimization problems on
identical and uniform machines, respectively. In the \emph{reoptimization} model developed in \cite{STT12}, we consider instances of the scheduling
problem that can change dynamically over time. Our goal is to compute
assignments within some guaranteed approximation for the new problem
instances, derived from the previous instances. Since the transition
from one assignment to another incurs some cost (for example, the
cost of pausing the execution of a process on one machine and resuming its execution on another),
an additional goal is to have the solution for the new instance {\em close} to
the original one (under a certain distance measure).

Let ${\cal I}_{0}=({\cal M}_{0},{\cal J}_{0}$) be an instance of jobs and machines. Let $m_0=|{\cal M}_{0}|$ and $n_0 =|{\cal J}_{0}|$ and let $\sigma_{0}:{\cal J}_{0}\rightarrow{\cal M}_{0}$
be some initial assignment for ${\cal I}_{0}$. We denote by ${\cal I}=({\cal M},{\cal J})$ a new instance derived from ${\cal I}_{0}$ by an admissible operation, e.g.,
addition or removal of jobs and/or machines. For any job $j\in{\cal J}$
and a feasible assignment $\sigma:{\cal J}\rightarrow{\cal M}$, we are
given the transition cost of $j$ when moving from the initial assignment
$\sigma_{0}$ to $\sigma$. We denote this transition cost by $c_{\sigma_{0}}(j,\sigma)$.
The goal is to find an optimal assignment for $\cal I$, for which the total
transition cost, given by $\sum_{j\in{\cal J}}c_{\sigma_{0}}(j,\sigma)$, is minimized.

Recall that, given an optimization problem $\Pi$, we denote by $R(\Pi)$ the reoptimization version of $\Pi$.

\begin{definition}
An algorithm ${\cal A}$ yields an $(r,\rho)$-reapproximation for $R(\Pi_{ID})$ ($R(\Pi_{UN}$)), for
$r, \rho \geq1$, if for any instance $\cal I$ for $\Pi_{ID}$ ($\Pi_{UN}$), ${\cal A}$ outputs an assignment of makespan at most $\rho$ times the minimal makespan for $\cal I$, and of total
transition cost at most $r$ times the minimal transition cost to an optimal assignment for $\cal I$.
\end{definition}

We consider below the case where transition costs can take values in $\{0, 1 \}$. 
In particular, job $j \in {\cJ}$ incurs a unit transition cost either if $(i)$ $j \in {\cJz}$ and is moved to a different machine in the schedule for $I$, or $(ii)$ $j \in {\cJ} \setminus {\cJz}$, i.e., $j$ is assigned to a machine for the first time
in the schedule for $\cal I$. Otherwise, the transition cost for job $j$ is equal to $0$.
Formally, $c_{\sigma_{0}}(j,\sigma)=1$ if $j\notin{\cal J}_{0}$,
or if $j\in{\cal J}_{0}$ and $\sigma_{0}(j)\neq\sigma(j)$; otherwise, $c_{\sigma_{0}}(j,\sigma)=0$.

\begin{definition}
A polynomial time reapproximation scheme (PTRS) for $R\left(\Pi\right)$ is an algorithm that, given the inputs ${\cal I}_0$ and $ \cal I$ for $R\left(\Pi\right)$ and parameters $\eps_{1},\eps_{2}\geq 0$,
yields a $\left(1+\eps_{1},1+\eps_{2}\right)$-reapproximation
for $R\left(\Pi\right)$, in time polynomial in $|{\cal I}_0|$ and $|\cal I|$.
\end{definition}

\comment{Our main result is a $(1,1+\eps)$-reapproximation algorithms
for the  reoptimization  version of the problem of scheduling on identical machines, and on uniform
machines, both with $\{0,1\}$ transition costs.
}

\section{A $(1,1+\epsilon)$-Reapproximation Algorithm for Makespan Minimization on Identical Machines}

We present below a reapproximation algorithm, ${\cal A}_{ID}$, for the problem of minimizing the makespan on identical machines.
The algorithm uses a {\em relaxed} packing of items in bins, where the items correspond to jobs, and the bins represent the set of machines.

\begin{definition}
Given a set of bins, each of capacity $K > 0$, and a set of items packed in the bins, we say that the
packing is
\emph{$\eps$- relaxed}, for some $\eps >0$, if
 the total size of items assigned to each bin is at most $(1+\eps)K$.
\end{definition}

\subsection{Algorithm}
Our algorithm for reoptimizaing makespan minimization on identical machines accepts as input
the instances ${\cal I}_0$ and $\cal I$, the initial assignment of jobs to the machines, $\sigma_0$, and an error parameter $\epsilon >0$. The algorithm proceeds as follows.

We apply a $(1+\epsilon)$-approximation algorithm \cite{H96,AA+98} on the new instance to obtain a solution of makespan $T\leq (1+\epsilon)C^{*}_{max}$. Then, we split our instance into large and small jobs, round down the large jobs processing times (to have a polynomial-size collection of feasible configurations of large jobs on the machines), such that the load of each machine does not exceed $T$. Then, we iterate on this collection in order to find the configuration that minimizes the transition cost from the original solution. We prove that after we inflate the rounded jobs to their original processing times, and greedily assign all the small jobs within the configuration, the resulting makespan is at most $(1+\epsilon)C^{*}_{max}$.  
\\

We give below a detailed description of our algorithm, ${\cal A}_{ID}$.  Let $C_{max}^*({\cal I})$ denote the minimum makespan for an instance $\cal I$. For simplicity of the presentation, for the case where $m < m_0$, we assume w.l.o.g. that the omitted machines are $m+1, m+2, \ldots , m_0$.

\begin{algorithm}[H]
\caption{${\cal A}_{ID}({\cal I}_0,{\cal I},\sigma_0)$}
\begin{enumerate}
\item[1.] \label{step:Ams_1}
Let $\epsilon_{0}=\frac{\epsilon}{4}$. Use a PTAS for makespan minimization on identical machines to find $T \leq (1+\eps_0) C_{max}^*( {\cal I})$.

\item[2.] \label{step:Ams_2} Define $\alpha_{j}=\frac{p_j}{T}$ for all $j\in{\cal J}$, and represent
each machine as a bin of unit capacity. Consider the jobs as items whose
sizes are $\alpha_j\in\left(0,1\right]$.

\item[3.] \label{step:Ams_3} An item $j \in {\cJ}$ is {\em small} if it has a size at most $\epsilon_0$; otherwise, item $j$ is {\em large}.

\item[4.] \label{step:Ams_4} Round down the sizes of the large items to the nearest multiple
of $\epsilon_0^{2}$. Denote the rounded sizes  $\bar{\alpha_j}$, for every large item $j$.

\item[5.] \label{step:Ams_5} For any feasible assignment of rounded large items in the $m$
bins, given by the configuration ${\cal C=}\{C^1, \ldots ,C^m \}$, do:

\begin{enumerate}
\item[(i)] Let $\ell=\max \{ m_0,m \} $. Construct a complete bipartite graph $G=(U,V,E)$, in which
$V = \{ 1, \ldots , \ell \}$, and $U = \{C^1,C^2, \ldots,C^\ell \}$.
Each vertex $i \in V$ corresponds to the initial configuration of machine i, given by
$C_0^i=\{j \in{\cJz}:\sigma_0(j)=i\}$, for $1 \leq i \leq m_0$; if
$m_0 <m$ , set $C_0^i=\emptyset$ for all $m_0+1 \leq i \leq \ell$.
If $m_0 \geq m$, set $C^i=\emptyset$ for all $m+1 \leq i \leq \ell$.
Define a cost on the edges $(i,C^k)$, for all $1 \leq i,k \leq \ell$, as follows.

\begin{enumerate}
\item[(a)] Add the cost of large items that appear in $C^k$ but not in the
initial configuration $C_0^i$.
\item[(b)] Add to $C^k$ all the small items that appear in $C_0^i$ but not in $C^k$ and then omit the largest small items until the total size of $C^k$ does not exceed $1$. Add the cost of the omitted items.
For an empty configuration $C^k$, the cost of the edge $(i,C_k)$, for
all $1 \leq i \leq \ell$ is equal to $0$.
\end{enumerate}
\item[(ii)] Find a minimum cost perfect matching in the bipartite graph.
\item[(iii)] Add to the solution the omitted small items using First-Fit.
\end{enumerate}
\item[6.] \label{step:Ams_6} Choose the solution of minimum cost, and return the corresponding schedule of the jobs on the machines.
\end{enumerate}
\end{algorithm}

\subsection{Analysis}

\begin{theorem}
\label{thm: Ams_PTRS}
For any $\epsilon>0$, Algorithm ${\cal A}_{ID}$
yields in polynomial time a $(1,1+\epsilon)$-reapproximation for $R\left(\Pi_{ID}\right)$,
\end{theorem}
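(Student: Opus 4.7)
The proof plan is to show that at least one configuration enumerated in Step~5 of $\cA_{ID}$ yields a schedule of transition cost at most the optimum $r^*$ and makespan at most $(1+\eps)C^*_{\max}$. Set $\eps_0 = \eps/4$; Step~1's PTAS returns $T$ with $C^*_{\max}\le T\le (1+\eps_0)C^*_{\max}$. After scaling sizes by $T$, each large job has $\alpha_j>\eps_0$, so at most $1/\eps_0$ large items fit per bin, and rounding down to multiples of $\eps_0^2$ leaves $O(1/\eps_0^2)$ distinct large sizes. A standard multiset count then shows that for fixed $\eps$ there are only polynomially many placements of rounded large items across the $m$ identical bins, so Step~5 enumerates them all in polynomial time.

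For the transition cost, I would exhibit a specific configuration whose cost matches $r^*$. Let $\sigma^*$ be an optimal schedule (makespan $C^*_{\max}$) of minimum transition cost $r^*$, and let $\bar C^*$ be the configuration obtained from $\sigma^*$ by rounding each large item. Because rounding only decreases sizes, every bin satisfies $\bar C^{*i}_{\mathrm{load}}\le C^*_{\max}/T\le 1$, so $\bar C^*$ is among the enumerated configurations. Consider the matching that sends machine $i$ to $\bar C^{*i}$. Its large-item cost equals the number of large jobs $\sigma^*$ reassigns (Step~5(i)(a)). For Step~5(i)(b), let $B_i=\sigma^*|_i\cap\mathrm{small}$; since $\sigma^*$ fits $B_i$ on $i$ alongside the unrounded large items, $\sum_{j\in B_i}\alpha_j\le 1-\bar C^{*i}_{\mathrm{load}}$, so omitting exactly the set $(C_0^i\cap\mathrm{small})\setminus B_i$ is a feasible omission. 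The ``omit largest first'' rule minimizes the number of items omitted among all feasible omissions, so it removes at most $|(C_0^i\cap\mathrm{small})\setminus B_i|$ items. Summing over $i$ (together with the unavoidable unit charges for jobs in $\cJ\setminus\cJz$) yields a matching cost at most $r^*$, and the minimum-cost matching chosen by the algorithm is no worse.

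For the makespan, after Step~5(ii) each bin's scaled load with rounded large sizes is at most $1$, so at most $T$ unscaled; replacing rounded large sizes by originals contributes at most $(1/\eps_0)\cdot\eps_0^2=\eps_0$ in scaled units, so every bin's load is at most $(1+\eps_0)T$ before Step~5(iii). I would run First-Fit with target capacity $(1+2\eps_0)T$: if some omitted item of size at most $\eps_0 T$ failed to fit, every bin's load would exceed $(1+\eps_0)T$, making the total placed load strictly greater than $mT$, which contradicts $\sum_j p_j\le m\cdot C^*_{\max}\le mT$. Hence the final makespan is at most $(1+2\eps_0)T\le(1+2\eps_0)(1+\eps_0)C^*_{\max}\le(1+\eps)C^*_{\max}$ for $\eps_0=\eps/4$, and the polynomial running time of each step (PTAS, enumeration, min-cost matching, First-Fit) is routine. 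The main obstacle I expect is the small-item bookkeeping: one must check carefully that the greedy ``omit-largest'' rule in Step~5(i)(b) is never more expensive than the omission induced by $\sigma^*$, and that the First-Fit step in Step~5(iii) only charges a unit cost consistent with a small job actually being moved from its original machine.
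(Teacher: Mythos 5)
Your proposal is correct and follows essentially the same route as the paper's proof: enumerate the polynomially many configurations of rounded large items, compare the min-cost matching against the configuration induced by an optimal schedule (using the optimality of the omit-largest rule, which you argue directly where the paper uses a contradiction via a chain of inequalities), and establish that First-Fit places all remaining small items by the same volume/averaging argument based on $\sum_j \alpha_j \leq m$. The only differences are cosmetic: you carry the rounding inflation and the First-Fit slack as separate $\eps_0$ terms with a $(1+2\eps_0)T$ capacity, whereas the paper folds them into a single $\eps_0$-relaxed packing; both yield the $(1+\eps)$ bound.
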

We prove the theorem using the next lemmas.

\begin{lemma}
\label{sum of pieces lemma}
Let ${\cal I}=({\cal M},{\cal J})$ be an instance of
$\Pi_{ID}$, for which the minimum makespan is $C_{max}^{*}$, and let $T\geq C_{max}^{*}$.
Let $\alpha_{j}=\frac{p_{j}}{T}$, for all $j\in{\cal J}$; then, $\sum_{j\in{\cal J}}\alpha_{j}\leq m$.
\end{lemma}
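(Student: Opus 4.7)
The plan is to prove this by a simple averaging argument. The key observation is that in any schedule for $\mathcal{I}$, each job $j$ is assigned to exactly one machine, so the sum of all machine loads equals the total processing time $\sum_{j \in \mathcal{J}} p_j$. Since the minimum makespan is $C_{max}^*$, there exists some assignment whose largest machine load equals $C_{max}^*$, so all $m$ loads are bounded by $C_{max}^*$, and hence their sum is at most $m \cdot C_{max}^*$.

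First I would fix an optimal assignment $\sigma^*$ witnessing the makespan $C_{max}^*$, and write $\sum_{j \in \mathcal{J}} p_j = \sum_{i \in \mathcal{M}} \mathrm{load}_{\sigma^*}(i) \leq m \cdot C_{max}^*$. Then, using the assumption $T \geq C_{max}^*$, I conclude $\sum_{j \in \mathcal{J}} p_j \leq m \cdot T$. Dividing both sides by $T > 0$ gives
\[
\sum_{j \in \mathcal{J}} \alpha_j \;=\; \sum_{j \in \mathcal{J}} \frac{p_j}{T} \;\leq\; m,
\]
which is the desired inequality.

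There is no real obstacle here; the only point worth a line of care is that $T > 0$ (so that division by $T$ is valid), which holds in any nontrivial instance since $C_{max}^* > 0$ whenever $\mathcal{J} \neq \emptyset$ (and the inequality is trivial when $\mathcal{J} = \emptyset$). This lemma is essentially the statement that the average machine load of an optimal schedule never exceeds the makespan, rescaled by $T$, and it will be used later to argue that the large-item configurations considered by Algorithm $\mathcal{A}_{ID}$ have total rounded size at most $m$ and therefore can be enumerated in polynomial time.
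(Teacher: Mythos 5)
Your proof is correct and follows essentially the same route as the paper: the paper's one-line proof invokes the standard bound $C_{max}^{*}\geq\frac{\sum_{j\in{\cal J}}p_{j}}{m}$ and then divides by $T\geq C_{max}^{*}$, which is exactly the averaging argument you spell out via an optimal assignment. Your version just makes the justification of that averaging bound explicit.
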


\begin{proof}
We note that the minimum makespan satisfies $C_{max}^{*}\geq\frac{\sum_{j\in{\cal J}}p_{j}}{m}$.
Since $T\geq C_{max}^{*}$, we have that $\alpha_{j}=\frac{p_{j}}{T}\leq\frac{p_{j}}{C_{max}^{*}}$
for all $j\in{\cal J}$, therefore,
$$
m \geq \sum_{j\in{\cal J}} \frac{p_j}{T}=\sum_{j\in{\cal J}} \alpha_{j}.
$$
\end{proof}

\begin{lemma}
\label{epsilon relaxed lemma}
Let $\widetilde{{\cal C}}$ be a feasible assignment of rounded large items on the $m$
machines, given by the configuration $C=\{C^1, \ldots ,C^m \}$, to which we add in each
bin the small items that were not omitted in Step 5 of ${\cal A}_{ID}$. Then $\widetilde{{\cal C}}$ can be expanded in polynomial time to an $\epsilon_{0}$-relaxed packing of all items in the input $\cal I$.
\end{lemma}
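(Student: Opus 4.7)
The plan is to expand $\widetilde{\cal C}$ in two stages: first, inflate every rounded large item back to its true size $\alpha_j$; then insert the small items that were omitted in Step~5 via First-Fit. I would bound the per-bin load after each stage to show it never exceeds $1+\epsilon_0$.

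For the inflation stage, I would use the fact that after Step~5 every bin of $\widetilde{\cal C}$ has load at most $1$ (counting rounded large items and the small items that were kept). Since any large item has $\alpha_j > \epsilon_0$ and $\bar\alpha_j = \lfloor \alpha_j/\epsilon_0^2 \rfloor \cdot \epsilon_0^2$, assuming without loss of generality that $1/\epsilon_0$ is an integer, $\bar\alpha_j \geq \epsilon_0$. Thus each bin contains at most $1/\epsilon_0$ large items, and since each rounded large item differs from its original size by strictly less than $\epsilon_0^2$, the per-bin load grows by strictly less than $(1/\epsilon_0)\cdot \epsilon_0^2 = \epsilon_0$ during inflation. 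Every bin therefore still has load strictly less than $1+\epsilon_0$.

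For the First-Fit stage, I would argue that every omitted small item can be inserted without violating the $1+\epsilon_0$ cap. Suppose for contradiction that at some point a small item $j$ with $\alpha_j \leq \epsilon_0$ cannot be placed in any bin. Then every bin already carries load strictly greater than $(1+\epsilon_0)-\alpha_j \geq 1$, so the total load currently packed exceeds $m$. But by Lemma~\ref{sum of pieces lemma} the sum of all item sizes satisfies $\sum_{j\in{\cal J}}\alpha_j \leq m$, which upper-bounds the currently packed load, a contradiction. Hence every small item is placed, and the resulting assignment is an $\epsilon_0$-relaxed packing of all items. Polynomial running time is immediate: inflation is linear in the input, and First-Fit runs in $O(nm)$ time.

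The main obstacle is that the inflation stage already consumes essentially all of the $\epsilon_0$ slack per bin, so there is no local room left for the omitted small items in the worst case. I would get around this by invoking the global sum bound from Lemma~\ref{sum of pieces lemma} rather than relying on any per-bin slack; this is precisely what forces First-Fit to succeed overall even though an individual bin might already be close to capacity.
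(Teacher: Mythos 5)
Your proposal is correct and follows essentially the same route as the paper: bound the inflation loss per bin by (number of large items per bin) $\times\;\epsilon_0^2 \le \epsilon_0$, then invoke the global bound $\sum_j \alpha_j \le m$ from Lemma~\ref{sum of pieces lemma} to show the relaxed First-Fit places every remaining small item. Your contradiction in the First-Fit stage (every bin would have load $>1$, so the packed total would exceed $m$) is a cleaner packaging of the same inequality chain the paper writes out term by term, and your explicit handling of why $\bar\alpha_j\ge\epsilon_0$ (taking $1/\epsilon_0$ integral) patches a detail the paper glosses over.
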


\begin{proof}
Let ${C}= \{ C^{1},C^{2},...,C^{m} \}$ be a feasible configuration of the large
rounded items, and let $S^{1},S^{2},...,S^{m}$ be the subsets of small items added in Step 5 to
the bins, to form $\widetilde{{\cal C}}$. Then $\widetilde{{\cal C}}$ yields a feasible packing,
i.e., for all $ 1 \leq i \leq m$,
$$\sum_{j\in C^{i}}\bar{\alpha}_{j}+\underset{j\in S^{i}}{\sum}\alpha_{j}\leq 1.
$$
Now, since $\bar{\alpha}_{j}\geq\eps_{0}$ for all $j\in C^{i}$, the number of large items in bin $i$
is bounded by  $\left\lfloor 1/\eps_{0}\right\rfloor $.
Also, since $\alpha_{j}-\bar{\alpha}_{j}\leq\eps_{0}^{2}$,
for all $j\in L$, we have that
$$
\underset{j\in C^{i}\cup S^{i}}{\sum}\alpha_{j}
\leq \sum_{j\in C^i}(\bar{\alpha}_{j}+\eps_{0}^{2})+\underset{j\in S^{i}}{\sum}\alpha_{j}
\leq\underset{j\in C^{i}}{\sum}\bar{\alpha}_{j}+
\left\lfloor \frac{1}{\eps_{0}}\right\rfloor \eps_{0}^{2}+\underset{j\in S^{i}}{\sum}\alpha_{j}\leq1+\eps_{0}.
$$
Hence, the packing of $\widetilde{{\cal C}}$ is $\eps_{0}$-relaxed. Now, we
show that the packing remains $\eps_{0}$-relaxed after
we add the small items using First-Fit. Let $\a_{1},\a_{2},...\a_{t}$ denote
the sizes of the items packed in $\widetilde{{\cal C}}$, and let $\a_{t+1},...,\a_{n}$ be the sizes of the small unpacked
items. By Lemma \ref{sum of pieces lemma}, we have
\begin{equation}
\label{eq:no_1}
\sum_{i=1}^{n}\a_{j}\leq m
\end{equation}
We apply First-Fit in the following {\em relaxed} manner. Consider the next item
in the unpacked list.  Starting from bin $1$, we seek the first bin in which the item can be added,
such that the overall size of the items packed in this bin is at most
$1+\eps_{0}$. Let $r_{i}$ be the total size of the items packed in bin
$i$ after adding the small items. Assume that, after we apply First-Fit, some  small
items remain unpacked (i.e., none of the bins can accommodate these items).
Let $i_{\ell},...,i_{n}$ denote this subset of items.
Then, we have that $(1+\eps_{0})-r_{i}<\a_{j}$
for all $j=\ell,\ldots ,n$ and $i=1,\ldots,m$.
It follows, that
\begin{equation}
\label{eq:no_2}
\quad(1+\eps_{0})m-\sum_{i=1}^{m}r_{i}<m\a_{j}\qquad\forall j=\ell,\ldots,n
\end{equation}
By the definition of $r_{i}$, and since we packed all items up to $i_{\ell-1}$,
\begin{equation}
\label{eq:no_3}
\quad\sum_{i=1}^{m}r_{i}=\sum_{j=1}^{\ell-1}\a_{j}
\end{equation}
From (\ref{eq:no_1}), we get that
$$\sum\limits _{j=1}^{\ell-1}\a_{j}\leq m-\sum\limits _{j=\ell}^{n}\a_{j}.
$$
Then, from (\ref{eq:no_3}), we have
\begin{equation}
\label{eq:no_4}
\quad m-\sum\limits _{j=\ell}^{n}\a_{j}\geq\sum_{i=1}^{m}r_{i}
\end{equation}
From (\ref{eq:no_2}) and (\ref{eq:no_4}), it follows that, for all $\ell \leq j \leq n$,
$$m(1+\eps_{0})-(m-\sum\limits _{j=\ell}^{n}\a_{j})\leq m(1+\eps_{0})-\sum\limits _{i=1}^{m}r_{i}<m\a_{j},
$$
or,
\begin{equation}
\label{eq:no_5}
\quad m\eps_{0}+\sum\limits _{j=\ell}^{n}\a_{j}< m\a_{j}.
\end{equation}
From (\ref{eq:no_5}), and since $\a_{j}\leq\eps_{0}$ for $j=\ell,...,n$, we have that
$m\eps_{0}+\sum\limits _{j=\ell}^{n}\a_{j}< m\eps_{0}$; thus,
$\sum\limits _{j=\ell}^{n}\a_{j}<0$. A contradiction, since $\a_{j}\geq 0$
for all $j$.

Hence, the above relaxed implementation of First-Fit packs all of the remaining small items and yields a relaxed-packing of the
original instance.
\end{proof}

\begin{lemma}
Let $OPT$ be an optimal solution, and let ${\cal C}=\{C^{1},...,C^{m}\}$
be the configuration of large items derived from $OPT$. Then, the
cost of this optimal solution is at least the cost of the solution for $\cal C$ in Step 5 of ${\cal A}_{ID}$.
\label{ALGvsOPT lemma}
\end{lemma}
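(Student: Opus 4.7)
The plan is to exhibit a specific perfect matching $M^*$ in the bipartite graph built from $\cal C$ in Step 5(i) whose total cost is at most the transition cost of $OPT$. Since Step 5(ii) selects a minimum-cost perfect matching and Step 5(iii) merely places already-accounted items by First-Fit, this will imply that the solution produced for $\cal C$ costs no more than $OPT$.

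Define $M^*$ to pair each initial machine $i$ with the configuration $C^i$ of (rounded) large items that $OPT$ assigns to machine $i$; when $m\neq m_0$, the padding nodes are paired arbitrarily at zero cost. The cost of edge $(i,C^i)$ splits into (a) the number $|C^i\setminus C_0^i|$ of large items that $OPT$ moves to, or newly introduces at, machine $i$, and (b) the number $|O_i|$ of items from $S_i^0$ (the small items initially on $i$) that must be discarded when we greedily remove the largest ones until $C^i$ together with the remaining items of $S_i^0$ fits into unit capacity. Summing (a) over $i$ yields exactly the large-job contribution to the transition cost of $OPT$.

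The crux of the proof is the inequality $|O_i|\leq |M_i'|$, where $M_i'\subseteq S_i^0$ is the set of original small items on $i$ that $OPT$ relocates. Writing $\ell_i=\sum_{j\in C^i}\bar\alpha_j$ and $\ell_i^{\mathrm{orig}}=\sum_{j\in C^i}\alpha_j$, the rounding in Step 4 gives $\ell_i\leq\ell_i^{\mathrm{orig}}$, so the algorithm's residual capacity $1-\ell_i$ for small items on machine $i$ is at least $OPT$'s residual capacity $1-\ell_i^{\mathrm{orig}}$, which in turn accommodates $OPT$'s small-item load $\sum_{j\in S_i^{OPT}}\alpha_j$. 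In particular, the subfamily $S_i^0\cap S_i^{OPT}$, of cardinality $|S_i^0|-|M_i'|$, fits within the algorithm's budget. Because the rule ``omit the largest first'' is optimal for maximizing the number of items retained subject to a size constraint, the algorithm keeps at least $|S_i^0|-|M_i'|$ of the original small items on $i$, giving $|O_i|\leq|M_i'|$.

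Summing over $i$ bounds the small-item contribution of $M^*$ by $\sum_i|M_i'|$, the number of original small jobs that $OPT$ moves. Newly introduced small jobs (those in ${\cal J}\setminus{\cal J}_0$) contribute $1$ each both to the cost of $OPT$ and to the cost of the schedule produced by Step 5, since they must be placed somewhere in either solution. Combining (a), (b), and the new-job terms shows that $\textrm{cost}(M^*)\leq\textrm{cost}(OPT)$, from which the lemma follows by the minimality of the matching selected in Step 5(ii). The main obstacle is the small-item accounting: one must carefully marry the rounding inequality $\bar\alpha_j\leq\alpha_j$ with the greedy-optimality of largest-first removal in order to produce the feasibility witness $S_i^0\cap S_i^{OPT}$ that calibrates $|O_i|$ against $|M_i'|$.
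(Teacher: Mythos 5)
Your proof is correct and rests on the same key observation as the paper's: on each bin, the set of original small items that $OPT$ retains is a feasibility witness for the (only larger, after rounding) residual capacity available to the algorithm, and the largest-first omission rule keeps at least as many items as any feasible subset does, so the algorithm omits no more small items than $OPT$ relocates. The paper packages this as a per-bin proof by contradiction (its inequalities (\ref{eq:ALGvsOPT_1})--(\ref{eq:ALGvsOPT_4}) are exactly an instance of your greedy-optimality claim), whereas you argue directly and are more explicit about exhibiting the matching $M^*$ and accounting for newly arrived jobs; the substance is the same.
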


\begin{proof}
$\cal C$ is an optimal configuration, therefore $\cal C$ is also a feasible
configuration (since $\sum\limits _{j\in C^{i}}\frac{p_{j}}{T}\leq\frac{C_{max}^{*}}{T}\leq1$).
Let $ALG_{\cal C}$ be the solution for $\cal C$ the algorithm generates
in Step 5. Assume that $Cost\left(OPT\right)<Cost\left(ALG_{\cal C}\right)$. Note that the cost for the large items is the same in both solutions,
and therefore the difference between the cost is caused by packing small
items. We also note that, in the algorithm, the small items we pay
for are those that are omitted from their original bin and
transfered to a different one. Since the transition costs are 1, it
means that the number of omitted small items in $OPT$ is smaller
than in $ALG_{\cal C}$, or in other words, the number of small items
that are packed in their original bin in $OPT$ is greater than their
number in $ALG_{\cal C}$. Therefore, there must exist a bin $1\leq i\leq m$
for which this holds. Consider the small items packed in bin
$i$ in each solution. Denote by $q_{1},...,q_{s}$ the sizes of small
original items of bin $i$ that are packed in bin $i$ in both solutions,
by $p_{1}^{A},...,p_{k}^{A}$ and $p_{1}^{O},...,p_{l}^{O}$ two distinct
sets of small original items of bin $i$ that are packed in $ALG_{\cal C}$
and in $OPT$, respectively, but not in both. Since the algorithm chooses
to omit the largest small items first, the fact it chose to omit
$p_{1}^{O},...,p_{l}^{O}$ but to keep $p_{1}^{A},...,p_{k}^{A}$
means that $p_{1}^{O},...,p_{l}^{O}$ are not smaller than $p_{1}^{A},...,p_{k}^{A}$. In particular, let $p_{1}^{O}=min\left\{ p_{1}^{O},...,p_{l}^{O}\right\} $,
then we have
\begin{equation}
\label{eq:ALGvsOPT_1}
\quad p_{1}^{O}=min\left\{ p_{1}^{O},...,p_{l}^{O}\right\} \geq max\left\{ p_{1}^{A},...,p_{k}^{A}\right\}.
\end{equation}

Since $C_{max}^{*}\leq T$, we get that
\begin{equation}
\label{eq:ALGvsOPT_2}
\quad\sum_{j\in C^{i}}p_{j}+\sum_{j=1}^{s}q_{j}+\sum_{j=1}^{\ell}p_{j}^{O}\leq\frac{C_{max}^{*}}{T}\underset{}{\leq1}
\end{equation}

The algorithm chose, in particular, to omit $p_{1}^{O}$, hence 
\begin{equation}
\label{eq:ALGvsOPT_3}
\quad\sum_{j\in C_{i}}p_{j}+\sum_{j=1}^{s}q_{j}+\sum_{j=1}^{k}p_{j}^{A}\leq1+\eps_{0}
\end{equation}

but also
\begin{equation}
\label{eq:ALGvsOPT_4}
\quad\sum_{j\in C^{i}}pj+\sum_{j=1}^{s}q_{j}+\sum_{j=1}^{k}p_{j}^{A}+p_{1}^{O}>1+\eps_{0}.
\end{equation}

By the above discussion, we also have that

$\begin{array}{lcl}
\sum\limits _{j\in C^{i}}p_{j}+\sum\limits _{j=1}^{s}q_{j}+\sum_{j=1}^{k}p_{j}^{A}+p_{1}^{O}&\underset{(\ref{eq:ALGvsOPT_1})}{\leq} & \sum\limits _{j\in C^{i}}p_{j}+\sum\limits_{j=1}^{s}q_{j}+kp_{1}^{O}+p_{1}^{O}\\
& \underset{k<\ell}{\leq}&\sum\limits _{j\in C^{i}}p_{j}+\sum\limits _{j=1}^{s}q_{j}+\ell p_{1}^{O}\\
& \underset{(\ref{eq:ALGvsOPT_1})}{\leq}&\sum\limits _{j\in C^{i}}p_{j}+\sum\limits _{j=1}^{s}q_{j}+\sum_{j=1}^{\ell}p_{j}^{O}\\
& \underset{(\ref{eq:ALGvsOPT_2})}{\leq} & 1
\end{array}$

From the last inequality, we have a contradiction to ${(\ref{eq:ALGvsOPT_4})}$. Hence, we have the statement of the lemma.
\end{proof}

\begin{lemma} \cite{EK72}
Let $G=\left(V,U,E\right)$ be a bipartite graph
with $\left|V\right|=\left|U\right|=n$ and, and
let $c:E\rightarrow\mathbb{R}$ be a cost function on the edges.
A minimum cost perfect matching, i.e. a perfect matching
$M\subseteq E$ for which $\sum_{e\in M}c(e)$ is minimized, can be found in $O(n^{3})$ time.
\label{min cost perfect matching lemma}
\end{lemma}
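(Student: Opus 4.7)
The plan is to prove this by reducing the minimum cost perfect matching problem to a minimum cost flow problem and solving it via the successive shortest augmenting paths algorithm with node potentials (often known as the Hungarian method). First, I would build a flow network by adding a source $s$ joined by a unit-capacity arc to every vertex in $V$, and a sink $t$ reached by a unit-capacity arc from every vertex in $U$; the edges of $E$ become unit-capacity arcs directed from $V$ to $U$ with cost $c(e)$, while the auxiliary $s$--$V$ and $U$--$t$ arcs have cost zero. A perfect matching of cost $\sum_{e \in M} c(e)$ then corresponds exactly to an integer $s$--$t$ flow of value $n$ with the same total cost, so it suffices to compute a minimum cost $s$--$t$ flow of value $n$.

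Next I would maintain a pair (flow $f$, potentials $\pi$) satisfying the reduced-cost optimality invariant: on every arc $(u,v)$ in the residual graph $G_f$, the reduced cost $\bar{c}(u,v) = c(u,v) + \pi(u) - \pi(v)$ is nonnegative. Initialize $f \equiv 0$ and $\pi \equiv 0$, which is valid because (after an optional shift by $\min_e c(e)$ to handle possibly negative costs) all original edge costs can be taken to be nonnegative. In each of $n$ iterations, run Dijkstra's algorithm from $s$ in $G_f$ using the reduced costs $\bar{c}$, obtaining distances $d(v)$; augment one unit of flow along a shortest $s$--$t$ path; then update $\pi(v) \leftarrow \pi(v) + d(v)$ for every vertex. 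A standard argument shows this update preserves the nonnegativity of reduced costs on the new residual graph and keeps the current flow min-cost among flows of its value.

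The main observation for the running time is that the residual network has $O(n)$ vertices and $O(n^2)$ edges, so a dense-graph Dijkstra implementation runs in $O(n^2)$ per iteration. Since we need exactly $n$ augmenting paths to send $n$ units from $s$ to $t$, the total time is $n \cdot O(n^2) = O(n^3)$. Termination with a min-cost perfect matching is guaranteed because at the end we have an integer flow of value $n$ (hence a perfect matching), and the maintained reduced-cost optimality implies minimum cost.

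The step I expect to require the most care is verifying that the potential update $\pi \leftarrow \pi + d$ continues to give nonnegative reduced costs on \emph{all} residual arcs, including those whose direction reverses after augmentation; this is the classical lemma behind the Hungarian method, and once established the complexity bound follows immediately from the per-iteration cost of Dijkstra on a dense graph.
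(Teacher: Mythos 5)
The paper offers no proof of this lemma at all: it is stated purely as a citation to Edmonds and Karp \cite{EK72} and used as a black box. Your successive-shortest-paths argument with node potentials is the standard, correct proof of that classical result, and your accounting ($n$ augmentations, each a Dijkstra run costing $O(n^2)$ on a dense residual graph with $O(n)$ vertices and $O(n^2)$ arcs) gives the claimed $O(n^3)$ bound. The one point worth making explicit is that the shift by $\min_e c(e)$ is harmless precisely because every perfect matching in $G$ uses exactly $n$ edges, so all candidate solutions' costs change by the same additive constant $n\cdot\min_e c(e)$ and the minimizer is unchanged; with that noted, and granting the classical potential-update lemma you flag, the sketch is complete.
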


Now, we are ready to prove the main theorem.

\paragraph{Proof of Theorem \ref{thm: Ams_PTRS}:}

Let ${\cal C}=C^{1},...,C^{m}$ be the
configuration of large items derived from an optimal solution, $OPT$. Let $ALG_{\cal C}$ be the
packing obtained for $\cal C$ in Step 5 of the algorithm. By Lemma
\ref{ALGvsOPT lemma}, we have

\begin{equation}
\label{eq:thm_Ams_PTRS_1}
\quad Cost\left(ALG_{C}\right)\leq Cost\left(OPT\right).
\end{equation}

Let $ALG$ be the solution the algorithm outputs. 

Since 
$Cost(ALG)=min\left\{ Cost(ALG_{C}):C \mbox{ is a legal configuration }\right\} $, we have that $Cost(ALG) \leq Cost(OPT)$.

Now, it is easy to see that if we transform $ALG$ to the schedule of the job on the machines (by returning to the original processing times), we get a solution of makespan
at most $(1+\eps_{0})T$. Since $T\leq(1+\eps_{0})C_{max}^{*}$,
the algorithm is a $\left(1+\eps_{0}\right)^{2}$-approximation algorithm.
We note that $(1+\eps_{0})^{2}=(1+\frac{\eps}{4})^{2}=(1+\frac{\eps}{2}+\frac{\eps^{2}}{16})\leq1+\eps$,
thus the algorithm yields a $(1+\eps)$-approximation to $C_{max}^{*}$.

We conclude that ${\cal A}_{ID}$ is a $(1,1+\eps)$-reapproximation algorithm
for the reoptimization problem.\\

Now, we prove that ${\cal A}_{ID}$ runs in polynomial time.
\begin{itemize}
\item In the first step, we run a PTAS for the new instance ${\cal I}$, therefore, this part
is polynomial in ${\cal I}$ (but exponential in $\frac{1}{\eps_{0}}$).
\item Steps 2,3 and 4 are clearly polynomial in $\left|{\cal I}\right|$.
\item For Step 5, we first show that the number of feasible configuration of large rounded items is polynomial in
${\cal I}$. Recall that the number of large rounded items in each bin can
not exceed $\left\lfloor \frac{1}{\eps_{0}}\right\rfloor $. Since the rounded sizes of large
items can be of at most $\left\lceil \frac{1}{\eps_{0}}\right\rceil $
different sizes, the number of feasible configurations is
at most $R=\binom{\left\lfloor \frac{1}{\eps_{0}}\right\rfloor +\left\lceil \frac{1}{\eps_{0}^{2}}\right\rceil }{\left\lceil \frac{1}{\eps_{0}^{2}}\right\rceil }$.
There are at most $m$ bins, implying that the number of feasible configurations is at most $\left(\begin{array}{c}
m+R\\
m
\end{array}\right)$ which is polynomial in $m$ but exponential in $\frac{1}{\eps_{0}}$. Now, for each configuration, we construct the
bipartite graph in time polynomial in $\left|{\cal I}\right|$ and by \cite{EK72} we find a perfect minimum cost matching in time $O(k^{3})$ where $k=max\left\{ m,m^{'}\right\} $, which is polynomial in $\left|{\cal I}\right|$.
\item Packing of the small items in each iteration using First-Fit is also done
in time polynomial in $\left|{\cal I}\right|$ and Step 6 is clearly polynomial in $\left|{\cal I}\right|$. 
\end{itemize}

\newpage

\section{A $(1,1+\epsilon)$-Reapproximation Algorithm for Makespan Minimization on Uniform Machines}

We present below a reapproximation algorithm, ${\cal A}_{UN}$, for the problem of minimizing the makespan on uniform machines.
The algorithm uses a {\em relaxed} packing of items in bins, where the items correspond to jobs, and the bins represent the set of machines.

\begin{definition}
Given a set of $m$ bins, with positive capacities $K_1,K_2,...,K_m$, and a set of items packed in the bins, we say that the
packing is
\emph{$\epsilon$- relaxed}, for some $\epsilon >0$, if for every $1\leq i \leq m$,
 the total size of items assigned to each bin is at most $(1+\eps)K_i$.
\end{definition}

\subsection{Algorithm}
Our algorithm for reoptimizing the makespan on uniform machines accepts as input
the instances ${\cal I}_0$ and ${\cal I}$, the initial assignment of jobs to the machines, $\sigma_0$, and an error parameter
$\epsilon >0$. 

In the previous section, it was convenient to convert the problem into a bin packing problem where all bins have equal size. We will consider the conversion to a bin packing problem also in the case of scheduling on unrelated machines, only here the bins will have variable sizes.
In the generalization of the equal-size bin case to the variable-size case, we come across a major obstacle. The size of the subintervals in which the pieces are partitioned depends on the size of the bins in which the pieces are to be packed. Moreover, the definition of large and small pieces depends on the size of the bin in which the pieces are to be packed. 
Hochbaum and Shmoys \cite{HS88} presented a PTAS for the problem of makespan minimization on uniform machines. They constructed, in polynomial time, a layered directed graph, with two nodes designated \textquotedblleft initial" and \textquotedblleft success", such that there exists a path from \textquotedblleft initial" to \textquotedblleft success" in the graph if and only if there is a schedule with makespan at most $1+\epsilon $ times the minimal makespan. From this path one can also define the configuration of \textquotedblleft medium" and \textquotedblleft large" jobs on each machine, and it is guaranteed that the \textquotedblleft small" jobs can be scheduled. We add suitable costs to the edges of the layered graph and show that a feasible solution with optimal cost can be obtained by finding the lightest path from \textquotedblleft initial" to \textquotedblleft success" in the layered graph. It is also guaranteed that if there is a path from \textquotedblleft initial" to \textquotedblleft success" in  the graph, then there is enough space in the bins to add the remaining pieces (e.g., using First-Fit), in the bins on which they are considered small. This way, as before, we can greedily pack the small pieces in their original bin (as long as we do not exceed its capacity), and pay only for packing the rest.   

\paragraph{Overview of the PTAS of \cite{HS88}} 

Assume that $s_1$ is the highest speed, and represent each machine $i$ as a bin of size $s_i$. Normalize the bin sizes by $s_1$. Consider the jobs as items whose
sizes are in $\left(0,1\right]$ (by normalizing them by some upper bound on the minimal makespan). Round down piece sizes in $\left(\eps^{k+1},\eps^{k}\right]$
to the nearest multiple of $\epsilon^{k+2}$, for some integer $k\geq 0$.

For a bin of size $s_{i}\in\left(\epsilon^{k+1},\epsilon^{k}\right]$ define:

\begin{itemize}
\item Pieces of sizes in $\left(\epsilon^{k+1},\epsilon^{k}\right]$ are {\em large} for the bin.
\item Pieces of sizes in $\left(\epsilon^{k+2},\epsilon^{k+1}\right]$ are {\em medium} for the bin.
\item Pieces of sizes less than or equal to $\epsilon^{k+2}$ are {\em small} for the bin. 
\end{itemize} 

For convenience, the interval $\left(\epsilon^{k+1},\epsilon^{k}\right]$ is referred as interval $k$.
For pieces in interval $k$, a bin is {\em large} if it is in interval $k$, {\em huge}
if it is in interval $k-1$, and {\em enormous} if it is in intervals $0,...,k-2$.

A directed layered graph, $G_L=(V_L,E_L)$, is then constructed, in which each node is labeled with a state vector describing the remaining pieces to be packed as large or medium pieces. The graph is grouped into stages, where a stage specifies the large and medium pack of bins
in one interval $\left(\epsilon^{k+1},\epsilon^{k}\right]$. Each layer within a stage corresponds to packing a bin in the corresponding interval. Both the bins within the stage and the stages are arranged
in order of decreasing bin size. 
The state vector associated with each node is of the form $(L; M; V_1, V_2, V)$, where $L$ and $M$ are vectors, each describing a distribution of pieces in the subintervals of $\left(\epsilon^{k+1},\epsilon^{k}\right]$ and $\left(\epsilon^{k+2},\epsilon^{k+1}\right]$ respectively.
There are two nodes designated \textquotedblleft initial" and \textquotedblleft success", such that \textquotedblleft initial" is connected to the initial state vectors of the first stage, and every final state vector of the final stage is connected to \textquotedblleft success".
A path from \textquotedblleft initial" to \textquotedblleft success" in $G_L$ specifies a packing of the rounded medium and large pieces for every bin.

We note that, after the large and medium pack of the bins in interval $k$, we must
allow for the packing of the remaining pieces in interval $k$ that will be packed as {\em small}.
These pieces must be packed in enormous bins for them, therefore, we need to have sufficient unused capacity in the enormous bins to at least
contain the total size of these unpacked pieces. This is represented by the value $V_1$ in
the state vector; it records the slack, or unused capacity in the partial packing of the
enormous bins with large and medium pieces. For stages corresponding to intervals
greater than $k$, we also need to have the unused capacity in the huge and large
bins, and this is the role of $V_2$ and $V$, respectively.
Since we must represent the possible values in some compact way, we consider the sizes of
the pieces that will be packed as small pieces into this as-yet-unused capacity. For $V_1$,
pieces in interval $k$ are small, and thus all pieces to be packed into this unused capacity
have rounded sizes that are multiples of $\epsilon ^{k+2}$; as a result, it will be sufficient to represent $V_l$ as an integer multiple of $\epsilon^{k+2}$. Similarly, $V_2$ and $V$ will be represented as integer multiples of $\epsilon^{k+3}$ and $\epsilon^{k+4}$, respectively.

The following lemmas, due to \cite{HS88}, will be useful in analyzing our algorithm.
\begin{lemma}
\label{graph_size lemma}
Given $\eps>0$, the layered graph $G_L$ has $O(2m\cdot n^{2/\eps^2+3}\cdot 1/\eps^6)$ nodes and\\ $O(2m(n/\eps^2)^{(2/\eps^2)+3})$ edges. The number of nodes in each layer, which is the number of state vectors corresponding to packing the bins of that layer, is $O(n^{2/\epsilon^2}(n/\epsilon^2)^3)$.
\end{lemma}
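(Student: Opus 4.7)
The plan is to bound the number of valid state vectors $(L; M; V_1, V_2, V)$ that can appear in a single layer of $G_L$, and then multiply by the total number of layers. I would fix a layer corresponding to packing a bin in interval $k$ and count the $L$, $M$ components first. Large pieces lie in $(\epsilon^{k+1}, \epsilon^k]$ and, after rounding down to multiples of $\epsilon^{k+2}$, take at most $\lceil 1/\epsilon^2 \rceil$ distinct sizes; thus $L$ is a vector of nonnegative integers of length $O(1/\epsilon^2)$ with entries bounded by $n$, giving $O(n^{1/\epsilon^2})$ possibilities. The identical argument applied to the medium interval $(\epsilon^{k+2}, \epsilon^{k+1}]$ (rounded to multiples of $\epsilon^{k+3}$) yields another $O(n^{1/\epsilon^2})$ possibilities for $M$, so together $L$ and $M$ account for $O(n^{2/\epsilon^2})$ choices.

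Next I would bound the three slack coordinates $V_1, V_2, V$, which record unused capacity reserved for pieces that will later be packed as small at three successive scales. Each is a nonnegative multiple of $\epsilon^{k+2}$, $\epsilon^{k+3}$, $\epsilon^{k+4}$ respectively, and each is capped by the total capacity of the relevant class of bins (enormous, huge, large) at the current interval. Using that the total size of pieces that can be placed into such slack is bounded, together with the matching rounding granularity at each scale, each of $V_1, V_2, V$ takes $O(n/\epsilon^2)$ values, yielding $O((n/\epsilon^2)^3) = O(n^3/\epsilon^6)$ choices in total. Multiplying the two factors gives the stated per-layer count $O(n^{2/\epsilon^2}(n/\epsilon^2)^3)$.

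To pass from per-layer to total node count, I would use that each layer within a stage corresponds to packing a single bin in the interval of that stage; summing the per-stage bin counts recovers $m$, and each stage contributes at most one auxiliary initialization/finalization layer, giving $O(2m)$ layers in all. Multiplying the per-layer bound by $2m$ yields the claimed node bound $O(2m \cdot n^{2/\epsilon^2+3}/\epsilon^6)$. For the edge count, every edge encodes the choice of rounded large and medium pieces placed in the bin associated with that edge; since each bin holds $O(1/\epsilon^2)$ pieces drawn from $O(1/\epsilon^2)$ rounded sizes, the out-degree at any node is $O((n/\epsilon^2)^{2/\epsilon^2+3})$, and multiplying by the layer count gives the stated edge bound.

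The hardest part will be justifying the $O(n/\epsilon^2)$ bound on each of $V_1, V_2, V$: a naive ``maximum value divided by minimum increment'' estimate grows like $1/\epsilon^k$, which is unacceptable, so one must exploit that the only slack values that can actually arise along a valid path from \textquotedblleft initial" are those obtainable by subtracting a feasible partial packing from the total bin capacity at the given scale, and show that this restricts each coordinate to an $O(n/\epsilon^2)$-sized range. Once this is in hand, the remaining steps are routine counting.
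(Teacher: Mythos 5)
You should first note that the paper itself contains no proof of this statement: the lemma is imported directly from Hochbaum and Shmoys \cite{HS88} (``The following lemmas, due to \cite{HS88}, will be useful\dots''), so there is no in-paper argument to compare yours against. Judged on its own, your counting correctly reconstructs the standard argument: the $O(n^{1/\eps^2})$ bounds on $L$ and on $M$ coming from the $O(1/\eps^2)$ rounded sizes in each of the two relevant intervals, the $O(n/\eps^2)$ range for each of the slack coordinates, and the $O(2m)$ layers (one per bin plus one transition per stage, and the number of stages is at most $m$) multiply out to the stated per-layer and total node bounds. You are also right that the only non-routine step is the $O(n/\eps^2)$ bound on $V_1,V_2,V$, and the idea you name for closing it --- the only slack values that need to be tracked are those at most the total size of the at most $n$ pieces that will ever be packed into that slack, divided by the matching granularity $\eps^{k+2}$ (resp.\ $\eps^{k+3}$, $\eps^{k+4}$) --- is exactly how \cite{HS88} argues it; as written, though, your proposal defers this step rather than carrying it out, so it remains a sketch at the one point you yourself identify as hardest.

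One bookkeeping slip in the edge count: you call $O((n/\eps^2)^{2/\eps^2+3})$ the \emph{out-degree} of a node. The out-degree is governed by the number of ways to large-and-medium-pack a single bin, i.e.\ a multiset of $O(1/\eps^2)$ pieces drawn from $O(1/\eps^2)$ rounded sizes, which is $(1/\eps^2)^{O(1/\eps^2)}$ and independent of $n$. The quantity $(n/\eps^2)^{2/\eps^2+3}$ is instead the number of edges between two \emph{consecutive layers}: the per-layer node count $n^{2/\eps^2}(n/\eps^2)^3$ times that out-degree equals $(n/\eps^2)^{2/\eps^2}(n/\eps^2)^3$, and multiplying by the $O(2m)$ layer transitions gives the claimed edge bound. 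Your final formula is right, but the intermediate quantity is mislabeled, and multiplying a per-node out-degree by the number of layers would not in general count edges.
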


\begin{lemma}
\label{correspondence_lemma}
For any $\eps>0$, there is a one to one correspondence between paths from \textquotedblleft initial" to \textquotedblleft success" in the layered graph $G_L$ and $\epsilon$-relaxed packings.
\end{lemma}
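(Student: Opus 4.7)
The plan is to establish the bijection by constructing explicit maps in both directions and verifying they are mutually inverse. In the forward direction (paths $\Rightarrow$ packings), I would start by walking along a path from ``initial'' to ``success'' and reading off, at each transition within a stage corresponding to interval $k$, how the state vector $(L; M; V_1, V_2, V)$ is updated when a bin is processed. The change in $L$ and $M$ between consecutive layers determines exactly which rounded large and medium pieces are placed in that bin, which defines the large/medium packing on every machine. The small pieces for interval $k$ are then inserted greedily into the slack capacity reserved by $V_1$ in enormous bins, $V_2$ in huge bins, and $V$ in large bins. Once this partial packing is obtained for the rounded instance, I would inflate every piece to its original size; since each piece in subinterval of $(\epsilon^{k+1}, \epsilon^k]$ was rounded down by at most $\epsilon^{k+2}$, and at most $1/\epsilon$ rounded large pieces can fit in any bin, the cumulative rounding gain per bin is at most $\epsilon \cdot s_i$, yielding an $\epsilon$-relaxed packing.

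For the reverse direction (packings $\Rightarrow$ paths), I would take an $\epsilon$-relaxed packing and process stages in order of decreasing interval size. For each stage $k$, the vector $L$ at the initial node of the stage records the multiset of rounded sizes of unpacked large pieces for interval $k$, and $M$ records the analogous data for medium pieces. Traversing the bins of interval $k$ in the prescribed order, each bin's contribution removes a sub-vector from $L$ and $M$; the corresponding edge in $G_L$ exists by construction. The slack components $V_1$, $V_2$, $V$ are computed from the $\epsilon$-relaxed packing by summing the unused capacity (discretized to multiples of $\epsilon^{k+2}$, $\epsilon^{k+3}$, $\epsilon^{k+4}$, respectively) in enormous, huge, and large bins relative to the pieces currently considered small. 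I would verify that this sequence of state vectors matches a valid path by checking that the small pieces of interval $k$ that the packing places in enormous/huge/large bins fit within the reserved slacks, which is precisely the edge feasibility condition in the Hochbaum--Shmoys construction.

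The main obstacle is the careful accounting of the slack variables across stages: $V_1$, $V_2$, $V$ need to remain consistent as we move from interval $k$ to interval $k+1$, where the roles of bins shift (what was ``enormous'' in stage $k$ may become ``huge'' or ``large'' in a later stage). I would argue that the discretization of the slacks as integer multiples of $\epsilon^{k+2}$, $\epsilon^{k+3}$, $\epsilon^{k+4}$ is lossless exactly because pieces that will later be treated as small in these bins are themselves rounded to multiples of those granularities, so rounding the slack down to the same granularity never excludes a feasible completion in the forward map and never overcounts capacity in the reverse map. Finally, I would check that the maps are inverse to each other by noting that both the state vectors along a path and the $\epsilon$-relaxed packing are uniquely determined by the per-bin multiset of rounded pieces together with the residual slacks, so composing the two maps returns the identity. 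This, together with Lemma~\ref{graph_size lemma} giving a polynomial bound on $|G_L|$, yields the claimed correspondence.
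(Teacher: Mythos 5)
Note first that the paper does not prove this lemma at all: it is stated as one of several lemmas ``due to \cite{HS88}'' and is cited without proof, so there is no in-paper argument to compare yours against. Judged on its own terms, your reconstruction follows the correct general lines of the Hochbaum--Shmoys layered-graph construction (read off the large/medium pack from the state-vector differences along the path; recover the path from a packing by tabulating unpacked large/medium pieces and discretized slacks stage by stage). However, two of your steps would not survive scrutiny as written.

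First, the claimed bijection is not actually single-valued in the path-to-packing direction: a path records only the per-bin large/medium multisets and the \emph{aggregate} slacks $V_1,V_2,V$, not which small piece lands in which bin, so many distinct $\epsilon$-relaxed packings project to the same path. Your closing assertion that the packing is ``uniquely determined by the per-bin multiset of rounded pieces together with the residual slacks'' is therefore false as stated; the correspondence that holds, and the only one the paper uses (see the proof of Lemma~\ref{ALGvsOPT_um lemma}), is an existence statement in each direction. Second, your claim that the slack discretization is ``lossless'' is too strong. A bin in interval $k$ receives small pieces originating from intervals $k+2, k+3, \dots$, whose rounded sizes are multiples of $\epsilon^{k+4}, \epsilon^{k+5}, \dots$ and not of the slack granularities $\epsilon^{k+2}, \epsilon^{k+3}, \epsilon^{k+4}$; the mismatch between the granularity of the recorded slack and that of the pieces packed into it is precisely what forces the relaxation (compare the accounting in Lemma~\ref{small_pack_lemma}, where a bin overflows by $\epsilon^{k+4}+\epsilon^{k+2}$). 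A correct argument must bound this per-bin discretization error, not claim it vanishes. Finally, the inflation to original piece sizes that you fold into the forward map belongs to Lemma~\ref{inflating_lemma}, not here: the correspondence in this lemma is with relaxed packings of the \emph{rounded} pieces, as the proof of Theorem~\ref{thm:Aum_PTRS} makes explicit by invoking the two lemmas separately.
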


\begin{lemma}
\label{inflating_lemma}
Given an $\eps$-relaxed packing of rounded piece sizes, for some $\epsilon>0$, restoring the piece sizes to their original size yields a $(2\eps+\eps^2)$-relaxed packing of the original pieces.  
\end{lemma}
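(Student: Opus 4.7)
The plan is to use the multiplicative relation between an original piece size and its rounded-down counterpart, and then compose the two multiplicative factors (one from the $\epsilon$-relaxed packing assumption, one from the rounding error) to obtain the advertised bound $1+2\epsilon+\epsilon^{2}=(1+\epsilon)^{2}$.

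First I would fix a piece whose original size lies in some interval $p\in(\epsilon^{k+1},\epsilon^{k}]$ and examine its rounded-down size $\bar p$, which by construction is a multiple of $\epsilon^{k+2}$. From the definition of the rounding, $p-\bar p<\epsilon^{k+2}$, and since $p>\epsilon^{k+1}$ the rounded value satisfies $\bar p\ge \epsilon^{k+1}$ (assuming $1/\epsilon$ is an integer, as can be taken WLOG by adjusting $\epsilon$). These two inequalities give
$$
p\;\le\;\bar p+\epsilon^{k+2}\;=\;\bar p+\epsilon\cdot\epsilon^{k+1}\;\le\;\bar p+\epsilon\bar p\;=\;(1+\epsilon)\,\bar p.
$$
Crucially, this uniform bound $p\le(1+\epsilon)\bar p$ holds for every piece, regardless of whether it is classified as large, medium, or small with respect to the bin it is packed into, because the rounding scheme is applied to every piece according to the interval it belongs to.

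Next I would lift this per-piece bound to a per-bin bound. Consider a bin $i$ of capacity $s_i$, and let $P_i$ denote the set of pieces assigned to it. Since the given packing of the rounded sizes is $\epsilon$-relaxed, we have $\sum_{p\in P_i}\bar p\le(1+\epsilon)s_i$. Summing the per-piece inequality over $P_i$,
$$
\sum_{p\in P_i} p\;\le\;(1+\epsilon)\sum_{p\in P_i}\bar p\;\le\;(1+\epsilon)^{2}s_i\;=\;(1+2\epsilon+\epsilon^{2})\,s_i.
$$
This is exactly the condition that the packing of the original (un-rounded) piece sizes is $(2\epsilon+\epsilon^{2})$-relaxed, which completes the argument.

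The only delicate point, and what I would call the main (minor) obstacle, is the edge case of a piece whose original size is barely above $\epsilon^{k+1}$: one must verify that after rounding down to a multiple of $\epsilon^{k+2}$ the rounded value does not drop into the interval $(\epsilon^{k+2},\epsilon^{k+1}]$, since otherwise the slack-to-size ratio would be bigger than $\epsilon$ and the $(1+\epsilon)$ factor would fail. This is handled by the convention that $1/\epsilon$ is an integer, so that $\epsilon^{k+1}$ is itself a multiple of $\epsilon^{k+2}$ and is therefore a legitimate rounded value; everything else is a direct calculation.
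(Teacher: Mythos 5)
Your proof is correct. The paper states this lemma without proof, attributing it to \cite{HS88}, so there is no in-paper argument to compare against; your route --- the uniform per-piece bound $p\le(1+\epsilon)\bar p$ composed with the $(1+\epsilon)$ slack of the relaxed packing to get $(1+\epsilon)^2=1+2\epsilon+\epsilon^2$ --- is the standard and intended one, and your handling of the edge case (taking $1/\epsilon$ integral so that $\epsilon^{k+1}$ is itself a multiple of $\epsilon^{k+2}$, guaranteeing $\bar p\ge\epsilon^{k+1}$) is precisely the detail needed to make the per-piece multiplicative bound valid.
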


We give below a detailed description of our algorithm, ${\cal A}_{UN}$. Let $C_{max}^*({\cal I})$ denote the minimum makespan for an instance $\cal I$.

\begin{algorithm}[H]
\caption{${\cal A}_{UN}({\cal I}_0,{\cal I},\sigma_0)$}
\begin{enumerate}
\item[1.] Let $\eps>0$. Use a PTAS for makespan minimization on uniform machines to find $T < (1+\eps)
C_{max}^*({\cal I})$.
\item[2.] Let $s_{1}$ be the largest speed, and assume that $\frac{s_1}{s_m}$ is bounded from above by some given constant $b\geq 1$. Normalize all the processing
times by $s_{1}\cdot T$, and represent each machine as a bin of capacity
$\frac{s_{i}}{s_{1}}\leq1$. Consider the jobs now as pieces with
sizes in $\left(0,1\right]$. Denote the new bin sizes by $ \frac{1}{b}\leq s_{m}\leq s_{m-1}\leq...\leq s_{1}=1$,
and denote the piece sizes by $p_{1},p_{2},...,p_{n}$.

\item[3.] \label{step:Aum_3} 
Round down the piece sizes $p_{j}\in\left(\epsilon^{k+1},\epsilon^{k}\right]$
to the nearest multiple of $\epsilon^{k+2}$.

\item[4.] \label{step:Aum_4} 
Construct the directed layered graph $G_L$, and define edge costs as follows. Consider the edge $e_\ell^k$ connecting a node of the $(\ell-1)$'th layer to a node of the $\ell$'th layer in some stage that corresponds to interval $k$. $e_\ell^k$ describes the large and medium pack of the $\ell$'th bin of stage $k$. Let ${C_0}_\ell^k$ be the set of pieces packed in that bin in the initial solution. Fix the cost on that edge to be the number of large and medium pieces that appear in ${C_0}_\ell^k$, but are not packed in the bin by $e_\ell^k$. All other edges in $G_L$ are assigned the cost zero.

\item[5.] \label{step:Aum_5} For every choice of exactly one update arc, at the end of each stage, do: 
\begin{enumerate}
\item [(i.)] Find the lightest path from \textquotedblleft initial" to \textquotedblleft success" in the graph (if one exists). Define the corresponding partial solution consisting of pieces that are packed as large or medium. 
\item [(ii.)] Add the remaining pieces greedily to the enormous bins for them: for each bin $i$, let ${S_0}_i$ be the set of remaining pieces that belong to the bin in the initial solution and that are {\em small} for this bin. Start packing the pieces in ${S_0}_i$ in bin $i$, in increasing order of piece sizes, and stop after the total size of packed pieces exceeds for the first time the bin capacity. Pack all the remaining pieces, as {\em small}, using First-Fit.
\end{enumerate}	
\item[6.] Return the solution of minimum cost.
\end{enumerate}
\end{algorithm}

\subsection{Analysis}

\begin{theorem}
\label{thm:Aum_PTRS}
For any $\eps>0$, Algorithm ${\cal A}_{UN}$
yields in polynomial time a $(1,1+\eps)$-reapproximation for $R\left(\Pi_{UN}\right)$.
\end{theorem}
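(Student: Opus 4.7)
The plan is to mirror the strategy used in Theorem \ref{thm: Ams_PTRS}, but with the configuration space now encoded by paths in the layered graph $G_L$ of \cite{HS88} instead of enumeration over equal-capacity bin configurations. I would split the proof into three parts: (i) the schedule returned by Step 5 has makespan at most $(1+\eps)C_{max}^*({\cal I})$, (ii) its transition cost is at most that of an optimal reapproximate solution, and (iii) the total running time is polynomial.

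For the makespan bound, I would first observe that by Step~1 we have $T\leq(1+\eps)C_{max}^*({\cal I})$. The lightest "initial"-to-"success" path found in Step 5(i) corresponds, by Lemma \ref{correspondence_lemma}, to an $\eps$-relaxed packing of the rounded large and medium pieces; moreover, the slack coordinates $V_1,V_2,V$ in the state vectors reserve just enough unused capacity in each enormous, huge and large bin to absorb all pieces that will be treated as small there. Hence the greedy insertion of ${S_0}_i$ followed by First-Fit in Step 5(ii) completes the packing while preserving the $\eps$-relaxed property (this is the variable-capacity analog of Lemma \ref{epsilon relaxed lemma}, and the First-Fit argument there carries over verbatim once the sum of sizes is bounded by the sum of capacities). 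Inflating the rounded sizes to the originals yields, by Lemma \ref{inflating_lemma}, a $(2\eps+\eps^2)$-relaxed packing, so the final makespan is at most $(1+2\eps+\eps^2)(1+\eps)T$. Running the algorithm with a sufficiently small $\eps'=\Theta(\eps)$ in place of $\eps$ yields the claimed $(1+\eps)$-approximation on the makespan.

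For the transition-cost bound, let $OPT$ be a reapproximate-optimal solution and let ${\cal C}^*$ be the large/medium part of its rounded configuration. Since $C_{max}^*({\cal I})\leq T$, Lemma \ref{correspondence_lemma} provides an "initial"-to-"success" path $P^*$ in $G_L$ corresponding to ${\cal C}^*$. By the edge-cost definition in Step 4, the weight of $P^*$ counts exactly the large and medium pieces whose assignment under $\sigma_0$ differs from ${\cal C}^*$; the pieces in ${\cal J}\setminus{\cal J}_0$ contribute the same constant offset $|{\cal J}\setminus{\cal J}_0|$ to every path and therefore do not affect the choice of the minimum-weight path. Consequently, the weight of the lightest path $P^A$ produced by the algorithm is at most the large/medium transition cost of $OPT$. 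For the remaining small pieces, I would establish the uniform-machines analog of Lemma \ref{ALGvsOPT lemma}: once the large/medium configuration is fixed to the one specified by $P^A$, the rule "keep originals in increasing order of size until overflow, then First-Fit the rest" minimizes the small-piece transition cost among all feasible completions. The proof is the same swap argument used for Lemma \ref{ALGvsOPT lemma}, applied per bin to pieces that are small for that bin, using inequalities \eqref{eq:ALGvsOPT_1}--\eqref{eq:ALGvsOPT_4} with the unit capacity replaced by $s_i/s_1$. Summing the large/medium and small parts gives total cost at most $Cost(OPT)$, i.e.\ ratio $r=1$.

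The polynomial running-time bound follows from Lemma \ref{graph_size lemma}: $G_L$ has polynomially many nodes and edges in $|{\cal I}|$, with the $1/\eps$ dependence absorbed into the function factor, so the minimum-weight path in this DAG can be computed in polynomial time via topological dynamic programming. Step 1 is the polynomial-time PTAS of \cite{HS88}; Steps 2, 3, 4 and 6 are clearly polynomial; and the greedy completion in Step 5(ii) is polynomial First-Fit. The main obstacle will be the small-piece exchange argument, because, unlike the identical-machines case, the "small/medium/large" classification of a piece depends on the bin: a piece that is small for one machine may be medium or even large for a slower one. I would handle this by restricting the swap to pieces that are small for the bin in which the exchange happens, which is the only class that Step 5(ii) ever moves. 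The hypothesis $s_1/s_m\leq b$ bounds the number of distinct bin intervals by a constant, so only constantly many size classes interact, keeping both the graph $G_L$ polynomial and the per-bin exchange argument self-contained.
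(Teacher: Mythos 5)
Your overall route matches the paper's: a lightest path in the layered graph $G_L$ of \cite{HS88} accounts for the large/medium transition cost, a greedy completion handles the small pieces and its optimality is shown by a per-bin exchange argument (the paper's Lemma \ref{greedy_pack_um lemma} is exactly the ``keep as many original small pieces as possible'' argument you sketch), the slack coordinates $V_1,V_2,V$ guarantee the small pack succeeds within an $(\eps+\eps^3)$-relaxation (Lemma \ref{small_pack_lemma}), and Lemma \ref{inflating_lemma} plus a rescaling of $\eps$ gives the makespan bound. Your observation that jobs in ${\cal J}\setminus{\cal J}_0$ contribute a fixed offset to every solution is correct and slightly more careful than the paper.

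The concrete gap is that you ignore the enumeration in Step 5 over the choices of one update arc per stage, and that enumeration is load-bearing in two places. First, for the cost bound: your decomposition --- ``the weight of the global lightest path $P^A$ is at most OPT's large/medium cost'' plus ``greedy minimizes the small-piece cost \emph{given $P^A$'s configuration}'' --- does not compose into $Cost(ALG)\leq Cost(OPT)$, because the minimum small-piece cost achievable on top of $P^A$'s configuration can exceed the small-piece cost OPT pays on top of \emph{its own} configuration (a path that is cheaper on large/medium pieces may force many more small pieces out of their original bins). The paper's Lemma \ref{ALGvsOPT_um lemma} avoids comparing against the global lightest path: it uses the fact that the algorithm tries, among others, the update-arc choice realized by OPT's path, and compares the candidate built for that particular choice against OPT. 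Second, for the running time: the enumeration contributes a multiplicative factor $X=O((n^{2/\epsilon^2}(n/\epsilon^2)^3)^S)$ (Lemma \ref{number_of_update_arcs_choices}), which is polynomial only because the number of stages $S$ is a constant depending on $\eps$ and $b$ (Lemma \ref{S_is_constant_lemma}); this, rather than the size of $G_L$, is where the hypothesis $s_1/s_m\leq b$ is actually consumed. Relatedly, your parenthetical that the First-Fit argument of Lemma \ref{epsilon relaxed lemma} ``carries over verbatim'' is not right in the variable-capacity setting, since a remaining piece may only be placed in bins that are enormous for it and the averaging argument over all bins breaks; the correct justification is the slack-accounting one you also invoke, which is what Lemma \ref{small_pack_lemma} formalizes.
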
  

We prove the theorem using the next lemmas.
\begin{lemma}
\label{greedy_pack_um lemma}
Given the partial packing of medium and large pieces (after Step 5(i) in ${\cal A}_{UN}$), packing the remaining pieces greedily as small, in Step 5(ii), incurs the minimal cost for packing these pieces.
\end{lemma}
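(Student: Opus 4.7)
The plan is to convert cost-minimization into a per-bin cardinality-maximization problem and then solve each such instance by the standard smallest-first exchange argument.

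First, I would set up the cost accounting. Under the $\{0,1\}$ transition cost model, a small piece $j$ contributes $0$ to the cost in the final schedule exactly when $j \in \cJz$ and $\sigma(j) = \sigma_0(j)$, and $1$ otherwise. Since every small piece must be packed somewhere, the cost of the small-piece completion equals the total number of small pieces to be packed, minus the number of ``originals'' returned to their initial bin. Minimizing the cost is therefore equivalent to maximizing, over all feasible completions of the fixed large/medium partial packing, the number of small pieces $j$ with $\sigma(j)=\sigma_0(j)$.

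Second, I would decompose this maximization across bins. A piece $j\in{S_0}_i$ is placed in its original bin only when it lands in bin $i$, and the feasibility of bin $i$ in the $\eps$-relaxed packing depends only on the residual capacity of bin $i$ after Step 5(i) together with the small pieces put into it. Hence the global maximization splits into independent per-bin problems: for each bin $i$, choose a subset $S\subseteq{S_0}_i$ of maximum cardinality whose total size fits within the residual capacity of bin $i$.

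Third, I would apply the standard smallest-first exchange argument to each bin $i$. Sorting ${S_0}_i$ by non-decreasing size, the algorithm packs the shortest prefix that first crosses the strict capacity $s_i$, and this prefix achieves the maximum cardinality possible under the $\eps$-relaxed capacity: any alternative subset $S'\subseteq{S_0}_i$ can be transformed into an equally sized prefix of ${S_0}_i$ by swapping each element of $S'$ for the smallest yet-unused piece, which only decreases the total size, so the greedy prefix is at least as long as $|S'|$. Summing over all bins yields the minimum cost. The First-Fit step for the leftover pieces does not affect this count, since those pieces, by construction, cannot be accommodated in their original bins and hence contribute cost $1$ under every feasible completion.

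The main delicacy I anticipate is reconciling the ``stop after overflow'' rule with the $\eps$-relaxation: the greedy's last packed piece has size at most $\eps^{k+2}$ for a bin in interval $k$, which keeps the load within $(1+\eps)s_i$, but one must verify that no alternative $\eps$-relaxed completion can squeeze a strictly larger number of pieces from ${S_0}_i$ into bin $i$ by exploiting slack elsewhere. This ultimately reduces to the same exchange argument, now carried out with respect to the relaxed capacity guaranteed by Lemma~\ref{correspondence_lemma}, so the per-bin decomposition remains valid and the lemma follows.
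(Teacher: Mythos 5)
Your proposal is correct and follows essentially the same route as the paper: reduce cost minimization to maximizing, bin by bin, the number of original small pieces returned to their initial bin, and observe that packing them in non-decreasing order of size achieves this maximum. The paper's own proof merely asserts the optimality of the smallest-first rule, whereas you supply the standard exchange argument and flag the interaction with the relaxed capacity, so your write-up is a more detailed version of the same argument.
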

\begin{proof}
For the partial packing of medium and large pieces, let $R$ be the set of the remaining pieces. We show that packing the pieces in $R$ in enormous bins for them, by the greedy algorithm given in Step 5(ii), results in the minimal cost for packing $R$. The greedy algorithm first packs every bin with its original pieces that are small. It sorts them in non-decreasing order by piece size and packs them to the original bin by this order, until the bin capacity is exceeded for the first time (or all pieces are packed). This way we guarantee that we pack to every bin the maximal number of small pieces that were originally packed in it. Thus, the number of pieces, packed as small, {\em not} in their original bin, is minimized. This also implies a minimum packing cost for $R$.
\end{proof}

\begin{lemma}
Let $OPT$ be an optimal solution. Then, the
cost of $OPT$ is at least the cost of the solution obtained by ${\cal A}_{UN}$.
\label{ALGvsOPT_um lemma}
\end{lemma}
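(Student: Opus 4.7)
The plan is to exhibit, among the iterations enumerated in Step 5 of ${\cal A}_{UN}$, one whose output has total transition cost at most $\mathrm{cost}(OPT)$; since Step 6 returns the cheapest over all iterations, this will suffice. First I would apply Lemma \ref{correspondence_lemma} to the rounded version of $OPT$ (with piece sizes replaced as in Step 3) to obtain a path $P^*$ from ``initial'' to ``success'' in $G_L$ whose induced large/medium packing coincides, bin by bin, with that of $OPT$. By the cost definition introduced in Step 4, the weight of $P^*$ equals the number of large and medium pieces that $OPT$ places on a bin different from the one they occupied under $\sigma_0$ --- i.e., exactly the transition cost that $OPT$ pays on large and medium pieces.

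Next I would focus on the iteration of Step 5 corresponding to the update-arc choices made by $P^*$ at the end of each stage. In this iteration $P^*$ is a feasible candidate, so the lightest path $P$ selected in 5(i) satisfies $\mathrm{weight}(P)\le \mathrm{weight}(P^*)$. Crucially, within a single update-arc class the slack vectors $V_1,V_2,V$ of the Hochbaum--Shmoys construction commit to exactly how much room is left in each bin for small pieces, so $OPT$'s small-piece assignment is a valid completion of the partial packing induced by $P$. By Lemma \ref{greedy_pack_um lemma}, the greedy completion performed in 5(ii) then pays no more for the small pieces than $OPT$. Summing the two contributions bounds the cost of this iteration by $\mathrm{weight}(P^*)$ plus the small-piece transition cost of $OPT$, which equals $\mathrm{cost}(OPT)$.

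I expect the principal obstacle to lie in the claim that $OPT$'s small-piece assignment really is a legal completion of the partial packing induced by the lightest path $P$ (rather than only of $P^*$) within the chosen update-arc class. This will require a careful reading of the slack-vector updates of \cite{HS88} to check that any two feasible large/medium packings belonging to the same update-arc class leave the same residual capacities per bin, so that Lemma \ref{greedy_pack_um lemma} applies with $OPT$'s small-piece layout as a witness. Once this structural fact is in hand, the rest of the argument is just an assembly of Lemmas \ref{correspondence_lemma} and \ref{greedy_pack_um lemma}, mirroring the role played by Lemma \ref{ALGvsOPT lemma} in the identical-machines analysis.
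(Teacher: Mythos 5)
Your proposal follows essentially the same route as the paper's proof: invoke Lemma~\ref{correspondence_lemma} to obtain a path in $G_L$ realizing $OPT$'s large/medium configuration, restrict attention to the Step~5 iteration with that path's update-arc choices, bound the lightest path by the weight of $OPT$'s path, and appeal to Lemma~\ref{greedy_pack_um lemma} for the small pieces. The obstacle you single out --- that $OPT$'s small-piece layout must remain a feasible completion of the \emph{lightest} path's partial packing rather than only of $P^*$'s --- is exactly the step the paper dispatches with ``which is clearly at most the cost of $OPT$,'' so your concern is well placed but does not constitute a departure from the paper's argument.
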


\begin{proof}
Let ${\cal C}=\{C^{1},...,C^{m}\}$
be the configuration of large and medium items derived from $OPT$.
Since $\cal C$ is optimal, in particular, it is a truly feasible configuration of large and medium pieces that also leaves enough slackness for packing the remaining pieces as {\em small}, without exceeding the capacity of the bins. Thus, by Lemma \ref{correspondence_lemma}, there is a path from \textquotedblleft initial" to \textquotedblleft success" in the layered graph, such that $\cal C$ is the large and medium pack derived from it.

This path also contains exactly one update arc after each stage. Therefore, our algorithm, in particular, considers this choice of update arcs for which it finds a lightest path in the graph (which clearly exists for this choice). by Lemma \ref{greedy_pack_um lemma}, the cost of the solution output by the algorithm is at most the cost of the lightest path plus the minimal cost of packing all the remaining pieces as {\em small}, which is clearly at most the cost of $OPT$.
\end{proof}

\begin{lemma}
For any partial solution of large and medium rounded pieces, derived from a path from  \textquotedblleft initial" to \textquotedblleft success" in $G_L$, all the remaining rounded pieces can be packed in enormous bins for them, using the greedy algorithm in Step 5(ii), such that the load on each bin of size $s_i$ is at most $s_i(1+\eps+\eps^3)$.
\label{small_pack_lemma}
\end{lemma}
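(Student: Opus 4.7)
The plan is to combine Lemma~\ref{correspondence_lemma}, which exhibits an $\eps$-relaxed reference packing $\mathcal{P}^*$ associated with the chosen path in $G_L$, with a direct analysis of the two-phase greedy scheme in Step~5(ii). Lemma~\ref{correspondence_lemma} certifies the existence of a packing of \emph{all} rounded pieces in which each bin of size $s_i$ has total load at most $s_i(1+\eps)$ and whose large-and-medium portion agrees with our partial solution. In particular, the aggregate residual capacity over the bins is enough to absorb the remaining pieces as small pieces placed in enormous bins for them, and the slack variables $V_1,V_2,V$ along the ``initial''-to-``success'' path quantify precisely how much room is available, at each of the three relevant granularities, in the enormous, huge, and large bins.

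Next I would analyze Phase~1. Fix a bin $i$ of size $s_i\in(\eps^{k+1},\eps^k]$, and let $L_i$ denote the load of its large and medium pieces. The algorithm adds pieces of ${S_0}_i$ in increasing size order, stopping as soon as the bin's total load first exceeds $s_i$. Since each such piece has size at most $\eps^{k+2}$, the overflow from the last added piece is bounded by $\eps^{k+2}\leq \eps\cdot s_i$, so after Phase~1 bin $i$ carries load at most $s_i(1+\eps)$. This already accounts for the $\eps$ term in the claimed bound and shows that no bin is ``overused'' by Phase~1.

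The heart of the proof is Phase~2, where First-Fit packs the remaining pieces subject to the per-bin threshold $s_i(1+\eps+\eps^3)$. I would argue that First-Fit never fails and never exceeds this threshold by a volumetric contradiction: if a piece $p$, small for bin~$i$, fails to be placed, then every enormous bin for $p$ must already carry load greater than $s_j(1+\eps+\eps^3)-p$. Aggregating these lower bounds over the appropriate bins and comparing with the total mass of remaining pieces (which $\mathcal{P}^*$ and the slack values $V_1,V_2,V$ certify is absorbable within the $\eps$-relaxed capacities recorded along the chosen path) produces a contradiction, so First-Fit terminates successfully. By construction, the resulting load on each bin is at most $s_i(1+\eps+\eps^3)$.

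The main obstacle I expect is obtaining the tight overhead $\eps^3$ rather than the crude $\eps$ that falls out of a naive one-piece-overflow argument. Doing so requires exploiting the fact that pieces placed as small by First-Fit into an enormous bin have rounded sizes that are multiples of $\eps^{k+2}$ when the piece belongs to interval~$k$, while the enormous bin itself has size greater than $\eps^{k-1}$. Thus the granularity mismatch is of order $\eps^3$ relative to the bin size, and the three slack levels $V_1,V_2,V$ were designed precisely to track slack at these three scales. Making this quantitative --- showing that Phase~2 never consumes more slack at any of the three levels than the path in $G_L$ promises --- is the delicate step that underlies the refined $\eps^3$ bound.
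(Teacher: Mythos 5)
Your overall route is the same as the paper's: both defer the feasibility of the small-pack phase to the slack bookkeeping of \cite{HS88} (an update arc exists only when the recorded slack suffices for the pieces still to be packed as small), and both decompose the overshoot on a bin of size $s_i\in(\eps^{k+1},\eps^{k}]$ into one term coming from a single overflowing small piece and one term coming from the granularity at which slack is recorded. The problem is that you explicitly leave the decisive quantitative step unproven --- you call the $\eps^{3}$ bound ``the delicate step'' and ``the main obstacle'' --- and that step is essentially what the lemma asserts. The paper closes it with a short per-bin argument: focus on the small piece $j$ that exhausts the usable slack of bin $i$; because the usable slack of a bin in interval $k$ is carried in the state vector as an integer multiple of $\eps^{k+4}$, the load just before $j$ is added is at most $s_i+\eps^{k+4}$, and $j$ itself has size at most $\eps^{k+2}$, so the final load is at most $s_i+\eps^{k+4}+\eps^{k+2}\le s_i(1+\eps^{3}+\eps)$, using $s_i>\eps^{k+1}$. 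Your proposal correctly locates the $\eps^{3}$ in the mismatch between the rounding granularity and the bin size, but never turns that observation into the bound.

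A second, related issue is that your Phase~1 analysis destroys the invariant this argument needs. You let the first greedy pass overshoot $s_i$ by a full piece of size up to $\eps^{k+2}\approx\eps s_i$ \emph{before} First-Fit starts; if First-Fit then places even one more comparable piece in that bin, the load reaches about $s_i+2\eps^{k+2}$, which is not bounded by $s_i(1+\eps+\eps^{3})$. Imposing the threshold $s_i(1+\eps+\eps^{3})$ by fiat in Phase~2 rescues the load bound but shifts the entire burden onto feasibility, and your volumetric contradiction is not actually carried out: comparing against the $\eps$-relaxed reference packing of Lemma~\ref{correspondence_lemma} yields no contradiction, because that packing may itself load each bin up to $(1+\eps)s_j$, which exceeds the lower bound of roughly $(1+\eps^{3})s_j$ that your failure assumption produces on each enormous bin. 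The comparison has to be made against the slack values $V_1,V_2,V$ recorded along the chosen path, as in \cite{HS88}; supplying that comparison, together with the $s_i+\eps^{k+4}$ invariant above, is the content of the proof rather than a refinement of it.
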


\begin{proof}
As shown in \cite{HS88}, the small-pack phase, which is done after every stage, is always
successfully completed, since there is an update arc if and only if there is sufficient
total slack to accommodate all pieces to be packed as small. Using a similar argument, ${\cal A}_{UN}$ is able to pack all the remaining small pieces after packing all the large and medium pieces.
Consider the rounded pieces packed into a bin of size $s_i$,
which is in interval $k$. Focus on the small piece $j$ that, when added to the bin, exhausts
the usable slack. Then, piece $j$ is of size less than or equal to $\epsilon^{k+2}$, and before piece $j$ was added, the rounded piece sizes did not exceed $s_i+\epsilon^{k+4}$ (recall that the usable slack is rounded up to a multiple of $\eps^{k+4}$). Hence, the bin contains pieces of total (rounded) size at most $s_i+\epsilon^{k+4}+\epsilon^{k+2}$. Therefore, if $B_i$ is the set of pieces packed in bin $i$, then the sum of the rounded piece sizes in $B_i$ is at most $s_i+\epsilon^{k+4}+\epsilon ^{k+2}$ which is at most $s_i+\epsilon s_i+\eps^3s_i=s_i(1+\eps+\eps^3)$, since $s_i\geq \eps^{k+1}$.
\end{proof}

\begin{lemma}
The number of different choices for update arcs, one after each stage, is at most\\ $O((n^{2/\epsilon^2}(n/\epsilon^2 )^3)^S)$, where $S$ is the number of stages in the graph. 
\label{number_of_update_arcs_choices}
\end{lemma}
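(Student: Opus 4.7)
The plan is to bound the number of distinct update arcs available at the end of each single stage, and then multiply these bounds together over the $S$ stages. Each update arc sits between two consecutive stages and transitions from a node of the last layer of one stage to a node of the first layer of the next stage; its role is to record which portion of the remaining unused capacity in enormous bins will be earmarked to absorb pieces small for the completed interval. Because such an arc is determined, given its source node, by the resulting state vector on its target side, the number of update arcs available to take from any fixed source node is at most the number of state vectors appearing as first-layer nodes of the next stage.

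By Lemma \ref{graph_size lemma}, every layer of $G_L$ contains at most $O(n^{2/\epsilon^2}(n/\epsilon^2)^3)$ nodes, so in particular the first layer of any stage does. Hence from any node at the end of one stage, there are at most $O(n^{2/\epsilon^2}(n/\epsilon^2)^3)$ update arcs to choose from when transitioning to the next stage.

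Since the algorithm's enumeration in Step 5 fixes exactly one update arc at the end of each of the $S$ stages, and the choice made at stage $s$ is independent (as a counting question) from the choices at other stages, the total number of distinct tuples of update arcs is at most the product of the per-stage bounds. This yields the bound
\[
\prod_{s=1}^{S} O\!\left( n^{2/\epsilon^2}(n/\epsilon^2)^3 \right) \;=\; O\!\left( \bigl(n^{2/\epsilon^2}(n/\epsilon^2)^3\bigr)^S \right),
\]
as claimed. The only subtlety I expect is making the ``at most (nodes per layer)'' step rigorous, i.e.\ confirming that distinct update arcs out of a single node land in distinct nodes of the next stage's first layer; this follows from the construction of $G_L$ in \cite{HS88}, since the target state vector encodes the slack allocation uniquely identifying the arc.
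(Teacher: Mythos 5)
Your proof is correct and follows essentially the same route as the paper's: both bound the number of update arcs available between consecutive stages by the number of state vectors per layer from Lemma \ref{graph_size lemma}, namely $O(n^{2/\epsilon^2}(n/\epsilon^2)^3)$, and then take the product over the $S$ stages. The extra discussion of why distinct arcs lead to distinct target state vectors is a harmless elaboration that the paper leaves implicit.
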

\begin{proof}
Between every two stages, there are $O(n^{2/\epsilon^2}(n/\epsilon^2)^3)$ update arcs, same as the number of state vectors in each layer (Lemma \ref{graph_size lemma}).
Therefore, the number of possibilities for choosing one arc in each stage is $O((n^{2/\epsilon^2}(n/\epsilon^2 )^3)^S)$.
\end{proof}

\begin{lemma}
\label{S_is_constant_lemma}
The number of stages $S$ in $G_L$ depends only on $\eps$ and on $b$.
\end{lemma}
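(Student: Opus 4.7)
The plan is to count the stages directly by tracking which intervals $(\epsilon^{k+1},\epsilon^{k}]$ actually contain bins, since by construction there is one stage of $G_L$ per such nonempty interval. Since the only structural parameters available after the normalization are $\eps$ (controlling the geometric partition) and $b$ (controlling the spread of bin sizes), it suffices to bound the index $k$ of the smallest-scale interval that can host a bin.

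First I would invoke Step 2 of ${\cal A}_{UN}$: after normalization by $s_{1}$, the bin capacities satisfy $s_{1}=1$ and $s_{m}\geq 1/b$, so every bin size lies in the interval $[1/b,\,1]$. A bin of size $s_{i}$ falls into the $k$-th geometric interval $(\epsilon^{k+1},\epsilon^{k}]$ exactly when $k=\lfloor \log_{1/\eps}(1/s_{i}) \rfloor$. Since $1/s_{i}\leq b$ for every $i$, the index $k$ of any nonempty interval satisfies
\[
0 \;\leq\; k \;\leq\; \left\lfloor \frac{\log b}{\log(1/\eps)}\right\rfloor .
\]

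Next I would observe that only those intervals that contain at least one bin contribute a stage to $G_{L}$: stages with no bin to pack contain no layers and can be collapsed. Consequently
\[
S \;\leq\; \left\lceil \frac{\log b}{\log(1/\eps)}\right\rceil + 1,
\]
which is a function of $\eps$ and $b$ alone, independent of $m$, $n$, and the processing times.

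There is no substantive obstacle here; the only point that requires a little care is verifying that the Hochbaum--Shmoys construction does not introduce additional stages for scales below the smallest bin. This follows because the role of stages beyond interval $k$ in the original PTAS is to handle pieces that are ``small'' relative to bins in those later stages, but if no bin lies in interval $k'>k_{\max}$ then no medium/large packing is performed there and pieces of those scales are accommodated via the slack components $V_{1},V_{2},V$ in the state vectors of the existing stages. Hence the count above is the true number of stages of $G_{L}$, completing the proof.
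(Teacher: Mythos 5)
Your proof is correct and follows essentially the same route as the paper: identify the stages of $G_L$ with the intervals $(\eps^{k+1},\eps^{k}]$ that contain at least one bin, and use the normalization $s_m \geq 1/b$ to bound the largest such index $k$ by a quantity depending only on $\eps$ and $b$. Your explicit bound $S \leq \lceil \log b/\log(1/\eps)\rceil + 1$ is just a closed form of the paper's $\min\{k\in\mathbb{N}:\eps^{k+1}<\tfrac{1}{b}\}$, and your added remark about collapsing empty scales is a harmless elaboration.
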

\begin{proof}
The number of stages in $G_L$ is the number of intervals $k$ that contain at least one bin. Since the smallest bin size satisfies $s_m \geq \frac{1}{b}$, $S$ is at most $min\{{k\in {\mathbb N}:\eps^{k+1}<\frac{1}{b}}\}$. 
\end{proof}

Now, we are ready to prove the main theorem.

\paragraph{Proof of Theorem \ref{thm:Aum_PTRS}:} 
By Lemma \ref{ALGvsOPT_um lemma}, taking the best solution among all lightest paths (one for each choice of update arcs), we have a solution whose transition cost is at most the transition cost of an optimal solution. By Lemma \ref{small_pack_lemma}, this solution is also an $(\eps+\eps^3)$-relaxed packing of the rounded pieces, and by Lemma \ref{inflating_lemma}, after inflating the pieces to their original sizes, we get a $(2(\eps+\eps^3)+(\eps+\eps^3)^2)$-relaxed packing. Hence, by fixing a suitable initial $\eps$, e.g., $\eps=\eps_0/8$, we conclude that ${\cal A}_{UN}$ is a $(1,1+\eps_0)$-reapproximation algorithm for our reoptimization problem.\\

We note that ${\cal A}_{UN}$ runs in polynomial time. Constructing the graph $G_L=(V_L,E_L)$ can be done in time linear in the graph size, which is $V_L=O(2m\cdot n^{2/\eps^2+3}\cdot 1/\eps^6)$ and $E_L=O(2m(n/\eps^2)^{(2/\eps^2)+3})$, by Lemma \ref{graph_size lemma}. By Lemma \ref{number_of_update_arcs_choices}, the number of different choices of update arcs, is $X=O((n^{2/\epsilon^2}(n/\epsilon^2 )^3)^S)$, where $S$ is the number of stages in the graph. By Lemma \ref{S_is_constant_lemma}, $S$ depends only on $\eps$ and $b$.
Therefore, the complexity of running e.g., Dijkstra's algorithm \cite{D59} for finding a lightest path from \textquotedblleft initial" to \textquotedblleft success" and then greedily packing the small pieces, for every choice of update arcs, is $O(X\cdot (Dijkstra(G_L)+Greedy(n,m)))$.
Overall, we get that the complexity of the algorithm is $O(X\cdot (Dijkstra(G_L)+Greedy(n,m)))$, which is polynomial in the input size.

\chapter{Conclusions and Future Work}

\section{Summary}

In this work we studied the fundamental problem of makespan minimization on
parallel machines. For the unrelated machines model, we derived an improved
bound, which depends on the minimum average machine load and the feasibility
parameter of the instance. Our bound is strictly smaller than $2$, the best
known general upper bound, for a natural subclass of instances.

We further studied the power of parameterization and presented an FPT algorithm 
and a parameterized approximation scheme for instances that are known to be 
hard to approximate within factor $\frac{3}{2}$, based on classical complexity theory.

Finally, we initiated the study of the reoptimization variants of makespan
minimization on identical and uniform machines, and derived approximation
ratios that match the best known ratios in these models.

\section{Future Work}

Our study leaves open several avenues for future research.

\paragraph*{Makespan minimization on unrelated machines}
While reducing the gap between the general lower bound of $\frac{3}{2}$ and the upper
bound of $2$ remains a prominent open problem, it would be interesting to
tighten these bounds for other non-trivial subclasses of instances, either 
by a constant, or by a function of the input parameters.   

Another interesting direction is to study instances with a wider range of
feasibility parameters. In this work, we explored instances having large
feasibility parameter. On the other hand, Ebenlendr et al. \cite{EKS08} considered 
instances in which the feasibility parameter is at most $\frac{2}{m}$. Any results on
other, intermediate values of this parameter, would shed more light on its
role in obtaining better schedules.

Other natural questions are: \textquotedblleft Can we obtain better bound for fully-feasible instances, when $T_{opt}=L_{opt}$?" \textquotedblleft Can we derive a better lower bound for general instances for which $T_{opt}=L_{opt}$?"

\paragraph*{FPT algorithms for scheduling}
The problem of minimizing the makespan on unrelated machines is NP-hard even if the number of machines or number of jobs is taken as a parameter \cite{LST90}, therefore no FPT algorithm exists with only these choices of parameters. In addition, we cannot hope for obtaining an FPT algorithm for graph balancing with the fixed parameter being only the maximum degree of the graph. Indeed, by the hardness proof of \cite{EKS08}, the problem is hard to approximate within a factor less than $\frac{3}{2}$ even on bounded degree graphs, i.e., when the maximum degree is some constant.
The question whether we can find an FPT algorithm for graph balancing with the treewidth being the only parameter remains open.

\paragraph*{Reoptimization in scheduling}
We list some of the questions arising from our results. \textquotedblleft Can we obtain reapproximation algorithm with the same performance
guarantees for general instances in the uniform machines model?"
\textquotedblleft Can we extend our results to arbitrary transition costs?".
Finally, \textquotedblleft Can we improve the running times of our reapproximation schemes, by 
adjusting the known EPTASs for identical and uniform machines to the
reoptimization model?"


\clearpage

\newpage

\end{document}